\newtheorem{theorem}{Theorem}
\newtheorem{assump}{Assumption}
\newtheorem{seclemma}{Lemma}[section]
\newtheorem{secalgo}{Algorithm}[section]
\theoremstyle{definition}
\newtheorem{example}{Example}
\newtheorem{remark}{Remark}
\def\Snospace~{\S{}}
\def\thm@space@setup{
  \thm@preskip=15pt \thm@postskip=15pt 
}
\def\indep{\perp\!\!\!\perp}
\newcommand{\argmin}{\operatornamewithlimits{argmin}}
\newcommand{\cov}{\text{Cov}}
\newcommand{\var}{\text{Var}}
\newcommand{\E}{{\bf E}}
\newcommand{\R}{\mathbb{R}}
\newcommand{\F}{\mathcal{F}}
\newcommand{\N}{\mathcal{N}}
\newcommand{\C}{\mathcal{C}}
\newcommand{\X}{\mathcal{X}}
\newcommand{\plimarrow}{\stackrel{p}\longrightarrow}
\newcommand{\dlimarrow}{\stackrel{d}\longrightarrow}
\newcommand{\ind}{\bm{1}}
\providecommand{\abs}[1]{\lvert#1\rvert}
\let\emptyset\varnothing
\providecommand{\abs}[1]{\lvert#1\rvert}
\renewcommand{\qed}{\hfill \mbox{\raggedright \rule{0.08in}{0.08in}}} 
\title{\bf\sc Cluster-Randomized Trials with Cross-Cluster Interference \thanks{This paper was previously circulated under the title, ``Design of Cluster-Randomized Trials with Cross-Cluster Interference.'' I thank Dennis Egger for generously providing supplemental distance data used in the empirical application. I also thank Forrest Crawford and seminar participants at Texas A\&M and the UCSC brown bag for helpful comments.}}
\author{Michael P.\ Leung\thanks{Department of Economics, University of California, Santa Cruz. E-mail: leungm@ucsc.edu.}}
\begin{document}
\maketitle
\onehalfspacing
 
\begin{abstract}

  {\sc Abstract.} The literature on cluster-randomized trials typically allows for interference within but not across clusters. This may be implausible when units are irregularly distributed across space without well-separated communities, as clusters in such cases may not align with significant geographic, social, or economic divisions. This paper develops methods for reducing bias due to cross-cluster interference. We first propose an estimation strategy that excludes units not surrounded by clusters assigned to the same treatment arm. We show that this substantially reduces bias relative to conventional difference-in-means estimators without significant cost to variance. Second, we formally establish a bias-variance trade-off in the choice of clusters: constructing fewer, larger clusters reduces bias due to interference but increases variance. We provide a rule for choosing the number of clusters to balance the asymptotic orders of the bias and variance of our estimator. Finally, we consider unsupervised learning for cluster construction and provide theoretical guarantees for $k$-medoids. 

  \bigskip

  \noindent {\sc Keywords}: experimental design, causal inference, cluster-randomized trials, interference

\end{abstract}

\addcontentsline{toc}{part}{Main Paper}
\newpage

\section{Introduction}\label{sintro}

The literature on cluster-randomized trials (CRTs) predominantly assumes partial interference, which allows for interference within but not across clusters. However, researchers often conduct CRTs in what \cite{hayes2017cluster} refer to as ``arbitrary geographical zones,'' where clusters are not given by nature since ``the population is widely scattered and not divided up into clearly defined and well separated communities.'' A well-known concern in these settings is {\em cross-cluster interference}, where units within a cluster may respond to treatments assigned to units in adjacent clusters. Some references refer to this phenomenon as {\em contamination} \citep{hudgens2008toward,staples2015incorporating}, while others define contamination to mean that control units procure treatment from treated neighbors, which is suggestive of interference \citep{hayes2017cluster}. In either case, a simple comparison of treatment and control clusters may be biased.

\begin{example}[Infectious disease trials]
  CRTs are widely used in the infection control and hospital epidemiology literature \citep{o2019design}. An example is the SolarMal trial \citep{homan2016effect}, designed to evaluate the impact of mosquito trapping on malaria transmission. The CRT was conducted on an island of Kenya, partitioned into contiguous clusters using an unsupervised learning algorithm. As noted by \cite{jarvis2017spatial} in a review of infectious disease trials, partial interference ``can be violated due to movement of people or diseases across borders, such as mosquitoes flying between control and intervention households.''
\end{example}

\begin{example}[Large-scale social experiments]\label{eegger}
  CRTs conducted on large populations have become increasingly common in academia \citep{muralidharan2017experimentation} and industry \citep{karrer2021network}. For example, \cite{egger2022general} study the general equilibrium effects of unconditional cash transfers using a CRT conducted in rural Kenya. The trial clustered villages into administrative units called ``sublocations.'' The authors observe that ``villages are relatively close to each other [and] sublocation boundaries are not `hard' in any sense nor reflective of salient ethnic or social divides\ldots there is extensive economic interaction in nearby markets regardless of sublocation.''
\end{example}

This paper proposes methods for reducing bias due to cross-cluster interference in the analysis and design of cluster-randomized trials. Instead of partial interference, we consider a spatial interference model proposed by \cite{leung2022rate} which posits that interference decays with geographic distance. This captures the central concern expressed in the previous examples, namely potential interference between geographically proximate units.

At the analysis stage, a common bias-reduction strategy is the ``fried-egg design'' \citep[][Ch.\ 4.4.3]{hayes2017cluster}. This is not an experimental design but rather entails excluding from estimation units in the trial located near cluster boundaries, the ``whites'' of the ``fried eggs'' that are the clusters. An open question is whether excluding observations in such a fashion is worth the loss of efficiency. We show that conventional difference-in-means estimators can have large biases due to interference near cluster boundaries. We then propose to improve the efficiency of the fried-egg design by excluding not all units near cluster boundaries but rather only the subset not surrounded by clusters assigned to the same treatment arm. We prove that this can substantially reduce the asymptotic order of the bias relative to difference in means with no increase in the asymptotic order of the variance. 

We then turn to optimal design of clusters. Partitioning the region into a small number of large clusters reduces bias since fewer units are located near cluster boundaries where they are most prone to cross-cluster interference. However, this comes at the cost of higher variance because the sample size in a CRT is the number of clusters. Unlike the conventional partial interference framework, our spatial interference model induces a bias-variance trade-off, which enables the study of cluster design for optimizing the trade-off. 

We prove that the number of clusters $k$ that balances the asymptotic bias and variance of our estimators depends on a parameter $\gamma$ that measures the speed at which interference decays with distance. This formally characterizes how domain knowledge of interference informs optimal cluster construction. Such knowledge is implicitly used in practice when researchers define fried-egg boundaries or construct clusters with ``buffer zones'' to minimize interaction between units \citep{hayes2017cluster}.

In practice, cluster construction is {\em ad hoc}. \cite{homan2016effect} utilize a ``traveling salesman algorithm,'' essentially an unsupervised learning method. More commonly, researchers use administrative divisions \citep[e.g.][]{egger2022general} or manually partition the study region \citep[e.g.][]{moulton2001design}. \cite{binka1998impact} study a malaria trial conducted in Northern Ghana, noting that, ``Where possible, small paths or road were used to delineate the clusters. However, in most cases, the cluster boundaries did not correspond to natural barriers.'' We contribute to the literature by providing formal justification for constructing clusters using $k$-medoids, a well-known unsupervised learning algorithm. 

We study CRTs under a design-based framework, as in \cite{hudgens2008toward}, \cite{imai2009essential}, and \cite{schochet2022design}, among others. Unlike these papers, we do not assume partial interference and therefore require a different formulation of standard estimands. 

Our work is most closely related to \cite{leung2022rate}. He studies designs targeting the global average treatment effect (GATE) in which clusters are squares in $\R^2$ with identical areas. We consider a more general set of estimands and richer designs that may utilize unsupervised learning to construct clusters. Also, to accommodate denser spatial settings, we consider ``infill-increasing'' asymptotics in which the number of units is of larger asymptotic order than the volume of the study region. Finally, we propose new standard errors that are asymptotically conservative without restrictions on the superpopulation. 

Our analysis of $k$-medoids builds on \cite{cao2024inference}. We extend a key result of theirs to the case where $k$ diverges and characterize other $k$-medoid properties in this regime which may be of independent interest. We generalize their Ahlfours-regularity condition on the metric space to allow for infill-increasing asymptotics, though unlike them, we also require an additional boundary condition.

\cite{faridani2023rate} substantially generalize the theoretical results in \cite{leung2022rate} while retaining his focus on the GATE. Their results hold for general spaces defined using topological conditions that differ from Ahlfours regularity. For this reason, their proofs differ substantially from ours. They study designs that essentially cover the study region with non-intersecting balls of approximate radius $g_n$, where the radius is chosen to grow at an optimal rate. Our design specifies an optimal choice for the number of clusters and uses unsupervised learning to construct clusters.

The paper is organized as follows. The next section defines the model and estimands. In \autoref{sests}, we discuss the disadvantages of existing estimators and propose an alternative. We study the theoretical properties of the estimators in \autoref{sasymp} and propose an optimal design in \autoref{sopt}. We present simulation results in \autoref{smc} and an empirical application in \autoref{sapp} using data from the \cite{egger2022general} trial. Finally, \autoref{sconclude} concludes and summarizes the practical outputs of our analysis.

We will use the following asymptotic order notation. For two sequences of random variables $\{X_n\}_{n\in\mathbb{N}}$ and $\{Y_n\}_{n\in\mathbb{N}}$, we write $X_n \precsim Y_n$ if $\abs{X_n/Y_n} = O_p(1)$, $X_n \prec Y_n$ if $\abs{X_n/Y_n} = o_p(1)$, $X_n \succsim Y_n$ if $\abs{Y_n/X_n} = O_p(1)$, $X_n \succ Y_n$ if $\abs{Y_n/X_n} = o_p(1)$, and $X_n \sim Y_n$ if both $X_n \precsim Y_n$ and $X_n \succsim Y_n$. For two sequences of constants, we use the same notation for analogous notions of asymptotic boundedness and domination.

\section{Setup}\label{smodel}

Let $(\X,\rho)$ be a metric space where $\X$ is the set of spatial locations and $\rho$ the metric. We observe a set of $n$ units $\N_n \subseteq \X$, so $\rho(i,j)$ is the spatial distance between units $i,j\in\N_n$. Let $D_i$ denote unit $i$'s binary treatment assignment and $\bm{D} = (D_i)_{i\in\N_n}$ the vector of observed assignments, which is the only random quantity in our analysis. Let $\{Y_i(\cdot)\}_{i\in\N_n}$ be a set of functions with domain $\{0,1\}^n$ and range $\R$. For $\bm{d} = (d_i)_{i\in\mathcal{N}_n} \in \{0,1\}^n$, $Y_i(\bm{d})$ denotes the potential outcome of unit $i$ under the counterfactual that all units are assigned treatments according to $\bm{d}$. Unit $i$'s observed outcome is $Y_i = Y_i(\bm{D})$. We maintain the following standard assumption, that potential outcomes are uniformly asymptotically bounded. All asymptotic statements are with respect to a sequence indexed by $n\in\mathbb{N}$ unless otherwise indicated.

\begin{assump}[Bounded Outcomes]\label{aboundY}
  $\max_{i\in\N_n} \max_{\bm{d}\in\{0,1\}^n} \abs{Y_i(\bm{d})} \precsim 1$.
\end{assump}

\subsection{Spatial Interference}\label{sinter}

The primary metric space of interest is $\R^2$, but our results apply to a more general ``Ahlfors-regular'' space, augmented with a boundary condition. Define the {\em $r$-neighborhood} of unit $i$
\begin{equation*}
  \N(i,r) = \{j\in\N_n\colon \rho(i,j) \leq r\}. 
\end{equation*}

\begin{assump}[Metric Space]\label{aspace}
  There exist constants $C,d>0$ and a positive sequence $\{\xi_n\}_{n\in\mathbb{N}}$ such that $1 \precsim \xi_n \prec n$ and the following hold. (a) For any $n\in\mathbb{N}$, $i\in\N_n$, and $r>0$, $\min\{C^{-1} \xi_n r^d, n\} \leq \abs{\N(i,r)} \leq \max\{C \xi_n r^d, 1\}$. (b) $d\geq 1$, and $\max_{i\in\N_n} \abs{\N(i,r+1) \backslash \N(i,r)} \leq C\max\{\xi_n r^{d-1},1\}$ for any $r>0$. 
\end{assump}

\noindent The constant $d$ represents the spatial dimension and $\xi_n$ a measure of density (units per unit volume). The dependence on $\xi_n$ is new to this paper relative to prior work on spatial interference and accommodates applications with denser regions. To understand the assumption, consider the standard {\em increasing-domain} case in which $\xi_n = 1$ for all $n$. Part (a) says that the number of units in an $r$-neighborhood is the same order $r^d$ as the neighborhood's volume. This defines a $(C,d)$-finite Ahlfors-regular space, the same metric space studied by \cite{cao2024inference}. The boundary condition in part (b) is new relative to their setup, but both (b) and the upper bound in (a) are satisfied if $\X=\R^d$ under the usual increasing-domain assumption that units are minimally separated in space \citep[][Lemma A.1]{jenish2009central}.

\autoref{aspace} also accommodates the {\em infill-increasing} case in which $\xi_n$ diverges with $n$. Here the number of units in any neighborhood is of larger order than the neighborhood volume by a factor of $\xi_n$, corresponding to a denser region. Because we require $\xi_n \prec n$, this is a hybrid of infill and increasing-domain asymptotics in which the volume of the study region $\N_n$ grows but at a slower rate $n/\xi_n$ than the population size $n$ \citep[as in][for example]{lahiri2006resampling}. 

Under \autoref{aspace}(a), the number of units in any $r$-ball is proportional to $\xi_n r^d$, which allows for some variation in density across the study region. The assumption is violated if two subregions of similar volume exist but the number of units in one is a large multiple of the other. This could be the case in practice if the study region encompasses urban and rural areas. In \autoref{shetdens}, we discuss how our proposed methodology can be modified to accommodate large variations in density.

\begin{assump}[Interference]\label{aani}
  There exist $c>0$ and $\gamma > d$ for $d$ defined in \autoref{aspace} such that for all $r\geq 0$,
  \begin{equation*}
    \sup_{n\in\mathbb{N}} \max_{i\in\N_n} \max\big\{\abs{Y_i(\bm{d}) - Y_i(\bm{d}')}\colon \bm{d},\bm{d}'\in\{0,1\}^n, d_j=d_j' \,\,\forall j\in\N(i,r)\big\} \leq c\,\min\{r^{-\gamma},1\}. 
  \end{equation*}
\end{assump}

\noindent This corresponds to Assumption 3 of \cite{leung2022rate}. Unlike partial interference, it is formulated independently of clusters, enabling us to develop a theory of optimal cluster construction. 
To interpret the condition, consider a unit $i$ centered at a ``cluster'' $\N(i,r)$ of radius $r$. The quantity $\abs{Y_i(\bm{d}) - Y_i(\bm{d}')}$ measures interference induced by manipulating the treatment assignments of units outside the cluster, and the assumption requires this to decay like $r^{-\gamma}$ or faster. Hence, the larger the minimum distance $r$ between $i$ and the units with manipulated treatments, the smaller the spillover effect. We thus interpret $1/\gamma$ as (an upper bound on) the {\em degree of interference}. In \autoref{sopt}, we discuss the optimal design of clusters using knowledge of $\gamma$.

\autoref{aani} further requires $\gamma$ to decay fast enough relative to the spatial dimension $d$. This coincides with the requirement imposed by \cite{leung2022rate} for $d=2$. In the increasing-domain case, the intuition is that $r$-neighborhood sizes grow like $r^d$ under \autoref{aspace}, so weak dependence requires interference to decay faster than the rate at which neighborhoods densify with $r$. Central limit theorems for spatial processes impose analogous requirements on mixing coefficients which control the degree of spatial dependence, as $\gamma$ does in our framework \citep[e.g.][Assumption 3(b)]{jenish2009central}. 

\subsection{Design}\label{sdesign}

We interchangeably use $k$ or $k_n$ to denote the number of clusters in the design. For a given population $\N_n$, denote by $\C_n = \{C_j\}_{j=1}^k$ the set of clusters, which is a partition of $\N_n$. We study standard two-stage randomized-saturation designs.

\begin{assump}[Assignment Mechanism]\label{aCRT}
  For $q \in (0,1)$, $p_0, p_1 \in [0,1]$, and $t \in \{0,1\}$,  $\{D_i\}_{i \in C_j} \stackrel{iid}\sim \text{Bernoulli}(p_t)$ conditional on $W_j=t$ where $\{W_j\}_{j=1}^k \stackrel{iid}\sim \text{Bernoulli}(q)$.
\end{assump}

\noindent Under this design, $k$ clusters are independently randomized into treatment with probability $q$ with $W_j$ denoting cluster $j$'s treatment assignment. If $W_j=t$, all units in cluster $j$ are randomized into treatment with probability $p_t$. The literature often refers to $p_1,p_0$ as {\em saturation levels}.

Let $\xi_n$ and $d$ be given from \autoref{aspace}. We consider (sequences of) clusters satisfying the following properties.

\begin{assump}[Spatial Clusters]\label{aclus}
  There exist positive sequences $\{L_n\}_{n\in\mathbb{N}}$ and $\{U_n\}_{n\in\mathbb{N}}$ for which the following hold. (a) For any sequence of clusters $\{C_n\}_{n\in\mathbb{N}}$ with $C_n \in \C_n$ for all $n$, there exists a sequence of ``centroid'' units $\{i_n\}_{n\in\mathbb{N}}$ with $i_n \in C_n$ for all $n$ such that $\N(i_n,L_n) \subseteq C_n \subseteq \N(i_n,U_n)$. (b) $L_n \sim U_n \sim (n/(k_n\xi_n))^{1/d}$. 
\end{assump}

\noindent This requires clusters to be globular in that they contain and are contained by balls with radii of the same order. Under \autoref{aspace}(a), part (b) implies that the number of units in any cluster is order $n/k_n$, which implies that clusters are comparable in size. We show below that clusters generated by $k$-medoids satisfy these requirements. Note that even under partial interference, restrictions on cluster size heterogeneity are required for inference using conventional clustered standard errors \citep[][Assumption 2]{hansen2019asymptotic}, although our requirements are stronger. These may be possible to relax, but we leave this to future research.

\autoref{aclus} allows clusters to have quite heterogeneous shapes, as can be seen in \autoref{fclusters}. It depicts $k$-medoid clusters, which satisfy \autoref{aclus} by \autoref{tkmed} below. However, the assumption is violated if clusters are highly size-imbalanced, such as elongated clusters that narrowly encompass a geographical feature such as a river or road. This may be addressed by subdividing large or abnormally shaped clusters or grouping adjacent small clusters. The assumption can also be violated if clusters are not constructed based on spatial proximity. For instance, if units are connected through an online social network connecting units far apart in space, then clusters based on network connectivity would likely violate the assumption. 

\autoref{aclus} is satisfied if the researcher partitions the space into $k_n$ identically-sized cubes, but such clusters do not adapt to the spatial distribution of units. We therefore suggest using unsupervised learning algorithms. The next result provides theoretical guarantees for the well-known $k$-medoids algorithm stated in \autoref{skmed}.

\begin{theorem}\label{tkmed}
  Suppose clusters are the output of $k$-medoids, given in \autoref{amedoids}. Under \autoref{aspace}(a), the clusters satisfy \autoref{aclus}.
\end{theorem}

In practice, $k$-means can also be used since it typically delivers clusters similar to those of $k$-medoids. The globular clusters produced by both algorithms are sometimes viewed unfavorably relative to algorithms such as spectral clustering for certain unsupervised learning tasks. However, for our purposes, globular clusters are preferable because they minimize the number of units near cluster boundaries, which are the primary source of bias due to cross-cluster interference. More broadly, \autoref{aclus} provides general design principles for constructing clusters, namely to aim for balance and globularity, which may be achieved either using these algorithms or manually.

\subsection{Causal Estimands}\label{sestimands}

Because we allow for cross-cluster interference, we need to redefine conventional estimands in a manner free of this source of bias. To this end, define for $t \in \{0,1\}$ the $p_t$-{\em counterfactual design}, which sets $k=1$ and $q=t$ in \autoref{aCRT}. This design groups the population into a single cluster and assigns treatment with probability $p_t$. Throughout the paper, let $\E[\cdot]$ denote the expectation taken with respect to the observed design in \autoref{aCRT} and $\E^*_{p_t}[\cdot]$ the expectation taken with respect to the $p_t$-counterfactual design. 

Let $\bm{D}_{-i}$ denote the assignment vector excluding the $i$th component and $Y_i(d, \bm{D}_{-i})$ denote $i$'s potential outcome under the counterfactual that $i$'s observed assignment is $d \in \{0,1\}$, holding fixed the realized assignments of other units. We study estimands of the form
\begin{equation*}
  \theta^* \equiv \theta^*(d_1, d_0; p_1, p_0) = \frac{1}{n} \sum_{i\in\N_n} \big( \E^*_{p_1}[Y_i(d_1,\bm{D}_{-i})] - \E^*_{p_0}[Y_i(d_0,\bm{D}_{-i})] \big) 
\end{equation*}

\noindent for $d_1,d_0 \in \{\emptyset,0,1\}$, where we define $Y_i(\emptyset,\bm{D}_{-i}) \equiv Y_i(\bm{D})$. The following special cases are analogous to estimands defined by \cite{hudgens2008toward} and \cite{hayes2017cluster}. The {\em direct effect} compares treated and untreated units under the $p_1$-counterfactual design:
\begin{equation*}
  \theta_D^* = \theta^*(1, 0, p_1, p_1) = \frac{1}{n} \sum_{i\in\N_n} \E^*_{p_1}[Y_i(1,\bm{D}_{-i}) - Y_i(0,\bm{D}_{-i})].
\end{equation*}

\noindent This is directly identified if the $p_1$-counterfactual design were implemented in practice. The estimands that follow, however, require multiple clusters assigned to different arms because they compare different counterfactual designs. Under our framework, this is the primary motivation for cluster randomization.

The {\em indirect effect} compares outcomes of untreated units under counterfactual designs with different saturation levels:
\begin{equation*}
  \theta_I^* = \theta^*(0, 0; p_1, p_0) = \frac{1}{n} \sum_{i\in\N_n} \big( \E^*_{p_1}[Y_i(0,\bm{D}_{-i})] - \E^*_{p_0}[Y_i(0,\bm{D}_{-i})] \big).
\end{equation*}

\noindent Interest often centers on the ``pure control'' baseline of $p_0=0$, so that $\E^*_{p_0}[Y_i(0,\bm{D}_{-i})] = Y_i(\bm{0})$. The {\em total effect} is the sum of the direct and indirect effects, equal to $\theta_T^* = \theta^*(1, 0; p_1, p_0)$. Finally, the {\em overall effect} compares outcomes under different saturation levels: $\theta_O^* = \theta^*(\emptyset, \emptyset, p_1, p_0)$. \cite{leung2024causal} provides conditions under which these have causal interpretations. 

\section{Estimators}\label{sests}

\subsection{Existing Approaches}\label{sprior}

Let $c(i)\in\{1,\ldots,k\}$ denote the index of the cluster containing unit $i$. A common strategy for estimating $\theta_I^*$ is to compute the difference in means between control units in clusters assigned saturation level $p_1$ and those in clusters assigned level $p_0$:
\begin{equation}
  \frac{\sum_{i\in\N_n} (1-D_i)W_{c(i)}Y_i}{\sum_{i\in\N_n} (1-D_i)W_{c(i)}} - \frac{\sum_{i\in\N_n} (1-W_{c(i)})Y_i}{\sum_{i\in\N_n} (1-W_{c(i)})}. \label{naive}
\end{equation}

\noindent This is the sample analog of
\begin{equation*}
  \frac{1}{n} \sum_{i\in\N_n} \big(\E[Y_i \mid (1-D_i)W_{c(i)}=1] - \E[Y_i \mid W_{c(i)}=0] \big),
\end{equation*}

\noindent which may be quite far from the target $\theta_I^*$. Units in control clusters near cluster boundaries may be spatially proximate to units in treated clusters and therefore at greater risk of contamination. For such units $i$, $\E[Y_i \mid W_{c(i)}=0]$ may be substantially different from $\E^*_{p_0}[Y_i(0,\bm{D}_{-i})]$. 

The fried-egg design attempts to reduce bias by restricting the comparison in \eqref{naive} to the subset of units deemed sufficiently far from cluster boundaries. Unfortunately, this has two problems. First, the resulting estimator is in fact asymptotically biased because boundary units are excluded with probability one, so it only estimates an average effect for the subpopulation of units in cluster interiors. As noted by \cite{mccann2018reducing}, this differs from the target estimand since units in cluster interiors may be systematically different from those near the boundaries due to spatial heterogeneity. Second, it is inefficient. The purpose of only including units in the interiors of control clusters is that such units are ``well surrounded'' by control clusters, or equivalently, relatively far from treated clusters. However, boundary units may be well surrounded in the same fashion if all neighboring clusters are assigned to control, so it would be just as useful to include these units.

\subsection{Our Approach}\label{sest}

Fix a neighborhood radius $r_n$ to be defined in \eqref{kappa} below. Call a unit $i$ {\em well-surrounded} if 
\begin{equation*}
  S_i \equiv \max_{t \in \{0,1\}} \prod_{j \in \N(i,r_n)} W_{c(j)}^t(1-W_{c(j)})^{1-t} = 1,
\end{equation*}

\noindent that is, if a unit $i$'s $r_n$-neighborhood only intersects clusters assigned to the same treatment arm. \autoref{fclusters} depicts $k$-medoid clusters, marking with an ``X'' units that are not well surrounded. Our strategy is to only exclude from estimation units that are not well surrounded. If $r_n=0$, then all units are well surrounded, and our estimators reduce to difference in means. Choosing a larger radius $r_n$ is analogous to choosing a larger boundary region for exclusion in a fried-egg design. 

\begin{figure}
  \centering
  \includegraphics[scale=0.55]{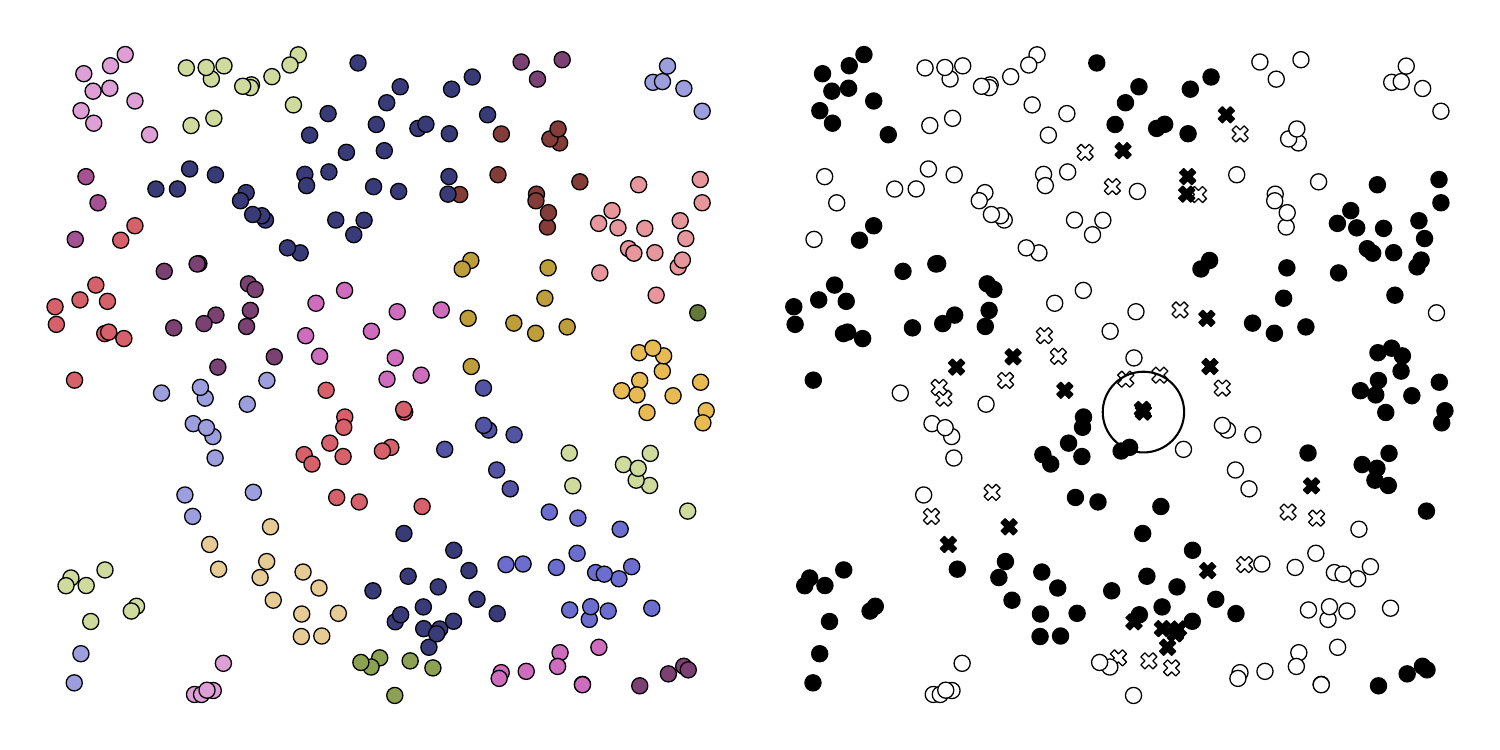}
  \caption{\small $k$-medoid clusters. The left panel colors units by cluster membership and the right panel by whether their respective clusters are assigned to treatment (black) or control (white). Units marked by ``X'' are excluded from estimation. The circle depicts a unit's $r_n$-neighborhood used to determine exclusion.}
  \label{fclusters}
\end{figure}

Under our assumptions, the number of well-surrounded units is of asymptotic order equal to the population size $n$ because any unit has nontrivial probability of being well-surrounded (\autoref{lintersect}). As a consequence, our strategy solves the fried-egg design's boundary bias problem and typically excludes strictly fewer units.

Let $d_1,d_0$ be given from the estimand $\theta^*$. For any $t \in \{0,1\}$ and $i \in \N_n$, define $T_{ti} = \ind\{D_i=d_t, W_{c(i)}=t\}S_i$ if $d_t \in \{0,1\}$ and $T_{ti} = \ind\{W_{c(i)}=t\}S_i$ if $d_t = \emptyset$. Let $p_{ti} = \E[T_{ti}]$ be the propensity score, which has a closed-form expression given in \autoref{roverlap} below. We propose the following H\'{a}jek estimator for $\theta^*$:
\begin{equation*}
  \hat\theta = \hat\mu_1 - \hat\mu_0 \quad\text{for}\quad \hat\mu_t = \frac{\sum_{i\in\N_n} T_{ti} Y_i / p_{ti}}{\sum_{i\in\N_n} T_{ti} / p_{ti}}.
\end{equation*}

\noindent When $r_n=0$, $S_i=1$ for all $i$, so propensity scores are homogeneous across $i$, and $\hat\theta$ reduces to difference in means. For $r_n>0$, the scores are generally spatially heterogeneous. For instance, boundary units are less likely to be well surrounded than interior units because that would require more clusters to be assigned the same saturation level. 

For any cluster $C_j$, let $m_j$ be its centroid from \autoref{aclus} and $R_j = \max_{i \in C_j} \rho(i,m_j)$, the ``radius'' of the cluster. We propose setting
\begin{equation}
  r_n = 0.5 \cdot \text{Median}(\{R_j\}_{j=1}^k). \label{kappa}
\end{equation}

\noindent In the case where clusters are equally sized squares, this coincides with the radius suggested by \cite{leung2022rate}. 

\begin{remark}[Overlap]\label{roverlap}
  Let $\phi_i$ be the number of clusters intersecting $\N(i,r_n)$. The propensity score can be written explicitly as
  \begin{equation*}
    p_{ti} = p_t^{d_t} (1-p_t)^{1-d_t} q^{\phi_i t} (1-q)^{\phi_i(1-t)} 
  \end{equation*}

  \noindent where $p_t^\emptyset \equiv 1$. We show in \autoref{lintersect} that $\phi_i$ is asymptotically bounded uniformly in $i$ under \eqref{kappa}. Since $q \in (0,1)$ by \autoref{aCRT}, $p_{ti}$ is similarly bounded away from zero when the saturation levels $p_1,p_0$ are nontrivial (see \autoref{aoverlap} below). Hence, using a CRT paired with our estimation strategy inherently avoids positivity violations or limited overlap.
\end{remark}

\begin{remark}
  The choice of 0.5 in \eqref{kappa} is immaterial for the asymptotic theory, but in finite samples it controls a bias-variance trade-off. We choose 0.5 to informally balance the two. Constants close to zero will yield high bias because $r_n=0$ corresponds to difference in means. Constants close to one result in high variance. To see this, suppose each cluster is a ball with homogeneous radius $R$. Setting the constant to 1 implies $r_n = R$, so the only unit in any cluster that is well surrounded with probability one is the centroid. A unit located slightly north of the centroid finds its $r_n$-neighborhood intersecting at least two clusters, so the chance that it is well surrounded, and hence not excluded, is substantially lower. In contrast, suppose $r_n = 0.5R$, and define a cluster $C_j$'s ``interior'' as the ball centered at $m_j$ with radius $R/2$. Then all units in the interior are well surrounded with probability one because their $r_n$-neighborhoods are contained within $C_j$. 
\end{remark}

Lastly, we propose a variance estimator for $\hat\theta$. Let
\begin{equation}
  \Lambda_i = \big\{ \ell\in\N_n\colon \max_{j\in\{1,\ldots,k_n\}} \, \abs{C_j \cap \N(i,r_n)} \, \abs{C_j \cap \N(\ell,r_n)} > 0 \big\},
  \label{Lambdai}
\end{equation}

\noindent the set of units $\ell$ for which some cluster intersects the $r_n$-neighborhood of $\ell$ and $i$. These can be thought of as the units ``most potentially correlated'' with $i$. Define $A_{ij}(1) = \bm{1}\{j \in \Lambda_i\}$, $A_{ij}(2) = \bm{1}\{j \in C_{c(i)}\}$, and $\hat{Z}_i = (T_{1i}(Y_i - \hat\mu_1))/p_{1i} - (T_{0i}(Y_i - \hat\mu_0))/p_{0i}$. The variance estimator is
\begin{equation}
  \hat\sigma^2 = \max_{t \in \{1,2\}} \hat\sigma^2(u) \quad\text{where}\quad \hat\sigma^2(u) = \frac{k_n}{n^2} \sum_{i\in\N_n} \sum_{j\in\N_n} \hat{Z}_i \hat{Z}_j A_{ij}(u). \label{hatsigma}
\end{equation}

\noindent Notice that $\hat\sigma^2(2)$ is the conventional cluster-robust variance estimator \citep[e.g.][]{baird2018optimal}, which only accounts for within-cluster dependence. The estimator $\hat\sigma^2(1)$ is analogous to that of \cite{leung2022rate} and additionally accounts for cross-cluster dependence since $C_{c(i)} \subseteq \Lambda_i$. In \autoref{sasymp}, we discuss the advantages of taking the larger of the two. 

\section{Asymptotic Theory}\label{sasymp}

We next derive bounds on the rates of convergence of our estimator and difference in means. We then characterize the asymptotic distribution of our estimator and prove that the variance estimator is asymptotically conservative. 

\begin{assump}[Overlap]\label{aoverlap}
  For all $t \in \{0,1\}$, if $d_t=1$ ($d_t=0$), then $p_t > 0$ ($p_t < 1$).
\end{assump}

\noindent As discussed in \autoref{roverlap}, this ensures overlap or positivity. For the next two theorems, abbreviate the estimand $\theta^*(d_1,d_0,p_1,p_0)$ as $\theta^*$. Recall the asymptotic order notation from the end of \autoref{sintro}. 

\begin{theorem}[Our Estimator]\label{tQ}
  Suppose $k_n \precsim n/\xi_n$. Under Assumptions \ref{aboundY}--\ref{aoverlap}, $\abs{\hat\theta - \theta^*} \precsim r_n^{-\gamma} + k_n^{-1/2} \sim (k_n\xi_n/n)^{\gamma/d} + k_n^{-1/2}$.
\end{theorem}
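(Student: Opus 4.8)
The plan is to split the error into a bias piece, driven by cross-cluster interference and by treatments outside the $r_n$-neighborhoods, and a stochastic piece controlled by a variance calculation, and to show these are $\precsim r_n^{-\gamma}$ and $\precsim k_n^{-1/2}$ respectively. Throughout I condition on $\N_n$ and the $k$-medoid partition $\C_n$, so that the only randomness is the two-stage design of \autoref{aCRT}. Write $\hat\theta_Q = \hat\mu_1 - \hat\mu_0$ with $\hat\mu_t = N_t / M_t$, $N_t = \sum_{i\in\N_n} T_{ti}^Q Y_i/p_{ti}^Q$, $M_t = \sum_{i\in\N_n} T_{ti}^Q/p_{ti}^Q$, and decompose
\[
  \hat\theta_Q - \theta_Q^* = \Big(\hat\mu_1 - \frac{\E[N_1]}{\E[M_1]}\Big) - \Big(\hat\mu_0 - \frac{\E[N_0]}{\E[M_0]}\Big) + \Big(\frac{\E[N_1]}{\E[M_1]} - \frac{\E[N_0]}{\E[M_0]} - \theta_Q^*\Big).
\]
Since $T_{ti}^Q\in\{0,1\}$ with $\E[T_{ti}^Q]=p_{ti}^Q$ we have $\E[M_t]=n$, and $\E[N_t]=O(n)$ by \autoref{aboundY}.

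\emph{Bias term.} Fix $Q$, $t\in\{0,1\}$ and a unit $i$. On the event that $i$ is well surrounded by arm-$t$ clusters --- the only event on which $T_{ti}^Q$ can be nonzero --- every unit in $\N(i,r_n)$ lies in a cluster assigned to arm $t$; after additionally conditioning on $D_i$ when $Q\in\{D,T\}$, the treatment sub-vector on $\N(i,r_n)$ is either identically $\bm 0$ (when $t=0$) or distributed exactly as under the one-treated-cluster design (when $t=1$). The treatments of units outside $\N(i,r_n)$ follow a different distribution, but by \autoref{aani} changing them alters $Y_i$ by at most $c\min\{r_n^{-\gamma},1\}$ pathwise, so coupling over the shared sub-vector gives $\E[T_{ti}^Q Y_i]/p_{ti}^Q$ equal to the relevant counterfactual mean ($Y_i(\bm 0)$, $\E^*[Y_i(0,\bm D_{-i})]$, $\E^*[Y_i(1,\bm D_{-i})]$, or $\E^*[Y_i(\bm D)]$, depending on $Q$ and $t$) up to an error $O(r_n^{-\gamma})$ that is uniform in $i$ by Assumptions \ref{aboundY} and \ref{aani}. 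Averaging over $i$ and matching with the definition of $\theta_Q^*$ yields that the bias term is $\precsim r_n^{-\gamma}$. I would write this out in full for one estimand, say $Q=I$, the others being structurally identical.

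\emph{Stochastic term.} By the H\'ajek linearization $\hat\mu_t - \E[N_t]/\E[M_t] = M_t^{-1}\big((N_t-\E N_t) - (\E N_t/\E M_t)(M_t - \E M_t)\big)$, and since $\E N_t/\E M_t = O(1)$, it suffices to show $\var(N_t)\precsim n^2/k_n$ and $\var(M_t)\precsim n^2/k_n$: these give $M_t = n + O_p(n k_n^{-1/2}) \sim n$, hence $M_t^{-1}\precsim n^{-1}$, and then, by Chebyshev, $\hat\mu_t - \E[N_t]/\E[M_t] \precsim n^{-1}\big(|N_t-\E N_t| + |M_t - \E M_t|\big) \precsim k_n^{-1/2}$. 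The variance bound is the heart of the proof. Each summand of $N_t$ or $M_t$ has second moment $\precsim 1$ by \autoref{aboundY} and the uniform lower bound $p_{ti}^Q \succsim 1$ from \autoref{roverlap} (which rests on $\phi_i = O(1)$, \autoref{lintersect}). The $(i,j)$ covariance vanishes whenever the clusters meeting $\N(i,r_n)$ are disjoint from those meeting $\N(j,r_n)$ and $Y_i,Y_j$ depend only on treatments inside these neighborhoods; in general, freezing treatments outside a ball of radius $\asymp\rho(i,j)$ and invoking \autoref{aani} bounds the covariance by $O(1)$ when $\rho(i,j)\precsim r_n$ and by $O(\rho(i,j)^{-\gamma})$ when $\rho(i,j)\succsim r_n$. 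Summing with \autoref{aspace} and the cluster-geometry facts that cluster radii are of order $r_n$ with $r_n^d \sim n/(k_n\xi_n)$ and $|\N(i,Cr_n)| \precsim \xi_n r_n^d \sim n/k_n$: the near pairs contribute $\precsim \sum_i (n/k_n) = n^2/k_n$; the far pairs contribute $\precsim \sum_i \xi_n\sum_{\ell\succsim r_n}\ell^{d-1-\gamma} \precsim \sum_i \xi_n r_n^{d-\gamma} \sim \sum_i (n/k_n)\,r_n^{-\gamma} \precsim n^2/k_n$, where $\gamma>d$ makes the $\ell$-sum converge and $k_n\precsim n/\xi_n$ makes $r_n^{-\gamma}\precsim 1$. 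Hence $\var(N_t),\var(M_t)\precsim n^2/k_n$.

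Combining the two pieces gives $|\hat\theta_Q-\theta_Q^*|\precsim r_n^{-\gamma}+k_n^{-1/2}$, and the cluster-radius relation $r_n^d\sim n/(k_n\xi_n)$ turns $r_n^{-\gamma}$ into $(k_n\xi_n/n)^{\gamma/d}$. The main obstacle is precisely this geometric input: establishing, uniformly over the (diverging number of) $k$-medoid clusters and over $i$, that all clusters have radius of the same order as $r_n$, that $\phi_i$ is bounded, and that $r_n^d\sim n/(k_n\xi_n)$, under the merely Ahlfors-regular, possibly infill-increasing space of \autoref{aspace}. This is the separate body of $k$-medoids theory the paper develops (extending \autoref{lsep}, \autoref{lintersect}, and related results), which I would invoke here. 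Granting it, the one remaining delicate choice is the freezing radius $\asymp\rho(i,j)$ in the covariance bound: it must be large enough that the relevant cluster-level and within-cluster randomness of $i$ and $j$ separates, yet small enough that the \autoref{aani} remainder stays of order $\rho(i,j)^{-\gamma}$.
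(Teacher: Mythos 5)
Your proposal is correct and follows essentially the same route as the paper: a Hájek linearization (the paper compares to the Horvitz–Thompson analog, which yields the identical bias term $\E[N_t]/\E[M_t]$), the same coupling argument conditional on being well surrounded to get the $O(r_n^{-\gamma})$ bias, and the same variance computation via projection onto local treatment $\sigma$-algebras with independence once neighborhoods are separated by more than a cluster diameter, summed over shells using Assumption 3(b) and the $k$-medoids geometry lemmas. Your flagged ``delicate choice'' of freezing radius is resolved in the paper exactly as you anticipate, by taking radius $\rho(i,j)/2-\bar R$ for pairs with $\rho(i,j)>4\bar R$ and the crude bound otherwise.
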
 

\noindent The result establishes a bias-variance trade-off in $k_n$. The $k_n^{-1/2}$ term is the contribution of (the square root of) the variance since the effective sample size in a CRT is the number of clusters $k_n$. The asymptotic bias is order $r_n^{-\gamma}$. Intuitively, if $r_n$ is large, then the expected outcome of a unit $i$ assigned treatment $d_t$ and well surrounded by clusters assigned to saturation level $p_t$ should well approximate $\E^*_{p_t}[Y_i(d_t,\bm{D}_{-i})]$, corresponding to lower bias. The bias decreases at a faster rate for larger $\gamma$ since this corresponds to a lower degree of spatial interference.

The requirement $k_n \precsim n/\xi_n$ is mild, and typically we would have $k_n \prec n/\xi_n$. In the increasing-domain case where $\xi_n=1$, necessarily $k_n \precsim n/\xi_n$ since $k_n \leq n$, and usually the number of clusters is of smaller order than the population size. In the infill-increasing case, if $k_n$ grows faster than $n/\xi_n$, which is the volume of the study region under \autoref{aspace}, then we would be increasingly subdividing the region into smaller clusters of shrinking volume, analogous to having more clusters than units. 

Denote the difference-in-means estimator of $\theta^*$ by $\hat\theta^+$, which corresponds to setting $r_n=0$ in the definition of $\hat\theta$. Let $\hat\theta_D^+$, $\hat\theta_T^+$, and $\hat\theta_O^+$ be the difference-in-means estimates of $\theta_D^*$, $\theta_T^*$, and $\theta_O^*$ defined in \autoref{sestimands}.

\begin{theorem}[Difference in Means]\label{tplus} 
  (a) Under Assumptions \ref{aboundY}--\ref{aoverlap}, $\abs{\hat\theta^+ - \theta^*} \precsim (k_n\xi_n/n)^{1/d} + k_n^{-1/2}$. (b) Suppose the design satisfies \autoref{aCRT} for $p_1 \in (0,1)$, $p_0=0$. There exist a metric space and sequence of units, clusters, and potential outcomes satisfying Assumptions \ref{aboundY}--\ref{aani} and \ref{aclus} such that $\abs{\hat\theta_Q^+ - \theta_Q^*} \succsim (k_n\xi_n/n)^{1/d} + k_n^{-1/2}$ for any $Q \in \{D,T,O\}$.
\end{theorem}

\noindent Part (a) provides an upper bound on the rate. Part (b) shows that the rate is tight for several illustrative cases. Recall that setting $p_0=0$ corresponds to the ``pure control'' baseline used in the estimands of \cite{hayes2017cluster}.

The $k_n^{-1/2}$ term in the bound is (the square root of) the variance contribution, as in \autoref{tQ}, while $(k_n\xi_n/n)^{1/d}$ is the bias contribution. The latter is notably worse than that of $\hat\theta$ because a reduction in the degree of interference $1/\gamma$ has no effect on the rate. The reason is that units situated at cluster boundaries may be directly proximate to clusters assigned to different treatment arms, and as shown in the proof, the share of such units can be of order $(k_n\xi_n/n)^{1/d}$. 

These results demonstrate that excluding units in the manner of $\hat\theta$ can significantly reduce the asymptotic order of the bias relative to difference in means. While excluding units may come at the cost of efficiency, there is no increase to the asymptotic order of the variance because the variances of both estimators scale not with the number of units but with the number of clusters. Hence, the efficiency loss is second-order relative to the potential reduction in bias.

Define $\mu_t = n^{-1} \sum_{i\in\N_n} \E[Y_i \mid T_{ti}=1]$, $\bar{\theta} = \mu_1 - \mu_0$, and
\begin{equation*}
  \sigma_n^2 = \var\bigg( \sqrt{k_n} \frac{1}{n} \sum_{i\in\N_n} \bigg( \frac{T_{1i}(Y_i - \mu_1)}{p_{1i}} - \frac{T_{0i}(Y_i - \mu_0)}{p_{0i}} \bigg) \bigg).
\end{equation*}

\noindent Note that $\sigma_n^2 \precsim 1$ by the proof of \autoref{tQ}.

\begin{theorem}[CLT]\label{tclt}
  Suppose $1 \prec k_n \prec n/\xi_n$ and $\sigma_n^2 \succsim 1$. Under Assumptions \ref{aboundY}--\ref{aoverlap}, 
  \begin{equation}
    \sigma_n^{-1} \sqrt{k_n} (\hat\theta - \bar{\theta}) \dlimarrow \N(0,1). \label{cltsec}
  \end{equation}

  \noindent Furthermore, if $k_n \prec (n/\xi_n)^{\frac{2\gamma}{2\gamma+d}}$, then
  \begin{equation}
    \sigma_n^{-1} \sqrt{k_n} (\hat\theta - \theta^*) \dlimarrow \N(0,1). \label{cltmain}
  \end{equation}
\end{theorem}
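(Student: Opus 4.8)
The plan is to establish \eqref{cltsec} by a cluster-level CLT for a suitably centered and normalized version of the numerator-minus-correction term, then transfer to \eqref{cltmain} by showing the additional rate condition kills the gap $\theta_Q - \theta_Q^*$ relative to the $k_n^{-1/2}$ scale. First I would linearize the Hájek estimator: write $\hat\theta_Q - \theta_Q$ as the difference of the two Hájek means, and expand each around its population analog $\mu_t^Q$, so that up to a remainder $\hat\theta_Q - \theta_Q$ equals $n^{-1}\sum_i\big(T_{1i}^Q(Y_i-\mu_1^Q)/\bar p_{1}^Q - T_{0i}^Q(Y_i-\mu_0^Q)/\bar p_{0}^Q\big)$ times deterministic normalizing factors that converge to $1$, where the denominators $n^{-1}\sum_i T_{ti}^Q/p_{ti}^Q$ are shown to concentrate at $1$ (using \autoref{lintersect} to bound $\phi_i$, hence the propensity scores, away from zero, plus a variance bound of order $k_n^{-1}$ on these averages, which follows from the same weak-dependence arguments used for \autoref{tQ}). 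This reduces \eqref{cltsec} to a CLT for $S_n \equiv \sqrt{k_n}\,n^{-1}\sum_i\big(T_{1i}^Q(Y_i-\mu_1^Q)/p_{1i}^Q - T_{0i}^Q(Y_i-\mu_0^Q)/p_{0i}^Q\big)$, which by definition has variance $\sigma_Q^2$, assumed bounded above (from the proof of \autoref{tQ}) and below (by hypothesis).

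The core step is the CLT for $S_n$. Here I would exploit the design structure: conditional on the clusters $\C_n$ and the unit-level treatments $\{D_i\}$, the cluster assignments $\{W_j\}$ are i.i.d.\ Bernoulli$(q)$, and each summand's indicator $T_{ti}^Q$ depends on $\{W_{c(j)}\colon j\in\N(i,r_n)\}$, i.e.\ on the $W$'s of the clusters intersecting $i$'s $r_n$-neighborhood. By \autoref{lintersect} the number of such clusters $\phi_i$ is uniformly bounded, and by the separation property (\autoref{lsep}) the medoids — hence the clusters — are spatially spread out, so $S_n$ can be rewritten as a sum over the $k_n$ clusters of contributions that form a spatially weakly dependent triangular array: the term attached to cluster $j$ collects all units $i$ whose neighborhood touches $j$, and two cluster-terms are independent (conditionally on $\C_n$ and $\{D_i\}$) once the corresponding clusters are far enough apart that no unit's $r_n$-neighborhood meets both. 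I would then invoke a CLT for spatially dependent arrays with a finite-range (or fast-decaying) dependence structure — the Stein/dependency-graph approach, or the Jenish–Prucha type central limit theorem cited in the paper — after verifying the Lyapunov/uniform-integrability condition from \autoref{aboundY} (bounded outcomes and bounded inverse propensities give bounded summands of size $O(k_n^{-1/2}\cdot(\text{neighborhood count}))$, and $k_n\to\infty$ supplies the asymptotic negligibility) and the nondegeneracy condition $\sigma_Q^2\succsim 1$. The mixed randomness from $\{D_i\}$ inside $Y_i$ and inside $T_{ti}^D$ is handled by noting it only enters through within-cluster i.i.d.\ draws, so it does not create dependence across well-separated clusters; one conditions on $\C_n$, proves the conditional CLT with the stated variance, and removes the conditioning since $\C_n$ is a deterministic function of $\N_n$.

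For \eqref{cltmain}, I would write $\hat\theta_Q - \theta_Q^* = (\hat\theta_Q - \theta_Q) + (\theta_Q - \theta_Q^*)$ and show the deterministic bias gap $\theta_Q - \theta_Q^*$ is of order $r_n^{-\gamma}\sim(k_n\xi_n/n)^{\gamma/d}$ — this is exactly the bias term isolated in the proof of \autoref{tQ}, coming from approximating the relevant counterfactual potential outcome of a well-surrounded unit by its $\E^*$ or $\bm 0$ analog via \autoref{aani}. Multiplying by $\sigma_Q^{-1}\sqrt{k_n}$ and using $\sigma_Q\succsim 1$, the contribution is $O(\sqrt{k_n}(k_n\xi_n/n)^{\gamma/d}) = O\big((k_n^{(2\gamma+d)/(2d)})(\xi_n/n)^{\gamma/d}\big)$, which is $o(1)$ precisely when $k_n\prec(n/\xi_n)^{2\gamma/(2\gamma+d)}$; then \eqref{cltmain} follows from \eqref{cltsec} and Slutsky. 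The main obstacle I anticipate is the cluster-level CLT in the second paragraph: establishing a clean weak-dependence/dependency-graph structure for the $W$-driven array while $k_n$ diverges requires the large-$k$ $k$-medoids geometry (uniform bounds on $\phi_i$ and on how many units a single cluster can "claim," plus the separation rate) and care that the normalization $\sqrt{k_n}/n$ matches the effective number of independent blocks; verifying the Lyapunov condition and ruling out degeneracy uniformly over the diverging array is where the real work lies, and it leans heavily on the new $k$-medoids lemmas the paper develops.
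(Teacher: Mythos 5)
Your overall architecture matches the paper's: linearize the H\'ajek estimator via \autoref{lkappa} to reduce to a CLT for $S_n=\sqrt{k_n}\,n^{-1}\sum_i Z_i^Q$, prove that CLT by a Stein/dependency-graph argument whose graph is governed by which clusters a unit's $r_n$-neighborhood touches (maximal degree $\max_i\abs{\Lambda_i}\precsim n/k_n$ by \autoref{lnbhd}), and then show the bias $\theta_Q-\theta_Q^*\precsim (k_n\xi_n/n)^{\gamma/d}$ is negligible at the $\sqrt{k_n}$ scale under the undersmoothing condition. That last step is exactly the paper's argument and is correct as you state it.

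The genuine gap is in the core CLT step. You claim that two cluster-level contributions are independent once the corresponding clusters are far enough apart that no unit's $r_n$-neighborhood meets both. That is true of the indicators $T_{ti}^Q$, but it is false for the summands $Z_i^Q$ themselves, because $Y_i=Y_i(\bm{D})$ depends on the entire treatment vector: \autoref{aani} only bounds the magnitude of interference from distant units, it does not set it to zero. Hence the $Z_i^Q$ do not form an exact dependency graph, and no conditioning on within-cluster draws removes this (conditioning on $\{D_i\}$ as written is not even well posed, since $D_i$ is not independent of $W_{c(i)}$). What is needed --- and what the paper supplies as its Step 2 --- is a localization step: replace $Z_i^Q$ by $\E[Z_i^Q\mid\F_i(r_n)]$, where $\F_i(r_n)$ collects the treatment assignments in $\N(i,r_n)$; show via \autoref{aani} and \autoref{lintersect} that $\max_i\E[\abs{Z_i^Q-\E[Z_i^Q\mid\F_i(r_n)]}^q]^{1/q}\precsim r_n^{-\gamma}$; and then bound the second moment of $\sqrt{k_n}\,n^{-1}\sum_i(Z_i^Q-\E[Z_i^Q\mid\F_i(r_n)])$ by $k_n\xi_n/n\prec1$ using the same near/far covariance decomposition as in the variance argument for \autoref{tQ}. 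Only the localized variables $\E[Z_i^Q\mid\F_i(r_n)]$ have the exact dependency-graph structure (with graph $j\in\Lambda_i$) required by the Stein-method CLT, and the same localization bound is what lets you replace their standard deviation $\tilde\sigma_Q$ by $\sigma_Q$ (using $\sigma_Q^2\succsim1$) in the final normalization. Your fallback of invoking a Jenish--Prucha mixing-type CLT would require verifying mixing or near-epoch-dependence conditions for this triangular array, which you do not do; the truncation route is the one the paper's lemmas actually support.
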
 

\noindent The first result \eqref{cltsec} centers the estimator at $\bar{\theta}$, the probability limit of $\hat\theta$. This is not the estimand of interest since there is an asymptotic bias $\abs{\bar{\theta} - \theta^*}$ by Theorem \ref{tQ}. To use the normal limit to justify the validity of conventional CIs for $\theta^*$, the second result \eqref{cltmain} requires ``undersmoothed designs'' in which the number of clusters is of smaller order than the optimal rate discussed in the next section. This ensures that the asymptotic bias is small. It is analogous to nonparametric regression where rate-optimal tuning parameter choices result in asymptotic bias, so conventional CIs require undersmoothing. We discuss practical choices of $k_n$ in \autoref{sopt}.  In \autoref{sBA}, we discuss ``bias-aware'' inference, an alternative to undersmoothing.

\begin{theorem}[Variance Estimator]\label{tvar}
  Suppose $1 \prec k_n \prec n/\xi_n$ and $\sigma_n^2 \succsim 1$. Under Assumptions \ref{aboundY}--\ref{aoverlap}, $\hat\sigma^2(t) = \sigma_n^2 + \mathcal{B}_n + o_p(1)$ for $t \in \{1,2\}$ and some sequence of non-negative constants $\{\mathcal{B}_n\}_{n\in\mathbb{N}}$. Hence, both $\hat\sigma^2(1)$ and $\hat\sigma^2(2)$ are asymptotically conservative.
\end{theorem} 

\noindent The cluster-robust variance estimator $\hat\sigma^2(2)$ has the advantage of being non-negative in finite sample, unlike $\hat\sigma^2(1)$ which is a truncation estimator \citep[][p.\ 823]{andrews1991heteroskedasticity}. On the other hand, $\hat\sigma^2(2)$ only accounts for within-cluster dependence, whereas $\hat\sigma^2(1)$ also accounts for cross-cluster dependence in the definition $A_{ij}(1)$. As shown in the proof, this dependence vanishes, so $\hat\sigma^2(1)$ is a valid estimator. However, the dependence vanishes at the slow rate $(k_n\xi_n/n)^{1/d}$, the same order as the bias of difference in means. Thus in smaller samples, ignoring second-order terms may result in anti-conservativeness. By taking the larger of the two estimators, we obtain the benefits of both. See \autoref{svar} for a comparison of $\hat\sigma^2$ with other variance estimators in the literature.

\section{Optimal Design}\label{sopt}

Recall that $d$ is the dimension of the spatial region, $\xi_n$ is the density of the region (number of units per unit volume/area), and $\gamma$ is a lower bound on the speed at which interference decays with distance. By \autoref{tQ}, choosing
\begin{equation}
  k \sim (n/\xi_n)^{\frac{2\gamma}{2\gamma+d}} \label{rok}
\end{equation}

\noindent optimizes the rate of convergence of $\hat\theta$. This formalizes how domain knowledge of interference $\gamma$ informs the design of clusters. The right-hand side is increasing in $\gamma$ since less interference means the same level of bias reduction can be achieved with more clusters. It is decreasing in density $\xi_n$ because having more units in a given area effectively corresponds to greater interference (an infectious disease may spread more easily).

Choosing the number of clusters according to \eqref{rok} balances the asymptotic orders of the bias and variance of $\hat\theta$. However, as discussed in \autoref{sasymp}, validity of the CI 
\begin{equation}
  \hat\theta \pm 1.96 \cdot \hat\sigma k^{-1/2} \label{undersmoothCI}
\end{equation}

\noindent requires the bias to be of smaller order than the variance. This requires an {\em undersmoothed design} where $k$ is of smaller order than \eqref{rok}.

We next recommend an undersmoothed choice of $k$ given domain knowledge of $\gamma$. Let $\mathcal{V}$ denote the area or volume of the study region containing $\N_n$. As discussed below, $\mathcal{V}$ is of order $n/\xi_n$, so the rate-optimal formula \eqref{rok} can be rewritten as $k \sim \mathcal{V}^{\frac{2\gamma}{2\gamma+d}}$. While one could choose $k$ equal to the right-hand side, this choice would be extremely sensitive to the unit of length used to measure geographic distance. Switching from square kilometers to square millimeters would dramatically increase $\mathcal{V}$. 

Our observation is that both the unit of length and $\gamma$ determine the speed at which interference decays with distance. For any given choice of $\gamma$, the rate of decay is significantly faster if distance is measured in millimeters compared to kilometers, for example. Therefore, domain knowledge of interference informs both $\gamma$ and the unit of length, and one can determine $k$ from these ingredients as follows.

\begin{enumerate}
  \item Based on domain knowledge, specify a unit of length and rate of decay $\gamma$ under which \autoref{aani} is believed to hold. Choose a strict lower bound $\tilde\gamma < \gamma$. For a given unit of length, the conservative choice allowed by \autoref{aani} is $\tilde\gamma=d$. We suggest this choice absent any prior information on $\gamma$.

  \item Let $\mathcal{V}$ denote the volume of the study region containing $\N_n$ under the chosen unit of length. Set the number of clusters to
    \begin{equation}
      k = \big[\min\{\mathcal{V},n\}^{\frac{2\tilde\gamma}{2\tilde\gamma + d}}\big],
      \label{optk}
    \end{equation}

    \noindent where $[c]$ means round $c$ to the nearest integer. 
\end{enumerate}

The next subsection provides empirical examples of calibrating $k$ under the conservative choice $\tilde\gamma=d$. The last subsection discusses how to determine $\gamma$ to obtain less conservative estimates. 

\begin{remark}
  The theoretical motivation for \eqref{optk} is as follows. First, using $\tilde\gamma < \gamma$ corresponds to undersmoothing. Second, \autoref{aspace}(a) says that the number of units in an $r$-neighborhood is of asymptotic order equal to the density $\xi_n$ times the volume of the neighborhood $r^d$. It follows that the number of units $n$ is of asymptotic order equal to $\xi_n$ times the volume of the region: $n \sim \xi_n \mathcal{V}$. Hence, in the infill-increasing case where $\xi_n$ is diverging, we replace $n/\xi_n$ in the rate-optimal formula \eqref{rok} with $\mathcal{V}$, which can be directly computed from the data. In the increasing-domain case where $\xi_n \sim 1$, we can replace $n/\xi_n$ with either $n$ or $\mathcal{V}$ since they are of the same order, so \eqref{optk} conservatively chooses the smaller option. 
\end{remark}

\subsection{Empirical Examples}

\cite{sur2009cluster} conduct a CRT in an urban slum in India spanning about 1.2 by 0.7 km with 38k participants. To compute our suggested number of clusters \eqref{optk}, we need to select $\tilde\gamma$ and the unit of length. Suppose we choose the conservative bound $\tilde\gamma = d = 2$, so that interference is assumed to decay like $r^{-2}$ or faster with each unit of length $r$. If we take 35m to be the unit of length, then for units $i,j,k$ such that $\rho(i,j) = 35$m and $\rho(i,k) = 70$m, the extent to which $k$'s treatment affects $i$ is less than $1/4$th ($2^{-\tilde\gamma}=0.25$) as small as the extent to which $j$'s affects $i$. This is with only a 35m difference in distance. For this unit of length, \eqref{optk} yields $k = 78$. In other words, these are the assumptions on interference that justify the authors' choice of $k=80$, originally determined by a conventional power analysis based on a partial interference model.

The previous rate of decay may be over-optimistic, so suppose the relevant unit of length is in 100m increments. If $\rho(i,j)=100$m and $\rho(i,k) = 200$m, the extent to which $k$'s treatment affects $i$ is less than $1/4$th as small as the extent to which $j$'s affects $i$, now with a 100m difference in distance. With a slower rate of decay, bias is higher, which requires constructing fewer, larger clusters. As a result, \eqref{optk} yields only $k = 19$. 

Next consider \cite{homan2016effect} whose trial region is substantially larger at roughly 12 by 4 km with nearly the same number of units (34k). Due to the lower density, there is less bias from interference, so a choice of $k=81$ can be justified under weaker assumptions on interference. Specifically, if $\tilde\gamma=2$ but the unit of length is now in 250m increments, then \eqref{optk} results in $k=84$. For context, the {\em Anopheles} mosquito that is the subject of their trial typically does not fly more than 2 km from their breeding grounds \citep{anopheles}.

These examples illustrate what domain knowledge of the degree of interference entails. They also show how our proposed choice of $k$ accounts for regional density, unlike the standard power analysis \citep[e.g.][eq.\ (9)]{hemming2011sample}. In both examples, weaker assumptions on interference require choosing $k$ smaller than the standard analysis to better balance bias and variance. To justify choosing larger $k$, the researcher must either collect data over a larger area or be willing to entertain stronger assumptions on interference.

\subsection{Bounding Interference}\label{sgamma}

Optimal design generally requires prior information on certain population parameters. The standard power analysis for CRTs assumes partial interference and requires knowledge of the intracluster correlation coefficient, a measure of within-cluster dependence in potential outcomes \citep[e.g.][]{baird2018optimal,hemming2017design}. Our results suggest that, if cross-cluster interference is of first-order importance, the focus of attention should instead be $\gamma$. As illustrated in the previous subsection, the conservative choice $\gamma=d$ can result in relatively small values of $k$ because it allows for a greater degree of interference, so to the extent that one can justify stronger assumptions on interference, that is, values of $\gamma$ that are larger than $d$ for a given unit of length, this would improve asymptotic power.

In the context of infectious disease trials, \cite{halloran2017simulations} argue that CRT design should be informed by simulating models of disease transmission. Several papers utilize parametric models and simulation methods to estimate or bound contamination bias. \cite{alexander2020spatial} and \cite{jarvis2019spatial} use spatial models to provide evidence of contamination in prior CRTs. \cite{multerer2021analysis} employ models of disease transmission for a similar purpose. Our theory provides a precise way in which modeling can inform design, namely by providing plausible values of $\gamma$. This relates to \cite{staples2015incorporating} who show how to estimate a different measure of cross-cluster interference to assess the degree to which the conventional power analysis overstates trial power. 

To estimate $\gamma$, models may be combined with external data sources such as data from pilot studies. In the context of malaria vector control, mosquito mark-release-recapture experiments \citep[e.g.][]{guerra2014global} provide data on geographic dispersion of malaria carriers, which is informative of spatial interference. We provide additional suggestions in \autoref{sgammax}.

\section{Simulation Study}\label{smc}

We conduct a simulation study to illustrate the finite-sample properties of our estimator and difference in means under our proposed design. We randomly draw unit locations from the square $[-(n\alpha_n)^{1/2}, (n\alpha_n)^{1/2}]^2$ with $\alpha_n = 0.8, 0.7, 0.6$, respectively. This corresponds to the infill-increasing case where the regional volume shrinks with the population size. We create clusters using $k$-medoids with $k$ given by \eqref{optk} using the conservative choice $\tilde\gamma=d=2$ and set the assignment probabilities in \autoref{aCRT} to $(q,p_1,p_0) = (0.7,0.5,0)$.

Let $\{\tilde\varepsilon_i\}_{i\in\N_n} \stackrel{iid}\sim \mathcal{N}(-0.5,1)$, $\{\beta_i\}_{i \in \N_n} \stackrel{iid} \sim \mathcal{N}(2,1)$, and $\{\gamma_i\}_{i \in \N_n} \stackrel{iid} \sim \mathcal{N}(1,1)$ be independent and drawn independently of locations. We generate spatially autocorrelated errors $\varepsilon_i = \tilde\varepsilon_i + \sum_{j\in\N_n} \bm{1}\{\rho(i,j) \leq 1\}\tilde\varepsilon_j / \sum_{k\in\N_n} \bm{1}\{\rho(i,k) \leq 1\}$. For $w_{ij} = \min\{\rho(i,j)^{-5}, 1\}$, we generate outcomes according to
\begin{equation*}
  Y_i = \sum_{j\in\mathcal{N}_n} w_{ij}D_j\beta_j + \sum_{j\in\N_n} w_{ij} D_iD_j\gamma_j + \varepsilon_i.
\end{equation*}

\noindent Under this model, the unit-level direct and indirect effects are respectively given by
\begin{equation*}
  Y_i(1,\bm{D}_{-i}) - Y_i(0,\bm{D}_{-i}) = \gamma_i+\beta_i + \sum_{j\neq i} w_{ij} D_j\gamma_j \quad\text{and}\quad Y_i(0,\bm{D}_{-i}) - Y_i(\bm{0}) = \sum_{j\neq i} w_{ij} D_j\beta_j.
\end{equation*}

\noindent Due to the choice of $-5$ in the spatial weights $w_{ij}$, \autoref{aani} holds for $\gamma = 3$ \citep[][eq.\ (3)]{leung2022rate}, which is a fairly slow rate of decay given that \autoref{aani} requires $\gamma>2$. 

\begin{table}[ht]
\centering
\caption{Main results}
\resizebox{\columnwidth}{!}{%
\begin{tabular}{lrrrrrrrrrrrr}
\toprule
 & \multicolumn{6}{c}{Spatial Interference} & \multicolumn{6}{c}{Partial Interference} \\
 \cmidrule(lr){2-7} \cmidrule(lr){8-13}
 & \multicolumn{3}{c}{Indirect Effect} & \multicolumn{3}{c}{Overall Effect} & \multicolumn{3}{c}{Indirect Effect} & \multicolumn{3}{c}{Overall Effect} \\
 \cmidrule(lr){2-4} \cmidrule(lr){5-7} \cmidrule(lr){8-10} \cmidrule(lr){11-13}
$n$ & 500 & 1000 & 2000 & 500 & 1000 & 2000 & 500 & 1000 & 2000 & 500 & 1000 & 2000 \\
\midrule
\multicolumn{2}{l}{Our Estimator} \\
Bias & 0.064 & 0.055 & 0.051 & 0.072 & 0.059 & 0.057 & 0.000 & 0.003 & 0.005 & 0.003 & 0.004 & 0.004 \\
CI SE & 0.945 & 0.951 & 0.954 & 0.940 & 0.949 & 0.948 & 0.951 & 0.958 & 0.966 & 0.952 & 0.960 & 0.964 \\
CI SE$^*$ & 0.956 & 0.952 & 0.948 & 0.956 & 0.959 & 0.952 & 0.962 & 0.961 & 0.959 & 0.961 & 0.964 & 0.960 \\
SE & 0.258 & 0.202 & 0.160 & 0.336 & 0.263 & 0.208 & 0.254 & 0.197 & 0.156 & 0.329 & 0.256 & 0.201 \\
SE$^*$ & 0.261 & 0.200 & 0.153 & 0.342 & 0.262 & 0.206 & 0.257 & 0.196 & 0.150 & 0.333 & 0.253 & 0.198 \\
\cmidrule(lr){1-13}
\multicolumn{3}{l}{Difference in Means} \\
Bias & 0.154 & 0.179 & 0.216 & 0.168 & 0.192 & 0.233 & 0.001 & 0.001 & 0.005 & 0.004 & 0.001 & 0.004 \\
CI SE & 0.906 & 0.852 & 0.689 & 0.904 & 0.872 & 0.770 & 0.953 & 0.961 & 0.970 & 0.951 & 0.960 & 0.962 \\
CI SE$^*$ & 0.920 & 0.853 & 0.654 & 0.935 & 0.894 & 0.784 & 0.960 & 0.963 & 0.963 & 0.963 & 0.963 & 0.960 \\
SE & 0.244 & 0.186 & 0.144 & 0.319 & 0.246 & 0.191 & 0.240 & 0.182 & 0.140 & 0.316 & 0.242 & 0.188 \\
SE$^*$ & 0.250 & 0.185 & 0.137 & 0.327 & 0.246 & 0.191 & 0.245 & 0.181 & 0.134 & 0.321 & 0.240 & 0.186 \\
\cmidrule(lr){1-13}
\% Excl & 5.035 & 7.816 & 11.25 & 5.035 & 7.816 & 11.25 & 5.035 & 7.816 & 11.25 & 5.035 & 7.816 & 11.25 \\
$r_n$ & 1.395 & 1.518 & 1.623 & 1.395 & 1.518 & 1.623 & 1.395 & 1.518 & 1.623 & 1.395 & 1.518 & 1.623 \\
$\hat\theta$ & 1.518 & 1.770 & 2.089 & 3.406 & 3.723 & 4.119 & 1.432 & 1.650 & 1.927 & 3.288 & 3.563 & 3.907 \\
$k$ & 63 & 100 & 159 & 63 & 100 & 159 & 63 & 100 & 159 & 63 & 100 & 159 \\
\bottomrule
\end{tabular}}
\label{tmain}
\end{table}

\begin{table}[ht]
\centering
\caption{Robustness results}
\resizebox{\columnwidth}{!}{%
\begin{tabular}{lrrrrrrrrrrrr}
\toprule
 & \multicolumn{6}{c}{$c=0.8$} & \multicolumn{6}{c}{$c=1.2$} \\
 \cmidrule(lr){2-7} \cmidrule(lr){8-13}
 & \multicolumn{3}{c}{Indirect Effect} & \multicolumn{3}{c}{Overall Effect} & \multicolumn{3}{c}{Indirect Effect} & \multicolumn{3}{c}{Overall Effect} \\
 \cmidrule(lr){2-4} \cmidrule(lr){5-7} \cmidrule(lr){8-10} \cmidrule(lr){11-13}
$n$ & 500 & 1000 & 2000 & 500 & 1000 & 2000 & 500 & 1000 & 2000 & 500 & 1000 & 2000 \\
\midrule
Bias & 0.092 & 0.085 & 0.083 & 0.102 & 0.092 & 0.090 & 0.044 & 0.035 & 0.033 & 0.050 & 0.038 & 0.036 \\
CI & 0.932 & 0.935 & 0.936 & 0.933 & 0.940 & 0.932 & 0.946 & 0.951 & 0.957 & 0.940 & 0.952 & 0.953 \\
CI SE$^*$ & 0.944 & 0.935 & 0.929 & 0.952 & 0.950 & 0.943 & 0.960 & 0.960 & 0.954 & 0.959 & 0.961 & 0.957 \\
SE & 0.252 & 0.195 & 0.152 & 0.330 & 0.256 & 0.201 & 0.267 & 0.213 & 0.171 & 0.346 & 0.274 & 0.219 \\
SE$^*$ & 0.256 & 0.194 & 0.146 & 0.335 & 0.255 & 0.200 & 0.272 & 0.212 & 0.164 & 0.353 & 0.274 & 0.218 \\
\% Excl & 2.300 & 3.800 & 5.700 & 2.300 & 3.800 & 5.700 & 9.300 & 13.600 & 18.500 & 9.300 & 13.600 & 18.500 \\
$r_n$ & 1.116 & 1.215 & 1.299 & 1.116 & 1.215 & 1.299 & 1.674 & 1.822 & 1.948 & 1.674 & 1.822 & 1.948 \\
$\hat\theta_Q$ & 1.490 & 1.740 & 2.057 & 3.376 & 3.690 & 4.085 & 1.538 & 1.790 & 2.108 & 3.427 & 3.744 & 4.139 \\
$k$ & 63 & 100 & 159 & 63 & 100 & 159 & 63 & 100 & 159 & 63 & 100 & 159 \\
\bottomrule
\end{tabular}}
\label{trobust}
\end{table}

We present results for the indirect and total effects using 5000 simulation draws where within each draw, we redraw potential outcomes and recompute the design-based estimand. In \autoref{tmain}, the ``Spatial Interference'' columns correspond to the outcome model described above, whereas the ``Partial Interference'' columns redefine $w_{ij}=0$ if $i,j$ lie in different clusters to eliminate cross-cluster interference. The ``CI'' rows report the coverage of 95-percent CIs using the indicated standard errors. For our estimator, the ``SE'' row corresponds to standard errors obtained from our variance estimator \eqref{hatsigma}, while for difference in means, it corresponds to conventional cluster-robust SEs. The ``SE$^*$'' rows are the true superpopulation standard errors obtained by taking the standard deviation of the estimator across the simulation draws. As such, SE should be consistent for SE$^*$ rather than conservative, while the CIs have asymptotically conservative coverage for the design-based estimand. Finally, row ``\% Excl'' is the percentage of units that are not well surrounded. 

The results are consistent with the theory. As $n$ grows, the biases of our estimators shrink, while coverage tends to or exceeds the nominal level. The bias of difference in means is more than twice that of our estimators under spatial interference, resulting in severe undercoverage even with the true superpopulation SEs. While the SEs are smaller than those of our estimators, this is not by a significant amount, and the efficiency advantage comes at a large cost to bias under spatial interference. 

\autoref{trobust} presents results for our estimator under spatial interference, but we multiply \eqref{kappa} by $c \in \{0.8,1.2\}$. This is to assess robustness and illustrate a bias-variance trade-off in the choice of $r_n$. The $c=0.8$ columns show that this results in about half the proportion of excluded units relative to \autoref{tmain}. As a result, the bias is higher, resulting in undercoverage. The $c=1.2$ columns show that the proportion of excluded units a little less than doubles. The bias is lower, and as a result, the probability of coverage is higher. On the other hand, the variance increases, as can be seen in the standard error columns.

\section{Empirical Application}\label{sapp}

We apply our estimator to data from the unconditional cash transfer experiment mentioned in \autoref{eegger}. In the experiment, households eligible for the transfer (the treatment) live in homes with thatched roofs, which is a proxy for poverty. Houses are grouped in villages, which are grouped in ``sublocations.'' See Figure A.2 of \cite{egger2022general} for a map of the study area. The CRT randomizes sublocations (the clusters) into treatment with probability $q=0.5$. Within treatment (control) sublocations, villages are randomized into treatment with probability $p_1=2/3$ ($p_0=1/3$). Within treated villages, all eligible households receive cash transfers totaling 1000 USD, which is about 75 percent of average annual household spending. Sublocations contain 7.8 villages on average (SD 3.9). 

\cite{egger2022general} study the effect of the cash transfers on the following household-level outcomes, which can be grouped into three categories: (1) annualized expenditures on consumption (of food and other purchases described in their footnote 31), non-durables, food alone, temptation goods, and durables; (2) asset stocks, housing value, and land value; and (3) annualized household income, transfers, taxes paid, profits, and wage earnings. The authors report model-assisted estimates of causal effects at the household level. Our analysis will be entirely design-based but at the village level. We aggregate household outcomes to the village level by averaging. 

\cite{egger2022general} estimate the effects of the transfers on the population of eligible households using two main specifications described in their \S 3.2. Their ``RF'' (reduced form) specification is an OLS regression of an outcome on village- and sublocation-level treatment indicators and covariates. They report the coefficient on the village-level indicator, which corresponds to a model-assisted estimate of a village-level direct effect $\theta^*(1,0,2/3,2/3)$ among eligible households.

However, the authors note the potential for interference across sublocations (their quote in \autoref{eegger}). For this reason, their preferred specification is the following ``IV'' (instrumental variables) regression. The main regressors are the amount of cash transferred per capita to the household's village and the amount transferred to neighboring villages within a band of $r-2$ to $r$ km from the ego's village for a range of $r$ values. The corresponding instruments are respectively an indicator for the ego's village being treated and the share of eligible households in the band assigned to treatment. They use a BIC criterion to select the maximum range of $r$, which is 2 km. Using these estimates, the authors compute a model-assisted estimate of the total effect $\theta^*(1,0,2/3,0)$. This compares saturation levels of two-thirds and zero, which is nonparametrically unidentified since control villages have a saturation level of one-third. We instead report our estimates of the total effect $\theta^*(1,0,2/3,1/3)$.

\autoref{tapp} reports the results for a subset of the outcomes, and \autoref{tapp1.5} in the supplementary appendix reports the remainder. Both tables choose $r_n$ in $\hat\theta$ according to \eqref{kappa}, which results in $r_n=1.6$ and 39.66 percent of units not well surrounded. In \autoref{sappSA} we describe how we construct cluster radii $R_j$ used in this formula. The $\hat\theta^+$ columns correspond to difference-in-means estimates with clustered standard errors. 

We find that the difference-in-means estimates of the direct and total effects are comparable to the RF and IV estimates, respectively, despite the distinctions outlined above. Our estimators find larger direct and total effects. Decomposing the total effect into direct and indirect effects, we find that the latter are substantially smaller in magnitude with large standard errors. Compared to difference in means, our estimates tend to be larger in magnitude but with larger standard errors due to the restriction to well-surrounded units.

\begin{table}[ht]
\centering
\caption{Effects on eligibles}
\resizebox{\columnwidth}{!}{%
\begin{threeparttable}
\begin{tabular}{lrrrrrrrr}
\toprule
{} & \multicolumn{2}{r}{Direct Effect} & \multicolumn{2}{r}{Indirect Effect} & \multicolumn{2}{r}{Total Effect} & \multicolumn{2}{r}{Egger et al.} \\
{} & {$\hat\theta$} & {$\hat\theta^+$} & {$\hat\theta$} & {$\hat\theta^+$} & {$\hat\theta$} & {$\hat\theta^+$} & {RF} & {IV} \\
\midrule
consumption & 390.70 & 253.93 & 9.51 & 64.43 & 400.21 & 318.37 & 293.59 & 338.57 \\
 & {\footnotesize (109.19)} & {\footnotesize (76.32)} & {\footnotesize (155.56)} & {\footnotesize (103.07)} & {\footnotesize (158.01)} & {\footnotesize (102.18)} & {\footnotesize (60.11)} & {\footnotesize (109.38)} \\
non-durable & 248.71 & 151.99 & 43.18 & 61.08 & 291.90 & 213.07 & 187.65 & 227.2 \\
 & {\footnotesize (104.20)} & {\footnotesize (66.39)} & {\footnotesize (139.26)} & {\footnotesize (96.22)} & {\footnotesize (140.41)} & {\footnotesize (94.55)} & {\footnotesize (58.59)} & {\footnotesize (99.63)} \\
assets & 249.58 & 180.43 & -35.18 & -11.43 & 214.40 & 168.99 & 178.78 & 183.38 \\
 & {\footnotesize (59.26)} & {\footnotesize (39.32)} & {\footnotesize (104.75)} & {\footnotesize (66.00)} & {\footnotesize (98.33)} & {\footnotesize (59.98)} & {\footnotesize (24.66)} & {\footnotesize (44.26)} \\
housing & 422.82 & 376.36 & 18.77 & 31.02 & 441.60 & 407.38 & 376.92 & 477.29 \\
 & {\footnotesize (49.74)} & {\footnotesize (31.12)} & {\footnotesize (93.55)} & {\footnotesize (51.73)} & {\footnotesize (86.81)} & {\footnotesize (48.74)} & {\footnotesize (26.37)} & {\footnotesize (38.8)} \\
income & 132.89 & 95.02 & 64.90 & 21.01 & 197.80 & 116.03 & 79.43 & 135.7 \\
 & {\footnotesize (120.95)} & {\footnotesize (58.97)} & {\footnotesize (159.60)} & {\footnotesize (91.71)} & {\footnotesize (154.70)} & {\footnotesize (87.36)} & {\footnotesize (43.8)} & {\footnotesize (92.1)} \\
earnings & 59.35 & 45.70 & 22.31 & 6.46 & 81.67 & 52.16 & 42.43 & 73.66 \\
 & {\footnotesize (78.90)} & {\footnotesize (37.67)} & {\footnotesize (123.04)} & {\footnotesize (66.82)} & {\footnotesize (119.93)} & {\footnotesize (60.62)} & {\footnotesize (32.23)} & {\footnotesize (60.82)} \\
\bottomrule
\end{tabular}
\begin{tablenotes}[para,flushleft]
  \footnotesize 653 villages (units), 84 sublocations (clusters). Standard errors are in parentheses. Column RF (IV) is the reduced form (IV) estimate of the direct (total) effect from Table I, column 1 (2) of \cite{egger2022general}, $\hat\theta$ is our estimate, and $\hat\theta^+$ is difference in means. Our estimates use $r_n=1.6$, which results in 39.66 percent of units not being well surrounded.
\end{tablenotes}
\end{threeparttable}}
\label{tapp}
\end{table}

\autoref{tapp2} in the supplementary appendix reports results for $r_n=2$, resulting 57.89 percent of units not being well surrounded. This choice of $r_n$ coincides with the largest range of $r$ selected by the BIC procedure of \cite{egger2022general}. Our estimates and standard errors become larger still in magnitude relative to difference in means, but the results are qualitatively similar.

To estimate spillover effects, \cite{egger2022general} rerun their IV specification using only non-eligible households, which did not receive any transfers (their \S 3.3). In \autoref{tappn} of the supplementary appendix, we compare their results with design-based estimates of the overall effect $\theta^*(\emptyset, \emptyset, 2/3, 1/3)$ on the population of non-eligibles. The effect sizes of our estimators and those of difference in means are fairly similar in magnitude to their IV results, but the standard errors are large. Combined with the results in \autoref{tapp}, we ultimately find strong direct effects of the cash transfers but weaker evidence for spillover effects compared to \cite{egger2022general}.

\section{Conclusion}\label{sconclude}

When interference occurs across clusters, conventional analyses of CRTs suffer from bias induced by units near cluster boundaries. To reduce bias at the analysis stage, we provide in \autoref{sest} an estimator $\hat\theta$ that improves upon the fried-egg design by excluding from estimation units that are not surrounded by clusters assigned to the same treatment arm. This may be employed as a robustness check for difference in means. To reduce bias at the design stage, we propose a rate-optimal formula for the number of clusters $k$ in \eqref{optk}. Unlike the standard power analysis that assumes partial interference, our choice balances power against the need to reduce bias and accounts for the density of units in the spatial region. Given $k$, we suggest automating cluster construction using $k$-medoids and prove that the resulting clusters are balanced and globular, thereby approximately minimizing the number of units near boundaries. We also provide valid design-based standard errors.

Under the conventional superpopulation, partial interference framework, power calculations for choosing $k$ require prior knowledge of the intracluster correlation coefficient (ICC). Under our design-based framework, the optimal choice of $k$ requires knowledge of $\gamma$, the speed at which interference decays with distance, rather than the ICC. Absent domain knowledge, one can conservatively set $\gamma$ to the dimension of the spatial region. We discuss in \autoref{sgamma} how to obtain less conservative estimates via modeling or prior data. 


\newpage
\part{Supplementary Appendix}

\makeatletter
\@addtoreset{section}{part}
\makeatother
\renewcommand{\thesection}{SA.\arabic{section}} 
\setcounter{section}{0}
\numberwithin{equation}{section} 
\numberwithin{table}{section}

\section{\texorpdfstring{$k$}{k}-Medoids}\label{skmed}

The $k$-medoids algorithm selects a set of $k$ units in $\N_n$ -- the {\em medoids} or cluster centroids -- to minimize the total distance between all units and their nearest medoids. This creates medoids that are spatially well separated, as shown in \autoref{lsep}. Units are then grouped into clusters based on their closest medoids. 

Whereas $k$-means allows centroids to be any element of $\X$, $k$-medoids constrains the centroids to the data $\N_n \subseteq \X$, but in practice the algorithms tend to produce similar output. Our theoretical results pertain to the implementation in \autoref{amedoids}, the ``partitioning around medoids'' algorithm. This is simple to understand, though it has complexity $O(k(n-k)^2)$ compared to an $O(n^2)$ runtime under the fastest known implementation \citep{schubert2021fast}.

For any set of candidate medoids $\mathcal{M} \subseteq \N_n$, let $\text{cost}(\mathcal{M}) = \sum_{i \in \N_n} \min_{m \in \mathcal{M}} \rho(i,m)$, the total distance between units and their nearest medoids.

\begin{secalgo}[$k$-Medoids]\label{amedoids} \hfill
\begin{enumerate}
  \item Initialize an arbitrary set of $k$ medoids $\mathcal{M} = \{m_j\}_{j=1}^k \subseteq \N_n$. Given any $\mathcal{M}$, the associated set of clusters is $\C_n = \{C_j\}_{j=1}^k$, where $C_j = \{i \in \N_n\colon m_j = \argmin_{m \in \mathcal{M}} \rho(i,m)\}$.
  \item While there exist $m \in \mathcal{M}$ and $o \in \N_n\backslash\mathcal{M}$ such that $\text{cost}(\mathcal{M} \backslash \{m\} \cup \{o\}) < \text{cost}(\mathcal{M})$, replace $m$ with the $o$ that minimizes $\text{cost}(\mathcal{M} \backslash \{m\} \cup \{o\})$.
  \item Output $\C_n$.
\end{enumerate}
\end{secalgo}

\noindent In other words, this chooses the set of medoids $\mathcal{M}$ that minimizes cost by iteratively swapping out a candidate medoid with a better unit that reduces cost.

\section{Extensions}\label{sext}

\subsection{Heterogeneous Density}\label{shetdens}

As discussed in \autoref{sinter}, \autoref{aspace}(a) restricts the extent to which density can vary across the study region. To accommodate larger variation in density, such as when the study region encompasses both urban and rural areas, we suggest first partitioning the region $\N_n$ into subregions that are relatively homogeneous in density. This may be done manually, for instance using cartographic and demographic information. It can also be done using a density-based clustering algorithm; see \cite{bhattacharjee2021survey} and especially \cite{kriegel2011density} for surveys of this literature. 

Call the density-homogeneous subregions $\mathcal{S}_1, \ldots, \mathcal{S}_m$. For each $\mathcal{S}_j$ the researcher can compute the optimal number of clusters $k_j$ given in \eqref{optk}, subdivide $\mathcal{S}_j$ into $k_j$ clusters $C_{j1}, \ldots, C_{jk_j}$, say using $k$-medoids, and cluster-randomize across the collection of all clusters $\{C_{j\ell}\colon \ell=1,\ldots k_j, j=1,\ldots, m\}$. Lastly, we recommend modifying the formula for $r_n$ since cluster radii may vary substantially across subregions with differing densities. In the definition of $T_{ti}$, replace $r_n$ with half the median cluster radius among clusters in the subregion containing $i$.

\subsection{Determining \texorpdfstring{$\gamma$}{gamma}}\label{sgammax}

Estimating the following spatial moving average model with prior data can be a starting point for determining $\gamma$:
\begin{equation*}
  Y_i = \sum_{j\in\mathcal{N}_n} \rho(i,j)^{-\eta}(\alpha + D_j\beta_j + \varepsilon_j).
\end{equation*}

\noindent This satisfies \autoref{aani} with $\gamma = \eta-2$ \citep[][Proposition 1]{leung2022rate}, so $\gamma$ can be backed out from an estimate of $\eta$.

It would also be useful to develop designs for nonparametrically estimating $\gamma$. A preliminary proposal is the following. Given $k$ clusters, say constructed using $k$-medoids, denote by $m_j$ the centroid of a cluster $C_j$. Randomize clusters to the following $T$ treatment arms. In the 0th arm, we assign all units to control. In the $t$th treatment arm for $t\geq 1$, we only assign units $i$ for which $\rho(i,m) \in (t, t+1]$ to treatment, where $m$ is the centroid of the cluster in question. That is, we treat only units in a ring at a certain distance from the centroid. Let $\hat\theta_t$ denote the difference-in-means estimate comparing average outcomes of units near centroids of clusters assigned to arm $t\geq 1$ with those assigned to arm 0. Then under \autoref{aani}, $\hat\theta_t$ should decay like $t^{-\gamma}$, producing a sort of ``causal covariogram.'' We may then regress $\log \hat\theta_t$ on $\log t$ to estimate $-\gamma$.

\subsection{Bias-Aware CIs}\label{sBA}

Recall from \autoref{sasymp} that the ``undersmoothed'' design chooses $k_n$ smaller than the optimal rate, which is why \eqref{optk} uses a strict lower bound $\tilde\gamma$ in place of $\gamma$. By analogy to nonparametric regression, an alternative is to choose $k$ rate-optimally using \eqref{optk} with $\gamma$ in place of $\tilde\gamma$ and use the following ``bias-aware'' confidence interval (CI) in place of \eqref{undersmoothCI}:
\begin{equation}
  \hat\theta \pm \big( 4c\sqrt{k}r_n^{-\gamma} + 1.96 \cdot \hat\sigma k^{-1/2} \big), \label{biasawareCI}
\end{equation}

\noindent where $c$ is given in \autoref{aani}. \cite{faridani2023rate} propose a similar bias-aware CI for the GATE. Compared to \eqref{undersmoothCI}, \eqref{biasawareCI} should have better finite-sample coverage, although its implementation requires prior knowledge of both $\gamma$ and $c$. 

In addition to knowledge of these parameters, suppose the researcher has preliminary consistent estimates of $\hat\theta$ and $\hat\sigma^2$, say from a pilot study (which is more plausible in a superpopulation setup). Then following \cite{armstrong2018optimal}, they can pick $k$ to minimize the length of the bias-aware CI \eqref{biasawareCI}. Note that $r_n$ is an implicit function of $k$, which can be traced out using grid search.

The idea behind \eqref{biasawareCI} is that, by \autoref{tclt},
\begin{equation*}
  \sigma_n^{-1} \sqrt{k_n} \big(\hat\theta - \theta^*\big) = \underbrace{\sigma_n^{-1} \sqrt{k_n} \big(\hat\theta - \bar{\theta}\big)}_{\dlimarrow \N(0,1)} + \sigma_n^{-1} \underbrace{\sqrt{k_n}(\bar{\theta} - \theta^*)}_{\mathcal{B}_n^*}.
\end{equation*}

\noindent As shown in the proof of \autoref{tQ}, specifically the argument preceding \eqref{fbi},
\begin{equation*}
  \abs{\mathcal{B}_n^*} \leq 4c\sqrt{k_n}r_n^{-\gamma}.
\end{equation*}

\noindent This provides a worst-case bound on the bias that \eqref{biasawareCI} incorporates.

\section{Variance Estimators}\label{svar}

We next discuss how $\hat\sigma^2$ relates to variance estimators in the literature. Define $\bm{A}(1)$ ($\bm{A}(2)$) as the symmetric, $n\times n$ matrix with $ij$th entry $A_{ij}(1)$ ($A_{ij}(2)$), and recall that $\hat\sigma^2(2)$ is the cluster-robust variance estimator, while $\hat\sigma^2(1)$ is analogous to the \cite{leung2022rate} variance estimator. The proof of \autoref{tvar} shows that the cluster-robust variance estimator can be asymptotically decomposed as $\hat\sigma^2(2) = \sigma_n^2 + \mathcal{B}_n + o_p(1)$ where
\begin{equation}
  \mathcal{B}_n = \frac{k_n}{n^2} \bm{\mu}' \bm{A}(2) \bm{\mu} \label{Bn}
\end{equation}

\noindent and $\bm{\mu}$ is the $n$-dimensional vector with $i$th component $(\mu_{1i}-\mu_{0i}) - (\mu_1 - \mu_0)$ for $\mu_{ti} = \E[Y_i \mid T_{ti}=1]$. Because $\bm{A}(2)$ is block-diagonal, it is positive semidefinite, so $\mathcal{B}_n \geq 0$ for any $n$. Hence, $\hat\sigma^2(2)$ is asymptotically conservative.

The proof further shows that $\hat\sigma^2(1) = \sigma_n^2 + \mathcal{B}_n(1) + o_p(1)$ where $\mathcal{B}_n(1)$ is obtained by replacing $\bm{A}(2)$ with $\bm{A}(1)$ in \eqref{Bn}. This replacement adds nonzero off-diagonal elements to $\bm{A}(2)$, so $\bm{A}(1)$ is not block-diagonal and hence not guaranteed to be positive semidefinite. Accordingly, Leung imposes additional conditions to show that $\mathcal{B}_n(1) \plimarrow c \geq 0$ in the superpopulation. We find, however, that $\mathcal{B}_n(1) = \mathcal{B}_n + o_p(1)$ without any conditions on the superpopulation, so in fact $\hat\sigma^2(1)$ is asymptotically conservative in a purely design-based setup. \cite{faridani2023rate} provide a different approach to conservative inference for the GATE based on bounding $\mathcal{B}_n$.

The positive-semidefiniteness of the ``kernel'' $\bm{A}(2)$ results in both a non-negative variance estimator for any $n$ and asymptotic conservativeness. This mirrors the corresponding insight for HAC variance estimators, that positive semidefinite kernels ensure conservativeness in finite-population models. This fact was first pointed out by \cite{leung2019causal} and has since been exploited by other papers. (In his case of network-dependent data, the difficulty with using positive-semidefinite HAC kernels is that they are sloped and tend to severely over-reject in finite samples, which is why \cite{leung2022causal} recommends use of the uniform kernel even though it is not positive semidefinite. For spatial data, a variety of positive semidefinite HAC kernels exist, and these tend to have less severe issues with over-rejection relative to the network case.)

\cite{hudgens2008toward} propose variance estimators for difference-in-means type estimators under partial interference. Their theory relies on an additional stratified interference assumption, which says that potential outcomes only depend on the ego's treatment assignment and the proportion of treated units in the ego's cluster, in which case $Y_i = Y_i(D_i, p)$ (since they assume complete randomization). Because $p$ is a constant, the dependence structure is the same as under no interference since outcomes within a cluster are only correlated if treatment assignments are. The variance estimators proposed by \cite{hudgens2008toward} heavily rely on this structure. In contrast, we do not impose partial, let alone stratified, interference, so our setting features both within- and cross-cluster dependence in outcomes. It is therefore critical to account for additional covariance terms that are absent in the \cite{hudgens2008toward} variance formula to avoid anti-conservativeness. Our estimator does so through the terms involving $A_{ij}(u)$ for $i\neq j$.

In the econometric literature, the standard approach under partial interference is clustering standard errors \citep{baird2018optimal}. These allow for arbitrary within-cluster dependence and are valid even in the absence of stratified interference. A new finding of our paper is that clustered standard errors are also valid in a design-based setting with cross-cluster interference. However, we suggest combining them with $\hat\sigma^2(1)$ to better capture second-order covariance terms, as discussed in \autoref{sasymp}.

\section{Empirical Application}\label{sappSA}

We define $R_j$ from \eqref{kappa} as half the largest distance between households in any pair of villages within cluster $j$. To compute this, we utilize supplemental data provided by Dennis Egger on distances between village centroids and publicly available data from \cite{egger2022general} to estimate village radii. We first compute the largest distance between the centroids of any pair of villages $a,b$ within a given cluster, which we denote by $\Delta_{a,b}$. Since this does not account for village size, we estimate for each village $a$ its radius $\delta_a$, which is the furthest distance between a household in the village and its centroid. This data is publicly available in 1 km increments. For each cluster $C_j$, we define $R_j = \max_{a,b \in C_j} (\Delta_{a,b} + \delta_a + \delta_b)/2$. The average radius across clusters is 3.12 with a standard deviation of 0.81. 

\begin{table}[ht]
\centering
\caption{Effects on eligibles, $r_n=1.6$}
\resizebox{\columnwidth}{!}{%
\begin{threeparttable}
\begin{tabular}{lrrrrrrrr}
\toprule
{} & \multicolumn{2}{r}{Direct Effect} & \multicolumn{2}{r}{Indirect Effect} & \multicolumn{2}{r}{Total Effect} & \multicolumn{2}{r}{Egger et al.} \\
{} & {$\hat\theta$} & {$\hat\theta^+$} & {$\hat\theta$} & {$\hat\theta^+$} & {$\hat\theta$} & {$\hat\theta^+$} & {RF} & {IV} \\
\midrule
food & 110.02 & 67.33 & 34.11 & 60.08 & 144.14 & 127.42 & 72.04 & 133.84 \\
 & {\footnotesize (63.00)} & {\footnotesize (47.31)} & {\footnotesize (81.92)} & {\footnotesize (62.21)} & {\footnotesize (84.04)} & {\footnotesize (58.73)} & {\footnotesize (36.96)} & {\footnotesize (63.99)} \\
temptation & -5.27 & -0.21 & 12.17 & 5.07 & 6.90 & 4.85 & 6.55 & 5.91 \\
 & {\footnotesize (16.32)} & {\footnotesize (9.39)} & {\footnotesize (19.07)} & {\footnotesize (10.79)} & {\footnotesize (14.79)} & {\footnotesize (9.95)} & {\footnotesize (5.79)} & {\footnotesize (8.82)} \\
durable & 123.08 & 87.82 & -23.94 & 9.53 & 99.14 & 97.35 & 95.09 & 109.01 \\
 & {\footnotesize (27.92)} & {\footnotesize (23.78)} & {\footnotesize (42.61)} & {\footnotesize (17.99)} & {\footnotesize (46.97)} & {\footnotesize (20.43)} & {\footnotesize (12.64)} & {\footnotesize (20.24)} \\
land & -47.09 & -32.65 & 307.03 & 199.58 & 259.93 & 166.93 & 51.28 & 158.47 \\
 & {\footnotesize (422.26)} & {\footnotesize (271.27)} & {\footnotesize (608.38)} & {\footnotesize (376.02)} & {\footnotesize (538.61)} & {\footnotesize (300.98)} & {\footnotesize (186.22)} & {\footnotesize (260.91)} \\
transfers & 11.68 & 0.06 & -11.56 & -1.54 & 0.11 & -1.47 & -1.68 & -7.43 \\
 & {\footnotesize (11.86)} & {\footnotesize (7.20)} & {\footnotesize (22.84)} & {\footnotesize (11.76)} & {\footnotesize (21.94)} & {\footnotesize (10.85)} & {\footnotesize (6.81)} & {\footnotesize (13.06)} \\
tax & -1.52 & 0.97 & 1.27 & 2.02 & -0.25 & 3.00 & 1.94 & -0.09 \\
 & {\footnotesize (1.91)} & {\footnotesize (1.36)} & {\footnotesize (3.92)} & {\footnotesize (2.32)} & {\footnotesize (3.71)} & {\footnotesize (2.22)} & {\footnotesize (1.28)} & {\footnotesize (2.02)} \\
profits & 48.70 & 41.40 & 31.90 & 19.85 & 80.60 & 61.25 & 26.24 & 35.85 \\
 & {\footnotesize (50.16)} & {\footnotesize (34.97)} & {\footnotesize (73.30)} & {\footnotesize (45.79)} & {\footnotesize (68.68)} & {\footnotesize (44.39)} & {\footnotesize (23.67)} & {\footnotesize (47.66)} \\
\bottomrule
\end{tabular}
\begin{tablenotes}[para,flushleft]
  \footnotesize 653 villages (units), 84 sublocations (clusters). Standard errors are in parentheses. Column RF (IV) is the reduced form (IV) estimate of the overall effect from Table I, column 1 (2) of \cite{egger2022general}, $\hat\theta$ is our estimate, and $\hat\theta^+$ is difference in means. Our estimates use $r_n=1.6$, which results in 39.66 percent of units not being well surrounded.
\end{tablenotes}
\end{threeparttable}}
\label{tapp1.5}
\end{table}

\begin{table}[ht]
\centering
\caption{Effects on eligibles, $r_n=2$}
\resizebox{\columnwidth}{!}{%
\begin{threeparttable}
\begin{tabular}{lrrrrrrrr}
\toprule
{} & \multicolumn{2}{r}{Direct Effect} & \multicolumn{2}{r}{Indirect Effect} & \multicolumn{2}{r}{Total Effect} & \multicolumn{2}{r}{Egger et al.} \\
{} & {$\hat\theta$} & {$\hat\theta^+$} & {$\hat\theta$} & {$\hat\theta^+$} & {$\hat\theta$} & {$\hat\theta^+$} & {RF} & {IV} \\
\midrule
consumption & 458.57 & 253.93 & 63.41 & 64.43 & 521.99 & 318.37 & 293.59 & 338.57 \\
 & {\footnotesize (112.05)} & {\footnotesize (76.32)} & {\footnotesize (185.15)} & {\footnotesize (103.07)} & {\footnotesize (185.76)} & {\footnotesize (102.18)} & {\footnotesize (60.11)} & {\footnotesize (109.38)} \\
non-durable & 302.79 & 151.99 & 92.02 & 61.08 & 394.81 & 213.07 & 187.65 & 227.2 \\
 & {\footnotesize (110.70)} & {\footnotesize (66.39)} & {\footnotesize (171.83)} & {\footnotesize (96.22)} & {\footnotesize (167.98)} & {\footnotesize (94.55)} & {\footnotesize (58.59)} & {\footnotesize (99.63)} \\
food & 141.56 & 67.33 & 69.97 & 60.08 & 211.54 & 127.42 & 72.04 & 133.84 \\
 & {\footnotesize (70.48)} & {\footnotesize (47.31)} & {\footnotesize (107.03)} & {\footnotesize (62.21)} & {\footnotesize (103.43)} & {\footnotesize (58.73)} & {\footnotesize (36.96)} & {\footnotesize (63.99)} \\
temptation & -1.27 & -0.21 & 7.51 & 5.07 & 6.23 & 4.85 & 6.55 & 5.91 \\
 & {\footnotesize (21.22)} & {\footnotesize (9.39)} & {\footnotesize (20.02)} & {\footnotesize (10.79)} & {\footnotesize (16.87)} & {\footnotesize (9.95)} & {\footnotesize (5.79)} & {\footnotesize (8.82)} \\
durable & 135.67 & 87.82 & -21.12 & 9.53 & 114.55 & 97.35 & 95.09 & 109.01 \\
 & {\footnotesize (28.50)} & {\footnotesize (23.78)} & {\footnotesize (45.92)} & {\footnotesize (17.99)} & {\footnotesize (50.43)} & {\footnotesize (20.43)} & {\footnotesize (12.64)} & {\footnotesize (20.24)} \\
assets & 251.57 & 180.43 & -1.78 & -11.43 & 249.79 & 168.99 & 178.78 & 183.38 \\
 & {\footnotesize (62.15)} & {\footnotesize (39.32)} & {\footnotesize (121.35)} & {\footnotesize (66.00)} & {\footnotesize (112.25)} & {\footnotesize (59.98)} & {\footnotesize (24.66)} & {\footnotesize (44.26)} \\
housing & 425.28 & 376.36 & -19.37 & 31.02 & 405.90 & 407.38 & 376.92 & 477.29 \\
 & {\footnotesize (48.04)} & {\footnotesize (31.12)} & {\footnotesize (101.37)} & {\footnotesize (51.73)} & {\footnotesize (98.87)} & {\footnotesize (48.74)} & {\footnotesize (26.37)} & {\footnotesize (38.8)} \\
land & -381.97 & -32.65 & 667.52 & 199.58 & 285.54 & 166.93 & 51.28 & 158.47 \\
 & {\footnotesize (533.21)} & {\footnotesize (271.27)} & {\footnotesize (682.49)} & {\footnotesize (376.02)} & {\footnotesize (594.10)} & {\footnotesize (300.98)} & {\footnotesize (186.22)} & {\footnotesize (260.91)} \\
income & 195.64 & 95.02 & 142.34 & 21.01 & 337.98 & 116.03 & 79.43 & 135.7 \\
 & {\footnotesize (133.25)} & {\footnotesize (58.97)} & {\footnotesize (170.94)} & {\footnotesize (91.71)} & {\footnotesize (171.01)} & {\footnotesize (87.36)} & {\footnotesize (43.8)} & {\footnotesize (92.1)} \\
transfers & 3.87 & 0.06 & -12.72 & -1.54 & -8.85 & -1.47 & -1.68 & -7.43 \\
 & {\footnotesize (14.56)} & {\footnotesize (7.20)} & {\footnotesize (24.96)} & {\footnotesize (11.76)} & {\footnotesize (23.93)} & {\footnotesize (10.85)} & {\footnotesize (6.81)} & {\footnotesize (13.06)} \\
tax & 0.11 & 0.97 & 0.09 & 2.02 & 0.20 & 3.00 & 1.94 & -0.09 \\
 & {\footnotesize (1.95)} & {\footnotesize (1.36)} & {\footnotesize (4.30)} & {\footnotesize (2.32)} & {\footnotesize (4.63)} & {\footnotesize (2.22)} & {\footnotesize (1.28)} & {\footnotesize (2.02)} \\
profits & 32.37 & 41.40 & 32.10 & 19.85 & 64.48 & 61.25 & 26.24 & 35.85 \\
 & {\footnotesize (60.65)} & {\footnotesize (34.97)} & {\footnotesize (73.81)} & {\footnotesize (45.79)} & {\footnotesize (76.64)} & {\footnotesize (44.39)} & {\footnotesize (23.67)} & {\footnotesize (47.66)} \\
earnings & 134.09 & 45.70 & 98.88 & 6.46 & 232.97 & 52.16 & 42.43 & 73.66 \\
 & {\footnotesize (99.94)} & {\footnotesize (37.67)} & {\footnotesize (122.24)} & {\footnotesize (66.82)} & {\footnotesize (123.66)} & {\footnotesize (60.62)} & {\footnotesize (32.23)} & {\footnotesize (60.82)} \\
\bottomrule
\end{tabular}
\begin{tablenotes}[para,flushleft]
  \footnotesize 653 villages (units), 84 sublocations (clusters). Standard errors are in parentheses. Column RF (IV) is the reduced form (IV) estimate of the overall effect from Table I, column 1 (2) of \cite{egger2022general}, $\hat\theta$ is our estimate, and $\hat\theta^+$ is difference in means. Our estimates use $r_n=2$, which results in 57.89 percent of units not being well surrounded.
\end{tablenotes}
\end{threeparttable}}
\label{tapp2}
\end{table}

\begin{table}[ht]
\centering
\caption{Overall effect on noneligibles}
\begin{threeparttable}
\begin{tabular}{rrrrr}
\toprule
 & $\hat\theta$, $r_n{=}2$ & $\hat\theta$, $r_n{=}1.6$ & $\hat\theta^+$ & IV \\
\midrule
consumption & 247.68 & 228.51 & 211.08 & 334.77 \\
 & {\footnotesize (145.79)} & {\footnotesize (139.91)} & {\footnotesize (92.16)} & {\footnotesize (123.2)} \\
non-durable & 209.57 & 195.41 & 195.38 & 317.62 \\
 & {\footnotesize (140.05)} & {\footnotesize (135.67)} & {\footnotesize (87.14)} & {\footnotesize (119.76)} \\
food & 78.62 & 94.80 & 94.25 & 133.3 \\
 & {\footnotesize (81.01)} & {\footnotesize (76.56)} & {\footnotesize (48.29)} & {\footnotesize (58.56)} \\
temptation & 6.75 & 2.30 & 6.57 & -0.68 \\
 & {\footnotesize (7.35)} & {\footnotesize (7.03)} & {\footnotesize (5.24)} & {\footnotesize (6.5)} \\
durable & 30.56 & 22.79 & 8.51 & 8.44 \\
 & {\footnotesize (16.01)} & {\footnotesize (10.56)} & {\footnotesize (6.76)} & {\footnotesize (12.5)} \\
assets & 159.35 & 80.62 & 41.85 & 133.06 \\
 & {\footnotesize (127.86)} & {\footnotesize (123.09)} & {\footnotesize (68.68)} & {\footnotesize (78.33)} \\
housing & 111.72 & 198.86 & 340.65 & 80.65 \\
 & {\footnotesize (510.87)} & {\footnotesize (338.18)} & {\footnotesize (208.72)} & {\footnotesize (215.81)} \\
land & 1132.78 & 353.26 & 133.88 & 544.85 \\
 & {\footnotesize (1013.78)} & {\footnotesize (779.26)} & {\footnotesize (407.10)} & {\footnotesize (459.57)} \\
income & 250.09 & 167.14 & 143.48 & 224.96 \\
 & {\footnotesize (142.24)} & {\footnotesize (140.37)} & {\footnotesize (78.78)} & {\footnotesize (85.98)} \\
transfers & 12.45 & 9.48 & 13.73 & 8.85 \\
 & {\footnotesize (19.43)} & {\footnotesize (17.13)} & {\footnotesize (9.58)} & {\footnotesize (19.11)} \\
tax & -0.65 & 3.11 & 2.60 & 1.68 \\
 & {\footnotesize (3.92)} & {\footnotesize (2.72)} & {\footnotesize (1.63)} & {\footnotesize (2.02)} \\
profits & 80.27 & 92.67 & 88.17 & 36.37 \\
 & {\footnotesize (67.82)} & {\footnotesize (64.10)} & {\footnotesize (34.44)} & {\footnotesize (44.88)} \\
earnings & 145.84 & 43.22 & 33.88 & 182.63 \\
 & {\footnotesize (101.11)} & {\footnotesize (99.00)} & {\footnotesize (60.40)} & {\footnotesize (65.53)} \\
\bottomrule
\end{tabular}
\begin{tablenotes}[para,flushleft]
  \footnotesize $653$ villages (units), $84$ sublocations (clusters). Standard errors are in parentheses. Column IV is the IV estimate of the overall effect from Table I, column 3 of \cite{egger2022general}.
\end{tablenotes}
\end{threeparttable}
\label{tappn}
\end{table}

\section{Auxiliary Lemmas}

The first two lemmas establish properties of $k$-medoids clusters for $k=k_n$ possibly diverging. 

\begin{seclemma}[Separation]\label{lsep}
  Suppose clusters are generated by \autoref{amedoids}. Under \autoref{aspace}(a), there exists a positive sequence $\{\ell_n\}_{n\in\mathbb{N}}$ bounded away from zero such that for all $n$ sufficiently large and medoids $i_n,j_n$ generated by $k_n$-medoids, $\rho(i_n,j_n) \geq \ell_n (n/(k_n\xi_n))^{1/d}$.
\end{seclemma}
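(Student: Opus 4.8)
The plan is to leverage the local optimality of the output $\mathcal{M} = \{m_1,\dots,m_{k_n}\}$ of \autoref{amedoids}: at termination, $\text{cost}(\mathcal{M}\setminus\{m\}\cup\{o\}) \geq \text{cost}(\mathcal{M})$ for every medoid $m\in\mathcal{M}$ and every $o\in\N_n\setminus\mathcal{M}$. Write $\lambda_n = (n/(k_n\xi_n))^{1/d}$ for the target scale, let $C_j$ be the cluster with medoid $m_j$, and let $o^{\star}\in\argmax_{i\in\N_n}\min_{m\in\mathcal{M}}\rho(i,m)$ with $D^{\star} = \min_{m\in\mathcal{M}}\rho(o^{\star},m)$ the largest distance from a unit to its nearest medoid. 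Assume $k_n<n$ (so that some unit is not a medoid, hence $D^{\star}>0$ and $o^{\star}\notin\mathcal{M}$); the case $k_n=n$ is degenerate. The strategy is to show that if two medoids $i_n,j_n$ were closer than $\ell\lambda_n$ for a suitable constant $\ell$, then deleting $j_n$ and inserting $o^{\star}$ in its place would strictly lower the cost, contradicting termination.

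Before the swap argument I would record two consequences of \autoref{aspace}(a). First, since the balls $\{\N(m_j,D^{\star})\}_{j=1}^{k_n}$ cover $\N_n$, we get $n\leq k_n\max\{C\xi_n(D^{\star})^d,1\}$; because $k_n<n$, the maximum must equal $C\xi_n(D^{\star})^d$, which yields both $D^{\star}\geq C^{-1/d}\lambda_n$ and $|C_j|\leq|\N(m_j,D^{\star})|\leq C\xi_n(D^{\star})^d$ for every $j$. Second, every $i\in\N(o^{\star},D^{\star}/3)$ satisfies $\min_{m\in\mathcal{M}}\rho(i,m)\geq\rho(o^{\star},m)-\rho(o^{\star},i)\geq 2D^{\star}/3$ for each $m$, and $|\N(o^{\star},D^{\star}/3)|\geq\min\{C^{-1}\xi_n(D^{\star}/3)^d,n\}$; the minimum here cannot equal $n$, since that would force $\N(o^{\star},D^{\star}/3)=\N_n\ni m_1$, contradicting $\rho(o^{\star},m_1)\geq D^{\star}$.

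Now fix distinct medoids, relabel them $m_1=i_n$, $m_2=j_n$, and set $\delta=\rho(i_n,j_n)$. Partition $\N_n$ into $S_1=\N(o^{\star},D^{\star}/3)$, $S_2=C_2\setminus S_1$, and $S_3=\N_n\setminus(S_1\cup C_2)$. In the configuration $\mathcal{M}\setminus\{m_2\}\cup\{o^{\star}\}$, units in $S_1$ cost at most $\rho(\cdot,o^{\star})\leq D^{\star}/3$; units in $S_2\subseteq C_2$ can be reassigned to $m_1$ at cost at most $\rho(\cdot,m_2)+\delta=\min_{m\in\mathcal{M}}\rho(\cdot,m)+\delta$; and units in $S_3$ keep their original (still-present) nearest medoid. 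Summing these bounds and using $\min_{m\in\mathcal{M}}\rho(i,m)\geq 2D^{\star}/3$ on $S_1$,
\[
  \text{cost}\big(\mathcal{M}\setminus\{m_2\}\cup\{o^{\star}\}\big) \;\leq\; \text{cost}(\mathcal{M}) - |S_1|\,\frac{D^{\star}}{3} + |S_2|\,\delta .
\]
Local optimality then gives $|S_1|D^{\star}/3\leq|S_2|\delta\leq|C_2|\delta$, and substituting $|S_1|\geq C^{-1}\xi_n(D^{\star}/3)^d$ and $|C_2|\leq C\xi_n(D^{\star})^d$ cancels the $\xi_n$ and $(D^{\star})^d$ factors, leaving $\delta\geq D^{\star}/(3^{d+1}C^2)\geq C^{-1/d}3^{-(d+1)}C^{-2}\lambda_n$. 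Since $i_n,j_n$ were arbitrary, the lemma holds with $\ell_n$ equal to this constant.

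The main obstacle is not the swap computation itself but correctly navigating the truncations $\min\{\cdot,n\}$ and $\max\{\cdot,1\}$ in \autoref{aspace}(a): one must first rule out the degenerate regime in which $D^{\star}$ is comparable to the inter-unit spacing (equivalently $k_n$ near $n$), where the covering inequality becomes vacuous and clusters collapse to singletons. This is exactly why establishing $D^{\star}\geq C^{-1/d}\lambda_n$ and $\max_j|C_j|\leq C\xi_n(D^{\star})^d$ up front, under $k_n<n$, is essential. A secondary point is bookkeeping: verifying that reassigning $C_2$ to the nearby medoid $m_1$ rather than to the true second-nearest medoid still gives a valid upper bound on the post-swap cost, and that inserting $o^{\star}$ cannot increase the cost of units in $S_2\cup S_3$, so that the per-term bounds used above are legitimate.
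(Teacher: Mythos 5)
Your proof is correct. Both you and the paper exploit the same mechanism---at termination of \autoref{amedoids} no single swap reduces cost, so if two medoids were too close, swapping one of them for a point far from all medoids would strictly lower the cost---but the execution differs in a way worth noting. The paper locates the ``far point'' inside a cluster of size at least $n/k_n$ (whose radius is then $\succsim (n/(k_n\xi_n))^{1/d}$ by \autoref{aspace}(a)) and must split into two cases according to whether the cluster being deleted is itself of order $n/k_n$ or strictly larger, running the whole argument as an asymptotic contradiction against a hypothetical sequence $\ell_n \prec 1$. You instead anchor everything to the covering radius $D^{\star}$: the balls of radius $D^{\star}$ around the medoids cover $\N_n$, which pins down $D^{\star} \succsim (n/(k_n\xi_n))^{1/d}$ and simultaneously bounds every cluster by $C\xi_n(D^{\star})^d$, so the gain $\abs{S_1}D^{\star}/3$ and the loss $\abs{C_2}\delta$ are expressed in the same currency $(D^{\star})^d$ and the case analysis disappears. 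This buys an explicit, $n$-free constant $\ell_n = 3^{-(d+1)}C^{-2-1/d}$ and a more careful treatment of the $\min\{\cdot,n\}$ and $\max\{\cdot,1\}$ truncations in \autoref{aspace}(a), which the paper's proof elides. The only loose end is the case $k_n=n$, which you set aside as degenerate; it is covered in one line by noting that for distinct units $i,j$ we have $\abs{\N(i,\rho(i,j))}\geq 2$, so the upper bound in \autoref{aspace}(a) forces $C\xi_n\rho(i,j)^d\geq 2$, and hence any two units---in particular any two medoids---are at least $(2/C)^{1/d}(n/(k_n\xi_n))^{1/d}$ apart.
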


\begin{seclemma}[Radius]\label{lradius}
  Suppose clusters are generated by \autoref{amedoids}. Consider any sequence $\{C_n\}_{n\in\mathbb{N}}$ with $C_n \in \C_n$ for each $n$. Let $i_n$ be the medoid of $C_n$ and $R_n^* = \max\{\rho(i_n,j)\colon j \in C_n\}$. Under \autoref{aspace}(a), $R_n^* \sim (n/(k_n\xi_n))^{1/d}$, so $\abs{\N(i_n, R_n^*)} \precsim n/k_n$. 
\end{seclemma}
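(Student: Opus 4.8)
The plan is to bound $R_n^*$ from above and below separately, using the separation property (\autoref{lsep}) for the lower bound and the Ahlfors-regularity of \autoref{aspace}(a) together with the size estimate (\autoref{lsize}) for the upper bound. Throughout write $a_n = (n/(k_n\xi_n))^{1/d}$ for the target order.

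\emph{Lower bound.} For each $n$, consider the medoid $i_n$ of $C_n$ and any other medoid $j_n$ generated by $k_n$-medoids. By \autoref{lsep}, $\rho(i_n,j_n) \geq \ell_n a_n$ for $n$ large, with $\{\ell_n\}$ bounded away from zero. I would argue that the cluster $C_n$ must contain some unit at distance of order $a_n$ from $i_n$: since $\abs{C_n} \sim n/k_n$ by \autoref{lsize} but, by the upper bound in \autoref{aspace}(a), the ball $\N(i_n,r)$ contains at most $\max\{C\xi_n r^d, 1\}$ units, a cluster of size $\sim n/k_n$ cannot be contained in a ball of radius $o(a_n)$ — forcing such a ball would require $C\xi_n r^d \succsim n/k_n$, i.e. $r \succsim a_n$. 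Hence $R_n^* \succsim a_n$.

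\emph{Upper bound.} For the reverse direction, suppose toward a contradiction that $R_n^*$ is of strictly larger order than $a_n$ along some subsequence, so there is a unit $j \in C_n$ with $\rho(i_n, j)$ of order $\gg a_n$. Because clusters are Voronoi-type cells (each unit is assigned to its nearest medoid in \autoref{amedoids}), $j$ is closer to $i_n$ than to every other medoid, so the ball $\N(j, \rho(i_n,j))$ — or more usefully the ball around $i_n$ of radius $\rho(i_n,j)$, minus a smaller ball — contains no other medoid. But by \autoref{aspace}(a) a ball of radius $r \succ a_n$ around $i_n$ contains $\succsim \xi_n r^d \succ n/k_n$ units, and a packing/counting argument (using that the $k_n$ medoids are $\ell_n a_n$-separated by \autoref{lsep}, so any ball of radius $r$ contains at most $O((r/a_n)^d)$ medoids) shows the region within distance $\rho(i_n,j)$ of $i_n$ that is ``claimed'' by $C_n$ alone cannot be that large: the total number of units is $n$, shared among $k_n$ clusters of comparable size by \autoref{lsize}, so no single cluster can stretch over a ball of radius $\succ a_n$ without either exceeding its size budget $\sim n/k_n$ or overlapping a neighboring medoid's territory. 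This contradiction gives $R_n^* \precsim a_n$, hence $R_n^* \sim a_n$. The final claim $\abs{\N(i_n,R_n^*)} \precsim n/k_n$ then follows immediately from the upper bound in \autoref{aspace}(a) applied with $r = R_n^* \sim a_n$: $\abs{\N(i_n,R_n^*)} \leq \max\{C\xi_n (R_n^*)^d, 1\} \precsim \xi_n a_n^d = n/k_n$.

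\emph{Main obstacle.} The delicate step is the upper bound: making rigorous the claim that a $k$-medoids cluster cannot be ``long and thin'' or otherwise spread over a ball of radius much larger than $a_n$. Unlike $k$-means, $k$-medoids does not directly control the second moment of within-cluster distances, only the total first-moment cost; and the Voronoi structure alone does not bound cluster radius without exploiting the separation of medoids and the regularity of the space. The cleanest route is likely to combine \autoref{lsep} (medoids are $\asymp a_n$-separated) with a counting argument: if $C_n$ reached radius $r \succ a_n$, the annular region it would have to dominate contains $\succsim \xi_n r^d$ units but also $\asymp (r/a_n)^d$ competing medoids whose own Voronoi cells have size $\sim n/k_n \asymp \xi_n a_n^d$ each; totaling these shows $C_n$'s share is $O(\xi_n a_n^d)$, i.e. $r \precsim a_n$. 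Care is needed because this borrows from the proof architecture of \citeapos{cao2022inference} Theorem 3, extended to diverging $k_n$ and to the infill-increasing regime via the $\xi_n$-dependent regularity condition; I would isolate that combinatorial estimate as the crux and verify it holds uniformly in the choice of subsequence $\{C_n\}$.
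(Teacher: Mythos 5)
Your lower bound and the final deduction are fine: $C_n \subseteq \N(i_n,R_n^*)$ together with $\abs{C_n}\sim n/k_n$ (\autoref{lsize}) and the upper Ahlfors bound force $R_n^*\succsim (n/(k_n\xi_n))^{1/d}$, which is equivalent to the paper's route through the containment $\N(i_n,\ell_n(n/(k_n\xi_n))^{1/d}/2)\subseteq C_n$. The gap is in the upper bound, which you correctly flag as the crux but then try to close with a counting argument that does not work. The existence of a single unit $j\in C_n$ with $\rho(i_n,j)=R_n^*\succ (n/(k_n\xi_n))^{1/d}$ does not mean $C_n$ contains or ``dominates'' the ball $\N(i_n,R_n^*)$: a Voronoi cell can be long and thin, so comparing $\abs{\N(i_n,R_n^*)}\succsim \xi_n (R_n^*)^d$ with the size budget $\abs{C_n}\sim n/k_n$ produces no contradiction. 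Likewise, counting the $O((R_n^*\cdot (k_n\xi_n/n)^{1/d})^d)$ medoids in that ball against its $\asymp \xi_n (R_n^*)^d$ units is perfectly consistent --- each competing medoid just takes its $\sim n/k_n$ share. Medoid separation plus Ahlfors regularity alone cannot exclude a cluster with one distant outlying member; nothing in the Voronoi geometry forbids it.

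What does exclude it --- and what the paper's proof uses --- is the termination condition of step 2 of \autoref{amedoids}, i.e.\ swap-optimality of the output, which you never invoke. If $R_n^*\succ (n/(k_n\xi_n))^{1/d}$, take $i_n^*\in C_n$ with $\rho(i_n,i_n^*)=R_n^*$; since $i_n$ is its nearest medoid, every medoid is at distance at least $R_n^*/2$ from $i_n^*$, so all $\succsim \xi_n(R_n^*)^d \succ n/k_n$ units in $\N(i_n^*,R_n^*/8)$ (lower Ahlfors bound) would each save at least $R_n^*/4$ in cost if $i_n^*$ were made a medoid. Swapping $i_n$ out for $i_n^*$ therefore reduces cost by $\succ (n/k_n)\cdot(n/(k_n\xi_n))^{1/d}$, while the worst-case increase from reassigning $C_n$ to the nearest surviving medoid is $\precsim \abs{C_n}\cdot(n/(k_n\xi_n))^{1/d}\precsim (n/k_n)\cdot(n/(k_n\xi_n))^{1/d}$, because the nearest other medoid must be within $O((n/(k_n\xi_n))^{1/d})$ of $i_n$ (otherwise $C_n$ would contain a ball violating \autoref{lsize}). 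The swap is net cost-reducing for large $n$, contradicting step 2. This exchange argument --- not packing --- is the missing idea, and it is also the mechanism in the fixed-$k$ analysis of \cite{cao2022inference} that the lemma extends.
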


\begin{seclemma}\label{lintersect}
  Under Assumptions \ref{aspace}(a) and \ref{aclus}, $\max_{i\in\N_n} \phi_i \precsim 1$ for $\phi_i$ defined in \autoref{roverlap}. Consequently, under Assumptions \ref{aCRT} and \ref{aoverlap}, $\min_i p_{ti} \succsim 1$ for any $t\in\{0,1\}$.
\end{seclemma} 

\begin{seclemma}\label{lnbhd}
  Under Assumptions \ref{aspace}(a) and \ref{aclus}, $\max_{i\in\N_n} \abs{\Lambda_i} \precsim n/k_n$ for $\Lambda_i$ defined in \eqref{Lambdai}.
\end{seclemma}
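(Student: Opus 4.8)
The plan is to bound $\abs{\Lambda_i}$ by chaining together a bound on the number of clusters that can intersect $\N(i,r_n)$, a bound on the spatial diameter of such clusters, and a bound on the number of units within the resulting enlarged neighborhood. First I would observe that by definition $\Lambda_i = \bigcup_{j\colon C_j \cap \N(i,r_n)\neq\emptyset} \{\ell\colon C_j \cap \N(\ell,r_n)\neq\emptyset\}$, so $\abs{\Lambda_i} \leq \sum_{j\colon C_j\cap\N(i,r_n)\neq\emptyset} \abs{\{\ell\colon C_j\cap\N(\ell,r_n)\neq\emptyset\}}$. The number of clusters $C_j$ intersecting $\N(i,r_n)$ is exactly $\phi_i$ in the notation of \autoref{roverlap}, which is $\precsim 1$ uniformly in $i$ by \autoref{lintersect}. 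So it remains to bound, for each such cluster $C_j$, the number of units $\ell$ with $C_j\cap\N(\ell,r_n)\neq\emptyset$; call this set $\Psi_j$.

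For the cluster-level bound, I would fix a cluster $C_j$ intersecting $\N(i,r_n)$ and let $m_j$ be its medoid. By \autoref{lradius}, the radius $R_j = \max_{\ell\in C_j}\rho(\ell,m_j) \sim (n/(k_n\xi_n))^{1/d}$; since $r_n = 0.5\cdot\mathrm{Median}(\{R_{j'}\}_{j'=1}^{k_n})$ by \eqref{kappa} and each $R_{j'}$ is of this same order by \autoref{lradius}, also $r_n \sim (n/(k_n\xi_n))^{1/d}$. If $C_j\cap\N(\ell,r_n)\neq\emptyset$, then there is a point of $C_j$ within $r_n$ of $\ell$, and that point is within $R_j$ of $m_j$, so by the triangle inequality $\rho(\ell,m_j)\leq R_j+r_n$. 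Hence $\Psi_j \subseteq \N(m_j, R_j+r_n)$. Since $R_j + r_n \sim (n/(k_n\xi_n))^{1/d}$, the upper bound in \autoref{aspace}(a) gives $\abs{\N(m_j,R_j+r_n)} \leq \max\{C\xi_n(R_j+r_n)^d,1\} \precsim \xi_n \cdot n/(k_n\xi_n) = n/k_n$ (the constant branch is irrelevant once $n/k_n\to\infty$, which holds since $k_n\leq n$ and we are in the regime $k_n\precsim n/\xi_n$ with $\xi_n^{-1}=O(1)$). Therefore $\abs{\Psi_j}\precsim n/k_n$ uniformly over clusters $C_j$ and over $i$.

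Combining, $\abs{\Lambda_i} \leq \phi_i \cdot \max_j\abs{\Psi_j} \precsim 1\cdot n/k_n = n/k_n$, uniformly in $i$, which is the claim. The main obstacle is a bookkeeping subtlety rather than a conceptual one: I need the radius bound $R_j\sim(n/(k_n\xi_n))^{1/d}$ and the neighborhood-intersection bound $\phi_i\precsim 1$ to hold \emph{uniformly} over all clusters and all units, not just along a fixed sequence of clusters; \autoref{lradius} and \autoref{lintersect} are stated in the uniform-in-$i$ / along-any-sequence form needed, so this goes through, but care is required to ensure the implied constants in the various $\precsim$ relations do not depend on $j$ or $i$ — in particular the passage from ``$r_n\sim$ median radius'' to ``$r_n\sim(n/(k_n\xi_n))^{1/d}$'' uses that \emph{every} $R_{j'}$ is of that order, which is precisely what \autoref{lradius} delivers.
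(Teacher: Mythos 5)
Your proof is correct and takes essentially the same route as the paper's: use the triangle inequality together with the radius bound of \autoref{lradius} to contain the relevant units in a neighborhood of radius $\precsim (n/(k_n\xi_n))^{1/d}$, then count with \autoref{aspace}(a). The only (immaterial) difference is that you decompose $\Lambda_i$ into $\phi_i \precsim 1$ cluster-level pieces via \autoref{lintersect}, whereas the paper skips that step by using the single containment $\Lambda_i \subseteq \N(i,\, 2r_n + 2\bar{R})$, where $\bar{R}$ is the largest cluster radius.
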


\begin{seclemma}\label{lkappa}
  Let $\hat\kappa_t = n^{-1} \sum_{i\in\N_n} T_{ti}/p_{ti}$. Under Assumptions \ref{aspace}(a) and \ref{aCRT}--\ref{aoverlap}, $\hat\kappa_t - 1 \precsim k_n^{-1/2}$ for any $t\in\{0,1\}$.
\end{seclemma}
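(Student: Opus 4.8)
The plan is to show that $\hat\kappa_t^Q$ has mean exactly $1$ and variance of order $k_n^{-1}$, and then conclude via Chebyshev's inequality. Since $p_{ti}^Q = \E[T_{ti}^Q]$ and the propensity scores are strictly positive by \autoref{lintersect}, linearity gives $\E[\hat\kappa_t^Q] = n^{-1}\sum_{i\in\N_n} \E[T_{ti}^Q]/p_{ti}^Q = 1$, so it suffices to bound $\var(\hat\kappa_t^Q) = n^{-2}\sum_{i\in\N_n}\sum_{j\in\N_n}\cov\big(T_{ti}^Q/p_{ti}^Q,\, T_{tj}^Q/p_{tj}^Q\big)$.

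Next I would argue that the $(i,j)$ covariance vanishes unless $j\in\Lambda_i$. For each $Q$, the indicator $T_{ti}^Q$ is a deterministic function of $D_i$ together with $\{W_{c(\ell)}\colon \ell\in\N(i,r_n)\}$. Suppose $j\notin\Lambda_i$. Then by the definition \eqref{Lambdai}, no cluster intersects both $\N(i,r_n)$ and $\N(j,r_n)$; in particular, since $i\in C_{c(i)}\cap\N(i,r_n)$ and $j\in C_{c(j)}\cap\N(j,r_n)$, the clusters $C_{c(i)}$ and $C_{c(j)}$ are distinct, and more generally the families of clusters meeting $\N(i,r_n)$ and $\N(j,r_n)$ are disjoint. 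Under \autoref{aCRT} the $W_j$ are i.i.d.\ across clusters and, conditional on $(W_j)_j$, the $D_i$ are independent across distinct clusters; hence the random variables that generate $T_{ti}^Q$ are independent of those generating $T_{tj}^Q$, so $\cov\big(T_{ti}^Q/p_{ti}^Q, T_{tj}^Q/p_{tj}^Q\big)=0$.

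It then remains to bound the surviving terms. Each $T_{ti}^Q\in\{0,1\}$, and by \autoref{lintersect} the propensity scores are bounded away from zero uniformly in $i$, so $\big|\cov\big(T_{ti}^Q/p_{ti}^Q, T_{tj}^Q/p_{tj}^Q\big)\big| \le \E\big[T_{ti}^Q T_{tj}^Q\big]/(p_{ti}^Q p_{tj}^Q) + 1 \le (p_{ti}^Q p_{tj}^Q)^{-1} + 1 \precsim 1$. By \autoref{lnbhd}, $\sum_{i\in\N_n}\abs{\Lambda_i} \le n\max_{i\in\N_n}\abs{\Lambda_i} \precsim n^2/k_n$, whence $\var(\hat\kappa_t^Q) \precsim n^{-2}\cdot n^2/k_n = k_n^{-1}$. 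Chebyshev's inequality then gives $\prob\big(\abs{\hat\kappa_t^Q - 1} > M k_n^{-1/2}\big) \le \var(\hat\kappa_t^Q)\, k_n / M^2 \precsim M^{-2}$, which yields $\hat\kappa_t^Q - 1 \precsim k_n^{-1/2}$.

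The main obstacle is the second step: one must verify, case by case in $Q\in\{D,I,T,O\}$ and $t\in\{0,1\}$, that $T_{ti}^Q$ depends on the randomness only through $D_i$ and the assignments $W_{c(\ell)}$ of clusters meeting $\N(i,r_n)$, and that $j\notin\Lambda_i$ forces these to be independent of the corresponding objects for $j$ — using in particular that $i$ and $j$ then belong to different clusters so that $D_i\indep D_j$. This is elementary but requires care with the definitions; everything else reduces to the auxiliary lemmas already established.
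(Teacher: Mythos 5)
Your proof is correct, but it bounds the variance by a different decomposition than the paper. The paper writes $\var(\hat\kappa_t^Q) = n^{-2}\sum_{j=1}^{k_n}\var\bigl(\sum_{i\in C_j} T_{ti}^Q/p_{ti}^Q\bigr)$, treating the cluster-level sums as independent, and then bounds each term using $\max_j\abs{C_j}\precsim n/k_n$ from \autoref{lsize}. You instead expand the full double sum of covariances, show that $\cov(T_{ti}^Q/p_{ti}^Q,\,T_{tj}^Q/p_{tj}^Q)=0$ whenever $j\notin\Lambda_i$ (since the generating clusters and within-cluster randomizations are then disjoint and independent under \autoref{aCRT}), and count the surviving pairs via $\max_i\abs{\Lambda_i}\precsim n/k_n$ from \autoref{lnbhd}; both routes give $\var(\hat\kappa_t^Q)\precsim k_n^{-1}$ and finish by Chebyshev. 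Your version is in fact the more careful one: because $T_{ti}^Q$ involves the product $\prod_{\ell\in\N(i,r_n)}W_{c(\ell)}^t(1-W_{c(\ell)})^{1-t}$, a unit near a cluster boundary depends on the assignments of neighboring clusters, so the cluster-level sums in the paper's display are not exactly uncorrelated; your $\Lambda_i$-based bookkeeping (the same device used in the variance argument for \autoref{tQ}) absorbs this cross-cluster dependence explicitly while still delivering the $k_n^{-1}$ rate. The only point requiring care, which you flag yourself, is verifying for each $Q$ and $t$ that $T_{ti}^Q$ is measurable with respect to $D_i$ and $\{W_{c(\ell)}\colon\ell\in\N(i,r_n)\}$; this is immediate from the definitions in \autoref{sest}.
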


Let $T_{ti}^+$ equal $T_{ti}$ with $r_n$ set to zero, $\hat{p}_t^+ = n^{-1} \sum_{i\in\N_n} T_{ti}^+$, and $p_t^+ = \E[T_{ti}^+]$, which does not depend on $i$ under \autoref{aCRT}.

\begin{seclemma}\label{lpt}
  Under Assumptions \ref{aspace}(a) and \ref{aCRT}--\ref{aoverlap}, $\hat{p}_t^+ - p_t^+ = O_p(k_n^{-1/2})$ and $p_t^+/\hat{p}_t^+ \plimarrow 1$ for any $t\in\{0,1\}$.
\end{seclemma}

\section{Proofs}\label{sproofs}

\subsection{\autoref{tkmed}}

Let each cluster $C_n$'s ``centroid'' $i_n$ be its medoid and $U_n \equiv \max_j R_j$ for $R_j$ defined prior to \eqref{kappa}, so $C_n \subseteq \N(i_n, U_n)$. By \autoref{lradius}, $U_n \precsim (n/(k_n\xi_n))^{1/d}$. By \autoref{lsep}, there exists $\Delta_n \succsim (n/(k_n\xi_n))^{1/d}$ such that any pair of medoids is physically separated by at least $\Delta_n$. Then by definition of $k$-medoid clusters, for $L_n = \Delta_n/2$, $\N(i_n,L_n) \subseteq C_n$. \qed

\subsection{\autoref{tQ}}\label{stQ}

\subsubsection{Reduction to HT}

Define the following Horvitz-Thompson analog of $\hat\theta$:
\begin{equation}
  \tilde\theta = \frac{1}{n} \sum_{i\in\N_n} \tilde Z_i \quad\text{for}\quad \tilde Z_i = \left( \frac{T_{1i}}{p_{1i}} - \frac{T_{0i}}{p_{0i}} \right) Y_i. \label{HT}
\end{equation}

\noindent This is unbiased for $\bar{\theta}$ defined prior to \autoref{tclt}. Let $\hat\kappa_t = n^{-1} \sum_{i\in\N_n} T_{ti}/p_{ti}$. By \autoref{aboundY} and \autoref{lintersect}, $\tilde Z_i$ is uniformly asymptotically bounded, so 
\begin{equation*}
  \abs{\hat\theta - \tilde\theta} \precsim \abs{\hat\kappa_1^{-1}-1} + \abs{\hat\kappa_0^{-1}-1} \precsim k_n^{-1/2}
\end{equation*}

\noindent by \autoref{lkappa}. It remains to bound the bias and variance of $\tilde\theta$. 

\subsubsection{Bias}

For any $i\in\N_n$, $d_t \in \{0,1\}$, $\bm{d} \in \{0,1\}^n$, and $S \subseteq \N_n$ containing $i$, we use the notation $(d_t, \bm{D}_{S\backslash\{i\}}, \bm{d}_{-S})$ to mean that we take the observed treatment vector $\bm{D}$, replace entry $D_i$ with $d_t$, and replace the subvector $\bm{D}_{\N_n\backslash S} = (D_j\colon j\in\N_n\backslash S)$ with the corresponding entries of $\bm{d}_{\N_n\backslash S}$. 

The basic idea is that if $T_{ti}=1$, then $D_i=d_t$ and $i$ is well surrounded by units belonging to clusters with saturation level $p_t$, so $\E[Y_i \mid T_{ti}=1]$ is a good approximation of $\E^*_{p_t}[Y_i(d_t,\bm{D}_{-i})]$. Formally, 
\begin{align}
  \abs{ \E[Y_i \mid T_{ti}=1] &- \E^*_{p_t}[Y_i(d_t,\bm{D}_{-i})] } \nonumber\\ 
			     &= \lvert \E[Y_i(d_t,\bm{D}_{-i}) \pm Y_i(d_t,\bm{D}_{\N(i,r_n)\backslash\{i\}},\bm{0}_{-\N(i,r_n)}) \mid T_{ti}=1] \nonumber\\ 
			     &- \E^*_{p_t}[Y_i(d_t,\bm{D}_{-i}) \pm Y_i(d_t,\bm{D}_{\N(i,r_n)\backslash\{i\}},\bm{0}_{-\N(i,r_n)})] \rvert \leq 2c\,r_n^{-\gamma}, \label{biasarg}
\end{align}

\noindent because
\begin{equation*}
  \E[Y_i(d_t,\bm{D}_{\N(i,r_n)\backslash\{i\}},\bm{0}_{-\N(i,r_n)}) \mid T_{ti}=1] = \E^*_{p_t}[Y_i(d_t,\bm{D}_{\N(i,r_n)\backslash\{i\}},\bm{0}_{-\N(i,r_n)})]
\end{equation*}

\noindent by \autoref{aCRT} and 
\begin{equation*}
  \abs{Y_i(d_t,\bm{D}_{-i}) - Y_i(d_t,\bm{D}_{\N(i,r_n)\backslash\{i\}},\bm{0}_{-\N(i,r_n)})} \leq c\,r_n^{-\gamma}
\end{equation*}

\noindent by \autoref{aani}. Therefore,
\begin{equation}
  \abs{\E[\tilde\theta] - \theta^*} = \abs{\bar{\theta} - \theta^*} \leq 4c\,r_n^{-\gamma} \sim (k_n\xi_n/n)^{\gamma/d}. \label{fbi}
\end{equation}

\noindent The last part of \eqref{fbi} holds because $r_n$ is the median cluster radius by \eqref{kappa}, and cluster radii are uniformly of order $(n/(k_n\xi_n))^{1/d}$ by \autoref{aclus}. 

\subsubsection{Variance}

Recalling the definition of $\Lambda_i$ from \eqref{Lambdai},
\begin{equation}
  \var\bigg(\frac{1}{n} \sum_{i\in\N_n} \tilde Z_i\bigg) = \underbrace{\frac{1}{n^2} \sum_{i\in\N_n} \sum_{j \in \Lambda_i} \cov(\tilde Z_i,\tilde Z_j)}_{[P1]} + \underbrace{\frac{1}{n^2} \sum_{i\in\N_n} \sum_{j \not\in \Lambda_i} \cov(\tilde Z_i,\tilde Z_j)}_{[P2]}. \label{vararg}
\end{equation}

\noindent By \autoref{lnbhd}, $\max_i \abs{\Lambda_i} \precsim n/k_n$, so $[P1] \precsim k_n^{-1}$. It remains to show that $[P2] \precsim k_n^{-1}$.  Intuitively, $[P2]$ should be well controlled since units not in $\Lambda_i$ are ``far'' from and therefore less correlated with $i$. 

\bigskip
\noindent {\bf Step 1.} Recall that $c(i)$ is the index of the cluster containing unit $i$. Define $X_i^r = \E[\tilde Z_i \mid \F_i(r)]$ for
\begin{equation}
  \F_i(r) = \big\{(D_j,W_{c(j)})\colon \rho(i,j) \leq r\big\}. \label{F_i()}
\end{equation}

\noindent We bound the discrepancy between $\tilde Z_i$ and $X_i^r$. For any $r\geq r_n$ and $t\in\{0,1\}$, $T_{ti}$ is measurable with respect to the $\sigma$-algebra generated by $\F_i(r)$. Then by \autoref{aani}, for any $q>0$, $r\geq r_n$, $n$ sufficiently large, and $t \in \{0,1\}$,
\begin{multline*}
  \E[\abs{\tilde Z_i - X_i^r}^q \mid T_{ti}=1] 
  \\ = p_{ti}^{-q} \E[\abs{Y_i(\bm{D}) - \E[Y_i(\bm{D}) \mid \F_i(r)]}^q \mid T_{ti}=1] \leq (p_{ti}^{-1} 2c\,r^{-\gamma})^q
\end{multline*}

\noindent using an argument similar to \eqref{biasarg}. By \autoref{lintersect}, $\max_i p_{ti}^{-1}\precsim 1$, so by the law of total probability, there exists $c_n \precsim 1$ such that 
\begin{equation}
  \max_{i\in\N_n} \E[\abs{\tilde Z_i - X_i^r}^q]^{1/q} \leq c_nr^{-\gamma} \quad\text{for all}\quad r\geq r_n. \label{e1290urt2g0}
\end{equation}

\noindent {\bf Step 2.} Fix $i,j\in\N_n$ such that $j\not\in\Lambda_i$. We use \eqref{e1290urt2g0} to bound $\cov(\tilde Z_i, \tilde Z_j)$. Since the $r_n$-neighborhoods of $i$ and $j$ do not intersect a common cluster, $X_i^{r_n} \indep X_j^{r_n}$ by \autoref{aCRT}. Applying the Cauchy-Schwarz inequality, \eqref{e1290urt2g0} for $q=2$, and \autoref{aboundY} and \autoref{lintersect}, there exists $c_n' \precsim 1$ such that for $n$ sufficiently large and all $i,j$ such that $j\not\in\Lambda_i$,
\begin{multline}
  \abs{\cov(\tilde Z_i,\tilde Z_j)} \leq \abs{\cov(X_i^{r_n}, X_j^{r_n})} + \abs{\cov(X_i,\tilde Z_j - X_j^{r_n})} \\ + \abs{\cov(\tilde Z_i - X_i^{r_n},X_j)} + \abs{\cov(\tilde Z_i - X_i^{r_n}, \tilde Z_j - X_j^{r_n})} \leq c_n' r_n^{-\gamma}. \label{g902jghbwa}
\end{multline}

Now suppose additionally that $\rho(i,j) > 4\bar{R}$ for 
\begin{equation*}
  \bar{R} = \max_j R_j,
\end{equation*}

\noindent where $R_j$ is defined prior to \eqref{kappa}. In this case, we derive a different covariance bound. Since $\N(i,\rho(i,j)/2-\bar{R})$ and $\N(j,\rho(i,j)/2-\bar{R}))$ are separated by a distance of at least $2\bar{R}$, which upper bounds the ``diameter'' of any cluster, we have $X_i^{\rho(i,j)/2-\bar{R}} \indep X_j^{\rho(i,j)/2-\bar{R}}$. By a derivation similar to \eqref{g902jghbwa} using $\rho(i,j)/2-\bar{R}$ in place of $r_n$,
\begin{equation}
  \abs{\cov(\tilde Z_i,\tilde Z_j)} \ind\{\rho(i,j) > 4\bar{R}\} \leq c_n'(\rho(i,j)/2-\bar{R})^{-\gamma} \ind\{\rho(i,j) > 4\bar{R}\}. \label{covzz0}
\end{equation}

\noindent {\bf Step 3.} Let $\lceil c \rceil$ ($\lfloor c \rfloor$) denote $c$ rounded up (down) to the nearest integer. Using the covariance bounds derived in step 2, 
\begin{multline}
  \frac{1}{n^2} \sum_{i\in\N_n} \sum_{j \not\in \Lambda_i} \abs{\cov(\tilde Z_i,\tilde Z_j)} \leq \frac{c_n'}{n^2} \sum_{\ell=1}^{k_n} \sum_{i\in C_\ell} \sum_{s=\lfloor 2r_n \rfloor}^\infty \sum_{j\not\in\Lambda_i} \bm{1}\{\rho(i,j)\in [s,s+1)\} \\ \times \left( r_n^{-\gamma} \bm{1}\{s \leq 4\bar{R}\} + (s/2-\bar{R})^{-\gamma} \bm{1}\{s > 4\bar{R}\} \right) \equiv [P2.1] + [P2.2], \label{covzz}
\end{multline} 

\noindent where $[P2.1]$ takes the part involving $s \leq 4\bar{R}$ and $[P2.2]$ the part involving $s > 4\bar{R}$. The sum over $s$ starts at $\lfloor 2r_n \rfloor$ because $j\not\in\Lambda_i$ implies that the $r_n$ neighborhoods of $i,j$ do not intersect. 

We next show that $\eqref{covzz} \precsim \xi_n/n$. By Assumptions \ref{aspace}(a) and \ref{aclus}, $\max_\ell \abs{C_\ell} \precsim n/k_n$. By \autoref{aspace}(b), $\sum_{j\not\in\Lambda_i} \bm{1}\{\rho(i,j)\in [s,s+1)\} \leq \abs{\N(i,s+1)\backslash\N(i,s)} \leq C \xi_n \max\{s^{d-1},1\}$. Then
\begin{equation}
  [P2.1] = c_n' \frac{k_n}{n^2} \frac{n}{k_n} \sum_{s=\lfloor 2r_n \rfloor}^{\lceil 4\bar{R} \rceil} C\xi_n s^{d-1} r_n^{-\gamma} \precsim \frac{\xi_n}{n} \bar{R}^d r_n^{-\gamma} \label{varP2.1}
\end{equation}

\noindent because
\begin{equation*}
  \sum_{s=\lfloor 2r_n \rfloor}^{\lceil 4\bar{R} \rceil} s^{d-1} \leq \int_{\lfloor 2r_n \rfloor}^{\lceil 4\bar{R} \rceil + 1} s^{d-1} \,\text{d}s = d^{-1}\left( \lceil 4\bar{R}+1 \rceil^d - \lfloor 2r_n \rfloor^d \right) \precsim \bar{R}^d
\end{equation*}

\noindent given $\bar{R} \geq r_n$ by definition. By \autoref{aclus},  
\begin{equation}
  r_n \geq L_n \sim (n/(k_n\xi_n))^{1/d},
  \label{r_nbound}
\end{equation}

\noindent and $\bar{R}^d \precsim n/(k_n\xi_n)$, so
\begin{equation}
  \eqref{varP2.1} \precsim \frac{\xi_n}{n} \left( \frac{n}{k_n\xi_n} \right)^{1-\gamma/d} = \frac{1}{k_n} \left( \frac{k_n\xi_n}{n} \right)^{\gamma/d} \precsim \frac{\xi_n}{n} \label{varP2.15}
\end{equation}

\noindent since $\gamma>d$ by \autoref{aani}. Similarly, using the fact that $d \geq 1$ from \autoref{aspace}(b),
\begin{multline}
  [P2.2] = c_n' \frac{k_n}{n^2} \frac{n}{k_n} \sum_{s=\lfloor 4\bar{R} \rfloor}^\infty C\xi_n s^{d-1} (s/2 - \bar{R})^{-\gamma} \\ 
  \leq C c_n' \frac{\xi_n}{n} \sum_{s=\lfloor 4\bar{R} \rfloor}^\infty s^{d-1} (s/4)^{-\gamma} \precsim \frac{\xi_n}{n} \sum_{s=1}^\infty s^{d-1-\gamma}. \label{varP2.2}
\end{multline}

\noindent Combining \eqref{covzz}, \eqref{varP2.1}, \eqref{varP2.15}, and \eqref{varP2.2} yields $[P2] \precsim \xi_n/n \precsim k_n^{-1}$ since $k_n\xi_n/n \precsim 1$ by assumption. \qed

\subsection{\autoref{tplus}}\label{stplus}

\subsubsection{Part (a)}

Let $T_{ti}^+ = \ind\{D_i=d_t, W_{c(i)}=t\}$ for $d_t \in \{0,1\}$ and $T_{ti}^+ = \ind\{W_{c(i)}=t\}$ for $d_t = \emptyset$. Define $p_t^+ = \E[T_{ti}^+]$, $\hat p_t^+ = n^{-1} \sum_{i\in\N_n} T_{ti}^+$, and
\begin{equation*}
  \tilde\theta^+ = \frac{1}{n} \sum_{i\in\N_n} \left( \frac{T_{1i}^+}{p_1^+} - \frac{T_{0i}^+}{p_0^+} \right) Y_i,
\end{equation*}

\noindent the Horvitz-Thompson analog of the difference in means $\hat\theta^+$. We have
\begin{equation*}
  \hat\theta^+ - \tilde\theta^+ = \frac{p_1^+-\hat{p}_1^+}{\hat{p}_1^+} \frac{1}{n} \sum_{i\in\N_n} \frac{T_{1i}^+Y_i}{p_1^+} - \frac{p_0^+-\hat{p}_0^+}{\hat{p}_0^+} \frac{1}{n} \sum_{i\in\N_n} \frac{T_{0i}^+Y_i}{p_0^+} \precsim k_n^{-1/2}
\end{equation*}

\noindent because $T_{ti}^+Y_i/p_t^+$ is uniformly asymptotically bounded by \autoref{aboundY} and \autoref{lintersect}, and $p_t^+ - \hat{p}_t^+ \precsim k_n^{-1/2}$ by \autoref{lpt}. It suffices to derive the rate of convergence of $\tilde\theta^+$. 

By the argument used to bound \eqref{vararg} in the proof of \autoref{tQ}, $\var(\tilde\theta^+) \precsim k_n^{-1}$. It remains to bound the bias, that is, to show that
\begin{equation}
  \abs{\E[\tilde\theta^+] - \theta^*} \precsim (k_n\xi_n/n)^{1/d}. \label{QAB}
\end{equation}

\noindent Recall the definitions in \autoref{aclus}, letting $m_j$ denote the ``centroid'' of cluster $C_j$. Define $j$'s ``boundary'' as
\begin{equation*}
  \mathcal{B}(C_j) = \N(m_j,U_n) \backslash \N(m_j,L_n). 
\end{equation*}

\noindent By \autoref{aclus}, there exists $\alpha_n \precsim 1$ such that $U_n - L_n = \alpha_n (n/(k_n\xi_n))^{1/d}$, so by \autoref{aspace}(b), 
\begin{equation}
  \max_j \abs{\mathcal{B}(C_j)} \leq \sum_{\ell=0}^{\lceil \alpha_n \rceil} \abs{\N(m_j,L_n+\ell+1) \backslash \N(m_j,L_n+\ell)} \precsim \xi_n (n/(k_n\xi_n))^{(d-1)/d}.
  \label{precboundary}
\end{equation}

For $r = 0, \ldots, L_n$, define the ``contour sets''
\begin{equation*}
  J(r,C_j) = \N(m_j,L_n-r) \backslash \N(m_j,L_n-r-1), 
\end{equation*}

\noindent where $\N(i,-1) \equiv \emptyset$. As $r$ increases, $J(r,C_j)$ moves away from the boundary and towards the interior. If $i \in J(r,C_j)$, then $\N(i,r) \subseteq C_j$, so for such $i$, by Assumptions \ref{aani} and \ref{aCRT},
\begin{align*}
  \big| (p_t^+)^{-1}\E[Y_iT_{ti}^+] 
  &- \E^*_{p_t}[Y_i(d_t,\bm{D}_{-i})] \big| \\
  &= \big| \E[Y_i(d_t,\bm{D}_{-i}) \pm Y_i(d_t,\bm{D}_{\N(i,r)\backslash\{i\}},\bm{0}_{-\N(i,r)}) \mid T_{ti}^+=1] \\ 
  &- \E^*_{p_t}[Y_i(d_t,\bm{D}_{-i}) \pm Y_i(d_t,\bm{D}_{\N(i,r)\backslash\{i\}},\bm{0}_{-\N(i,r)})] \big| \\
  &\leq 2c\,\min\{r^{-\gamma},1\}
\end{align*}

\noindent for any $t\in\{0,1\}$. Then
\begin{align*}
  \frac{1}{n} \sum_{i\in\N_n} \big| 
  &(p_t^+)^{-1}\E[Y_iT_{ti}^+] - \E^*_{p_t}[Y_i(d_t,\bm{D}_{-i})] \big| \\
  &\precsim \frac{1}{n} \sum_{j=1}^{k_n} \left( \abs{\mathcal{B}(C_j)} + \sum_{r=0}^{L_n} \sum_{i \in J(r,C_j)} \min\{r^{-\gamma},1\} \right) \\
  &\precsim \underbrace{\frac{k_n}{n} \xi_n \left( \frac{n}{k_n\xi_n} \right)^{\frac{d-1}{d}}}_\text{boundary bias} + \frac{k_n}{n} \sum_{r=0}^{L_n} \xi_n\max\{(L_n-r-1)^{d-1},1\} \min\{r^{-\gamma},1\} \\
  &\precsim (k_n\xi_n/n)^{1/d} + \frac{k_n\xi_n}{n} L_n^{d-1}\sum_{r=0}^\infty \min\{r^{-\gamma},1\}
  \precsim (k_n\xi_n/n)^{1/d}.
\end{align*}

\noindent The second line uses \autoref{aboundY} and \autoref{lintersect}. It converts the sum over all units to a sum over all clusters followed by sums over units in each contour set through the boundary. The third line bound on $\abs{\mathcal{B}(C_j)}$ follows from \eqref{precboundary}, while the bound on $\abs{J(r,C_j)}$ uses \autoref{aspace}(b). This establishes \eqref{QAB}.

\subsubsection{Part (b)}

Let $d\geq 1$, and fix any sequence $\{\xi_n\}_{n\in\mathbb{N}}$ such that $1 \precsim \xi_n \prec n$. Let $\rho$ be the sup norm and $\N_n = \{\xi_n^{-1/d}x\colon x \in \mathbb{Z}^d\} \cap B(\bm{0},\mathcal{R}_n)$ where $B(\bm{0},\mathcal{R}_n)$ is a (hyper)cube with radius $\mathcal{R}_n \in \mathbb{Z}$ and $\mathcal{R}_n \sim (n/\xi_n)^{1/d}$. When $\xi_n\sim 1$, the observed units are positioned within a cube of radius $\sim n^{1/d}$ containing $\sim n$ units positioned on the integer lattice. When $\xi_n \succ 1$, we shrink this region towards the origin by a factor $\xi_n^{-1/d}$. This results in the same asymptotic order of number of units $\abs{\N_n} \sim n$, but the volume of the cube is reduced to $\sim \mathcal{R}_n^d \sim n/\xi_n$, resulting in a density of $\xi_n$.

\bigskip
\noindent {\bf Verifying \autoref{aspace}.} By construction of $\N_n$, there exist $C_0,C_1>0$ such that for $r\geq 0$ and $i\in\N_n$, 
\begin{align*}
  &\abs{\N(i,\xi_n^{-1/d}r)} \in [\min\{C_0^{-1}r^d,n\}, \max\{C_0r^d,1\}] \quad\text{and} \\
  &\abs{\N(i,\xi_n^{-1/d}(r+1))\backslash \N(i,\xi_n^{-1/d}r)} \in [\min\{C_1^{-1}r^{d-1},n\}, \max\{C_1r^{d-1},1\}].
\end{align*}

\noindent Substituting $\xi_n^{1/d}r$ for $r$ in these expressions,
\begin{align}
  &\abs{\N(i,r)} \in [\min\{C_0^{-1}\xi_nr^d,n\}, \max\{C_0\xi_nr^d,1\}] \quad\text{and} \label{csz}\\
  &\abs{\N(i,r+\xi_n^{-1/d})\backslash \N(i,r)} \in [\min\{C_1^{-1}\xi_n^{(d-1)/d}r^{d-1},n\}, \max\{C_1\xi_n^{(d-1)/d}r^{d-1},1\}]. \nonumber
\end{align}

\noindent Letting $[c]$ denote rounding $c$ to the nearest integer, this implies
\begin{align}
  \abs{\N(i,r+1)\backslash \N(i,r)} 
  &\leq 1 + \sum_{k=1}^{[\xi_n^{1/d}]} \abs{\N(i,r+\xi_n^{-1/d}k)\backslash \N(i,r+(\xi_n^{-1/d}(k-1))} \nonumber\\
  &\leq 1 + \sum_{k=1}^{[\xi_n^{1/d}]} C_1 \xi_n^{(d-1)/d} (r+\xi_n^{-1/d}(k-1))^{d-1} \nonumber\\
  &\leq 1 + C_1 \xi_n^{(d-1)/d} \int_1^{[\xi_n^{1/d}]+1} (r+\xi_n^{-1/d}(k-1))^{d-1}\,\text{d}k \nonumber\\
  &= 1 + C_1 \xi_n^{(d-1)/d} \left( d^{-1} \xi_n^{1/d} (r+\xi_n^{-1/d}(k-1))^d  \right) \bigg|_1^{[\xi_n^{1/d}]+1} \nonumber\\
  &\leq 1 + C_1 d^{-1} \xi_n ((r+\xi_n^{-1/d}[\xi_n^{1/d}])^d - r^d), \label{bdry}
\end{align}

\noindent which is bounded by a constant times $\xi_n r^{d-1}$. This verifies \autoref{aspace}. 

\bigskip
\noindent {\bf Verifying \autoref{aclus}.} Let $1 \prec k_n \prec n/\xi_n$. For the remainder of the proof, suppose the clusters are generated by $k_n$-medoids which by \autoref{tkmed} satisfy \autoref{aclus}. Also let $n$ be sufficiently large that $L_n$ in the assumption exceeds two.

\bigskip
\noindent {\bf Verifying Assumptions \ref{aboundY} and \ref{aani}.} Set
\begin{equation*}
  Y_i(\bm{d}) = d_i\frac{\sum_{j\in\mathcal{N}_n} d_j \bm{1}\{\rho(i,j)\leq 2\}}{\sum_{j\in\mathcal{N}_n} \bm{1}\{\rho(i,j)\leq 2\}},
\end{equation*}

\noindent which is unit $i$'s treatment times the fraction of treated units in $i$'s {\em 2-neighborhood}. The sum in the denominator is always at least one since it includes $i$ itself, so \autoref{aboundY} holds. \autoref{aani} holds because potential outcomes only depend on 2-neighborhood treatments.

\bigskip
\noindent {\bf Reduction to HT.} Notice $Y_i(\bm{0}) = 0$. Since $Q \in \{D,T,O\}$ and $p_0=0$, $T_{0i}^+Y_i = 0$. Then 
\begin{equation*}
  \hat\theta^+ = \underbrace{\frac{1}{n} \sum_{i\in\N_n} \frac{T_{1i}^+}{p_1^+} Y_i}_{\tilde{\theta}^+} \frac{p_1^+}{\hat{p}_1^+}. 
\end{equation*}

\noindent By \autoref{lpt}, $p_1^+/\hat{p}_1^+ \plimarrow 1$, so it remains to lower bound the bias and variance of $\tilde\theta^+$.

\bigskip
\noindent {\bf Bias.} Let $\beta = (p_1-p_1q) \ind\{d_1=1\} + p_1 (p_1-p_1q) \ind\{d_1=\emptyset\}$, which is strictly positive by assumption. We have
\begin{multline}
  \abs{\E[\tilde\theta^+] - \theta^*} = \bigg| \frac{1}{n} \sum_{i\in\N_n} \big( \E[Y_i \mid T_{1i}^+=1] - \E^*_{p_1}[Y_i(d_1,\bm{D}_{-i})] \big) \bigg| \\
  = \beta \frac{1}{n} \sum_{i\in\N_n} \frac{\sum_{j \in \N_n\backslash C_{c(i)}} \bm{1}\{\rho(i,j)\leq 2\}}{\sum_{j\in\mathcal{N}_n} \bm{1}\{\rho(i,j)\leq 2\}}. \label{lbias}
\end{multline}

\noindent Define the boundary of a set $C \subseteq \N_n$ as
\begin{equation*}
  \tilde{\mathcal{B}}(C) = \big\{i \in C\colon \abs{\N(i,1) \cap (\{\xi_n^{-1/d}x \colon x \in \mathbb{Z}^d\}\backslash C)} \geq 1\big\},
\end{equation*}

\noindent the set of units whose $1$-neighborhoods intersect a point outside the set. Then
\begin{equation}
  \eqref{lbias} \geq \beta \frac{1}{n} \sum_{j=1}^{k_n} \sum_{i\in\tilde{\mathcal{B}}(C_j)\backslash\tilde{\mathcal{B}}(\N_n)} \frac{\sum_{\ell \in \N_n\backslash C_j} \bm{1}\{\rho(i,\ell)\leq 2\}}{\sum_{\ell\in\mathcal{N}_n} \bm{1}\{\rho(i,\ell)\leq 2\}}. \label{lbias2}
\end{equation}

\noindent By construction of $\N_n$,
\begin{equation*}
  \min_{j=1,\ldots,k_n} \min_{i\in\tilde{\mathcal{B}}(C_j)\backslash\tilde{\mathcal{B}}(\N_n)} \frac{\sum_{\ell \in \N_n\backslash C_j} \bm{1}\{\rho(i,\ell)\leq 2\}}{\sum_{\ell\in\mathcal{N}_n} \bm{1}\{\rho(i,\ell)\leq 2\}} 
  \succsim 1.
\end{equation*} 

\noindent Therefore,
\begin{equation}
  \eqref{lbias2} \succsim \frac{1}{n} \bigg( \sum_{j=1}^{k_n} \abs{\tilde{\mathcal{B}}(C_j)} - \abs{\tilde{\mathcal{B}}(\N_n)} \bigg). \label{lbias2.5}
\end{equation}

\noindent By \eqref{bdry}, 
\begin{equation}
  \abs{\tilde{\mathcal{B}}(\N_n)} \sim \xi_n \mathcal{R}_n^{d-1} \sim n (\xi_n/n)^{1/d}
  \label{areabdry}
\end{equation}

Let $m_j$ denote the medoid of a cluster $C_j$. By \autoref{aclus}(a), units in $\tilde{\mathcal{B}}(C_j)$ are at least distance $\lfloor L_n \rfloor$ from $m_j$. Define the ``contour set''
\begin{equation*}
  \tilde J(r,C_j) = \N(m_j,r)\backslash\N(m_j,r-1).
\end{equation*}

\noindent By construction of $\N_n$, $\abs{\tilde J(r,C_j)}$ is increasing in $r$, so 
\begin{equation*}
  \abs{\tilde{\mathcal{B}}(C_j)} \geq \abs{\tilde J(\lfloor L_n \rfloor,C_j)} \sim \xi_n (n/(k_n\xi_n))^{(d-1)/d}
\end{equation*}

\noindent uniformly in $j$ by \eqref{bdry} and \autoref{aclus}(b). Hence
\begin{equation*}
  \frac{1}{n} \sum_{j=1}^{k_n} \abs{\tilde{\mathcal{B}}(C_j)} \succsim \frac{k_n\xi_n}{n} \left( \frac{n}{k_n\xi_n} \right)^{(d-1)/d} = (k_n\xi_n/n)^{1/d}.
\end{equation*}

\noindent Combined with \eqref{areabdry}, $\eqref{lbias2.5} \succsim (k_n\xi_n/n)^{1/d}$.

\bigskip
\noindent {\bf Variance.} For $Z_i^+ = T_{1i}^+ Y_i/p_1^+$, $\cov(Z_i^+,Z_j^+)$ equals
\begin{multline*}
  \sum_{\ell\in\N_n} \sum_{m\in\N_n} \frac{\bm{1}\{\rho(i,\ell)\leq 2\}}{\sum_{\ell'\in\N_n} \bm{1}\{\rho(i,\ell')\leq 2\}} \frac{\bm{1}\{\rho(j,m)\leq 2\}}{\sum_{m'\in\N_n} \bm{1}\{\rho(j,m')\leq 2\}} \\
  \times (p_1^+)^{-2} \cov\big( T_{1i}^+ D_i D_\ell, T_{1j}^+ D_j D_m \big).
\end{multline*}

\noindent Since $Q\in\{D,T,O\}$, either $T_{1i}^+ = D_iW_{c(i)}$ in the case of $Q \in \{D,T\}$ or $T_{1i}^+ = W_{c(i)}$ in the case of $Q=O$, so the covariance term equals $\cov(W_{c(i)}D_iD_\ell, W_{c(j)}D_jD_m) \geq 0$. Furthermore, for $j \in C_{c(i)}$, $W_{c(i)}=W_{c(j)}$, so
\begin{multline*}
  \cov(W_{c(i)}D_iD_\ell, W_{c(j)}D_jD_m) \\ = \underbrace{\E[W_{c(i)} \cov(D_iD_\ell, D_jD_m \mid W_{c(i)})]}_{\geq 0} + \cov(\E[D_iD_\ell \mid W_{c(i)}], \E[D_jD_m \mid W_{c(i)}]).
\end{multline*}

\noindent If additionally $\ell \in C_{c(i)}\backslash\{i\}$ and $m\in C_{c(i)}\backslash\{j\}$, the covariance term on the right-hand side equals $\alpha \equiv qp_1^4 - (qp_1^2)^2 > 0$. 
Then by construction of $\N_n$,
\begin{multline*}
  \var\left( \frac{1}{n} \sum_{i\in\N_n} Z_i^+ \right) \geq \alpha \frac{1}{n^2} \sum_{i\in\N_n} \sum_{j\in C_{c(i)}} \sum_{\ell\in C_{c(i)}\backslash\{i\}} \frac{\bm{1}\{\rho(i,\ell)\leq 2\}}{\sum_{\ell'\in\N_n} \bm{1}\{\rho(i,\ell')\leq 2\}} \\
  \times \sum_{m\in C_{c(i)}\backslash\{j\}} \frac{\bm{1}\{\rho(j,m)\leq 2\}}{\sum_{m'\in\N_n} \bm{1}\{\rho(j,m')\leq 2\}} \succsim \frac{1}{n^2} \sum_{i\in\N_n} \abs{C_{c(i)}}.
\end{multline*}

\noindent By \eqref{csz} and \autoref{aclus}(a), $\min_j \abs{C_j} \geq \abs{\N(m_j,L_n)} \succsim n/k_n$. Hence, the right-hand side of the above display is at least order $k_n^{-1}$. \qed

\subsection{\autoref{tclt}}\label{stclt}

The first three steps establish \eqref{cltsec}, and step four proves \eqref{cltmain}. 

\bigskip
\noindent {\bf Step 1.} We first derive an asymptotically linear representation. For $\hat\kappa_t = n^{-1} \sum_{i\in\N_n} T_{ti} / p_{ti}$ and $\mu_t = n^{-1} \sum_{i\in\N_n} \E[Y_i \mid T_{ti}=1]$,
\begin{equation*}
  \frac{\sum_{i\in\N_n} Y_i T_{ti} / p_{ti}}{\sum_{i\in\N_n} T_{ti} / p_{ti}} - \mu_t = \hat\kappa_t^{-1} \frac{1}{n} \sum_{i\in\N_n} \frac{T_{ti}(Y_i - \mu_t)}{p_{ti}}.
\end{equation*}

\noindent By \autoref{lkappa}, $\hat\kappa_t - 1 \precsim k_n^{-1/2}$, so 
\begin{equation}
  \hat\theta - \bar{\theta} = \frac{1}{n} \sum_{i\in\N_n} \underbrace{\left( \frac{T_{1i}(Y_i - \mu_1)}{p_{1i}} - \frac{T_{0i}(Y_i - \mu_0)}{p_{0i}} \right)}_{Z_i} + o_p(k_n^{-1/2}) \label{Z_iQ}
\end{equation}

\noindent since $n^{-1} \sum_{i\in\N_n} p_{ti}^{-1} T_{ti}(Y_i - \mu_t) \prec 1$ by the variance calculation in the proof of \autoref{tQ}. 

\bigskip
\noindent {\bf Step 2.} Recall the definition of $\F_i(r)$ from \eqref{F_i()} and $\Lambda_i$ from \eqref{Lambdai}. Define $B_i = Z_i - \E[Z_i \mid \F_i(r_n)]$. We next establish that
\begin{equation*}
  \E\left[ \left( \sqrt{k_n} \frac{1}{n} \sum_{i\in\N_n} B_i \right)^2 \right] \precsim \frac{k_n\xi_n}{n} \prec 1,
\end{equation*}

\noindent where the last asymptotic inequality follows from the assumption $k_n \prec n/\xi_n$. Expanding the square yields
\begin{equation*}
  \frac{k_n}{n^2} \sum_{i\in\N_n} \sum_{j\in\Lambda_i} \E[B_iB_j] + \frac{k_n}{n^2} \sum_{i\in\N_n} \sum_{j\not\in\Lambda_i} \E[B_iB_j] \equiv [P1] + [P2].
\end{equation*}

For all $r\geq r_n$ and $t\in\{0,1\}$, $T_{ti}$ is measurable with respect to $\F_i(r)$, so
\begin{multline*}
  Z_i-\E[Z_i \mid \F_i(r)] \\ = \frac{T_{1i}}{p_{1i}} (Y_i(\bm{D}) - \E[Y_i(\bm{D}) \mid \F_i(r)]) - \frac{T_{0i}}{p_{0i}} (Y_i(\bm{D}) - \E[Y_i(\bm{D}) \mid \F_i(r)]).
\end{multline*}

\noindent By \autoref{aani},
\begin{multline*}
  \lvert Y_i(\bm{D}) - \E[Y_i(\bm{D}) \mid \F_i(r)] \rvert = \lvert Y_i(\bm{D}) \pm Y_i(\bm{D}_{\N(i,r)}, \bm{0}_{-\N(i,r)}) \\
  - \E[Y_i(\bm{D}) \pm Y_i(\bm{D}_{\N(i,r)}, \bm{0}_{-\N(i,r)}) \mid \F_i(r)] \rvert \leq 2c \min\{r^{-\gamma}, 1\}
\end{multline*}

\noindent since $\E[Y_i(\bm{D}_{\N(i,r)}, \bm{0}_{-\N(i,r)}) \mid \F_i(r)] = Y_i(\bm{D}_{\N(i,r)}, \bm{0}_{-\N(i,r)})$. Then by \autoref{lintersect}, there exists $c_n \precsim 1$ such that for all $i$, 
\begin{equation}
  \abs{Z_i-\E[Z_i \mid \F_i(r)]} \leq c_n \min\{r^{-\gamma},1\} \quad\text{for all}\quad r\geq r_n. \label{bhn0ejr5}
\end{equation}

\noindent By \autoref{lnbhd}, $\max_i \abs{\Lambda_i} \precsim n/k_n$, so \eqref{r_nbound} and \eqref{bhn0ejr5} imply
\begin{equation*}
  \abs{[P1]} \precsim (c_n r_n^{-\gamma})^2 \precsim (k_n\xi_n/n)^{2\gamma/d},
\end{equation*}

\noindent which is $\precsim k_n\xi_n/n$ since $\gamma>d$ by \autoref{aani}.

Turning to $\abs{[P2]}$, fix $i,j$ such that $j\not\in\Lambda_i$. By \eqref{bhn0ejr5}, there exists $c_n \precsim 1$ such that for all such $i,j$,
\begin{equation*}
  \abs{\E[B_iB_j]} \leq c_n r_n^{-\gamma}.
\end{equation*} 

\noindent We will use a different bound when additionally $\rho(i,j) > 4\bar{R}$ for $\bar{R} = \max_j R_j$. In this case we have
\begin{equation*}
  B_i - \underbrace{\E[B_i \mid \F_i(\rho(i,j)/2-\bar{R})]}_{X_i} = Z_i - \E[Z_i \mid \F_i(\rho(i,j)/2-\bar{R})] 
\end{equation*}

\noindent since $\N(i,\rho(i,j)/2-\bar{R}) \supseteq \N(i,r_n)$. Notice $\E[X_i]=0$; $\max_i \abs{X_i} \precsim 1$ by \autoref{aboundY} and \autoref{lintersect}; and $X_i \indep X_j$ by \autoref{aCRT} since $\N(i,\rho(i,j)/2-\bar{R})$ and $\N(j,\rho(i,j)/2-\bar{R})) > 2\bar{R}$ are separated by a distance of at least $2\bar{R}$, which is an upper bound on the ``diameter'' of any cluster. Applying \eqref{bhn0ejr5} with $\rho(i,j)/2-\bar{R}$ in place of $r_n$, there exists $c_n' \precsim 1$ such that for any $i,j$,
\begin{multline*}
  \abs{\E[B_iB_j]} \leq \abs{\E[X_iX_j]} + \abs{\E[X_i(B_j-X_j)]} \\ + \abs{\E[(B_i-X_i)X_j]} + \abs{\E[(B_i-X_i)(B_j-X_j)]} \leq c_n'(\rho(i,j)/2-\bar{R})^{-\gamma}.
\end{multline*}

\noindent These bounds yield
\begin{multline*}
  \abs{[P2]} \leq \frac{k_n}{n^2} \sum_{i\in\N_n} \sum_{j \not\in \Lambda_i} \abs{\E[B_iB_j]} \\ \leq k_n \frac{c_n'}{n^2} \sum_{\ell=1}^{k_n} \sum_{i\in C_\ell} \sum_{s=\lfloor 2r_n \rfloor}^\infty \sum_{j\not\in\Lambda_i} \bm{1}\{\rho(i,j)\in [s,s+1)\} \\
  \times \big( r_n^{-\gamma} \bm{1}\{s \leq 4\bar{R}\} + (s/2-\bar{R})^{-\gamma} \bm{1}\{s > 4\bar{R}\} \big).
\end{multline*}

\noindent The right-hand side equals \eqref{covzz} multiplied by $k_n$. Since $\eqref{covzz} \precsim \xi_n/n$ as shown in the proof of \autoref{tclt}, this is $\precsim k_n\xi_n/n$, as desired. 

\bigskip
\noindent {\bf Step 3.} Letting $\tilde\sigma_n^2 = \var(\sqrt{k_n}n^{-1} \sum_{i\in\N_n} \E[Z_i \mid \F_i(r_n)])$, 
\begin{equation}
  \abs{\sigma_n - \tilde\sigma_n} \leq \var\left( \sqrt{k_n} \frac{1}{n} \sum_{i\in\N_n} (Z_i - \E[Z_i \mid \F_i(r_n)]) \right)^{1/2} \prec 1 \label{mink}
\end{equation}

\noindent by Minkowski's inequality and step 2. Since $\sigma_n^2 \succsim 1$ by assumption, it suffices to show
\begin{equation}
  \tilde\sigma_n^{-1} \sqrt{k_n} \frac{1}{n} \sum_{i\in\N_n} \big( \E[Z_i \mid \F_i(r_n)] - \E[Z_i] \big) \dlimarrow \N(0,1) \label{dpclt}
\end{equation}

\noindent to establish \eqref{cltsec}. We apply Theorem 3.6 of \cite{ross2011fundamentals}, defining his $X_i$ as $n^{-1}k_n^{1/2} (\E[Z_i \mid \F_i(r_n)] - \E[Z_i])$ and his dependency graph $\bm{A}$ by connecting units $i,j$ in $\bm{A}$ if and only if $j \in \Lambda_i$. This is a dependency graph because $j \not\in\Lambda_i$ implies that the treatment assignments determining $\F_i(r_n)$ are independent of those determining $\F_j(r_n)$ under \autoref{aCRT}. The maximum degree of $\bm{A}$ is at most $\max_i \abs{\Lambda_i} \precsim n/k_n$ by \autoref{lnbhd}. Therefore, by \autoref{aboundY}, the right-hand side of (3.8) in \cite{ross2011fundamentals} is asymptotically bounded above by
\begin{equation*}
  \left( \frac{n}{k_n} \right)^2 n \left( \frac{\sqrt{k_n}}{n} \right)^3 + \left( \frac{n}{k_n} \right)^{3/2} \sqrt{n\left( \frac{\sqrt{k_n}}{n} \right)^4} \prec 1,
\end{equation*}

\noindent so \eqref{dpclt} follows from his (3.8). This completes the proof of \eqref{cltsec}

\bigskip
\noindent {\bf Step 4.} Note that $\bar{\theta} = \E[\tilde\theta]$, where $\tilde\theta$ is the Horvitz-Thompson analog of $\hat\theta$ defined in \eqref{HT}. The bias of $\tilde\theta$ is $\abs{\bar{\theta} - \theta^*} \precsim (k_n\xi_n/n)^{\gamma/d}$ by \eqref{fbi}. Given that $k_n \prec (n/\xi_n)^{\frac{2\gamma}{2\gamma+d}}$, 
\begin{equation}
  \sqrt{k_n}(\bar{\theta} - \theta^*) \precsim \sqrt{k_n}(k_n\xi_n/n)^{\gamma/d} \prec \sqrt{k_n}(n/\xi_n)^{-\frac{\gamma}{2\gamma+d}} \prec 1, \label{tggd}
\end{equation}

\noindent in which case
\begin{equation*}
  \sigma_n^{-1} \sqrt{k_n} (\hat\theta - \bar{\theta}) = \sigma_n^{-1} \sqrt{k_n} (\hat\theta - \theta^*) + o(1),
\end{equation*}

\noindent so \eqref{cltmain} follows from \eqref{cltsec}. \qed

\subsection{\autoref{tvar}}\label{stvar}

In what follows, steps 1--4 concern case $\hat\sigma^2(1)$, and step 5 concerns $\hat\sigma^2(2)$. Define $Z_i$ as in \eqref{Z_iQ} and $\hat{Z}_i$ as in \eqref{hatsigma}. Let
\begin{equation*}
  \mathcal{B}_n = \frac{k_n}{n^2} \sum_{i\in\N_n} \sum_{j\in\N_n} \E[Z_i] \E[Z_j] A_{ij}(2).
\end{equation*}

\noindent This equals \eqref{Bn}, which is non-negative for any $n$. 

By \eqref{mink}, $\abs{\sigma_n - \tilde\sigma_n} \prec 1$, where 
\begin{equation}
  \tilde\sigma_n^2 = \frac{k_n}{n^2} \sum_{i\in\N_n} \sum_{j\in\N_n} \cov(\E[Z_i \mid \F_i(r_n)], \E[Z_j \mid \F_j(r_n)]) A_{ij}(1) \label{tildesigma}
\end{equation}

\noindent by \autoref{aCRT}. Thus, to show that $\hat\sigma^2(1) = \sigma_n^2 + \mathcal{B}_n + o_p(1)$, it suffices to prove
\begin{equation}
  \hat\sigma^2(1) = \tilde\sigma_n^2 + \mathcal{B}_n + o_p(1). \label{avar1} 
\end{equation}

\noindent {\bf Step 1.} By definition, $\hat\sigma^2(1) = n^{-2}k_n \sum_{i\in\N_n} \sum_{j\in\N_n} \hat{Z}_i \hat{Z}_j A_{ij}(1)$. We prove that
\begin{equation*}
  \hat\sigma^2(1) = \underbrace{\frac{k_n}{n^2} \sum_{i\in\N_n} \sum_{j\in\N_n} Z_i Z_j A_{ij}(1)}_{\check\sigma^2(1)} + o_p(1).
\end{equation*}

\noindent In the formula for $\hat\sigma^2(1)$, replace $\hat{Z}_i$ with $\hat{Z}_i \pm Z_i$ to obtain
\begin{multline}
  \abs{\hat\sigma^2(1) - \check\sigma^2(1)} \leq \frac{2k_n}{n^2} \sum_{i\in\N_n} \abs{Z_i} \sum_{j\in\N_n} A_{ij}(1) \max_k \abs{\hat{Z}_k - Z_k} \\ + \frac{k_n}{n^2} \sum_{i\in\N_n} \sum_{j\in\N_n} A_{ij}(1) \max_k (\hat{Z}_k - Z_k)^2. \label{2904yuws}
\end{multline}

\noindent By \autoref{lnbhd}, $\max_i \sum_j A_{ij}(1) \precsim n/k_n$. By \autoref{aboundY} and \autoref{lintersect}, $\max_i \abs{Z_i} \precsim 1$, and 
\begin{equation*}
  \max_i (\hat{Z}_i - Z_i)^2 = \max_i \left( \frac{T_{1i}(\mu_1 - \hat{\mu}_1)}{p_{1i}} - \frac{T_{0i}(\mu_0 - \hat{\mu}_0)}{p_{1i}} \right)^2 \precsim \max_t (\mu_t - \hat{\mu}_t)^2,
\end{equation*}

\noindent which is $\prec 1$ by the proof of \autoref{tQ}. Hence, $\eqref{2904yuws} \prec 1$.

\bigskip
\noindent {\bf Step 2.} We prove that
\begin{equation*}
  \check\sigma^2(1) = \frac{k_n}{n^2} \sum_{i\in\N_n} \sum_{j\in\N_n} (Z_i-\E[Z_i]) (Z_j-\E[Z_j]) A_{ij}(1) + \mathcal{B}_n(1) + o_p(1),
\end{equation*}

\noindent where
\begin{equation*}
  \mathcal{B}_n(1) = \frac{k_n}{n^2} \sum_{i\in\N_n} \sum_{j\in\N_n} \E[Z_i] \E[Z_j] A_{ij}(1).
\end{equation*}

\noindent In the formula of $\check\sigma^2(1)$, replace $Z_i$ with $Z_i \pm \E[Z_i]$ to obtain
\begin{multline*}
  \check\sigma^2(1) = \frac{k_n}{n^2} \sum_{i\in\N_n} \sum_{j\in\N_n} (Z_i-\E[Z_i]) (Z_j-\E[Z_j]) A_{ij}(1) \\ + \frac{2k_n}{n^2} \sum_{i\in\N_n} \sum_{j\in\N_n} (Z_i-\E[Z_i]) \E[Z_j] A_{ij}(1) + \frac{k_n}{n^2} \sum_{i\in\N_n} \sum_{j\in\N_n} \E[Z_i] \E[Z_j] A_{ij}(1).
\end{multline*}

\noindent We need to show that the second term on the right is $\prec 1$. For $V_i = \sum_{j\in\N_n} \E[Z_j] A_{ij}(1)$, 
\begin{multline}
  \E\bigg[ \bigg| \frac{k_n}{n^2} \sum_{i\in\N_n} \sum_{j\in\N_n} (Z_i-\E[Z_i]) \E[Z_j] A_{ij}(1) \bigg| \bigg] \leq \E\bigg[ \bigg( \frac{k_n}{n^2} \sum_{i\in\N_n} (Z_i-\E[Z_i]) V_i \bigg)^2 \bigg]^{1/2} \\
  = \bigg( \frac{k_n^2}{n^4} \sum_{i\in\N_n} \sum_{j\in\Lambda_i} \cov(Z_i,Z_j) V_i V_j + \frac{k_n^2}{n^4} \sum_{i\in\N_n} \sum_{j\not\in\Lambda_i} \cov(Z_i,Z_j) V_i V_j \bigg)^{1/2}. \label{vfneroi}
\end{multline}

\noindent By \autoref{aboundY} and \autoref{lintersect}, $\max_i\abs{\E[Z_i]} \precsim 1$. Since $\max_i \sum_{j\in\N_n} A_{ij}(1) \precsim n/k_n$ by \autoref{lnbhd}, $\max_i\abs{V_i} \precsim n/k_n$. Therefore,
\begin{equation*}
  \eqref{vfneroi} \precsim \bigg( \frac{1}{n^2} \sum_{i\in\N_n} \sum_{j\in\Lambda_i} \abs{\cov(Z_i,Z_j)} + \frac{1}{n^2} \sum_{i\in\N_n} \sum_{j\not\in\Lambda_i} \abs{\cov(Z_i,Z_j)} \bigg)^{1/2}.
\end{equation*}

\noindent The argument in \eqref{covzz}--\eqref{varP2.2} can be applied to show that this is $\prec 1$. The one distinction is that the above expression has $Z_i$ in place of $\tilde Z_i$. These only differ because $Y_i$ in the expression of $Z_i$ is centered by $\mu_t$, unlike the expression of $\tilde Z_i$, but the centering is immaterial since it cancels out when deriving the analogs of the covariance bounds \eqref{g902jghbwa} and \eqref{covzz0} using \eqref{e1290urt2g0}.

\bigskip
\noindent {\bf Step 3.} We prove that $\mathcal{B}_n(1) = \mathcal{B}_n + o_p(1)$. Noting that $C_{c(i)} \subseteq \Lambda_i$, decompose
\begin{equation*}
  \mathcal{B}_n(1) = \frac{k_n}{n^2} \sum_{i\in\N_n} \sum_{j\in C_{c(i)}} \E[Z_i] \E[Z_j] + \frac{k_n}{n^2} \sum_{i\in\N_n} \sum_{j\in \Lambda_i\backslash C_{c(i)}} \E[Z_i] \E[Z_j].
\end{equation*}

\noindent By \autoref{aclus}(a), $\Lambda_i \subseteq \N(i,2(r_n+U_n)) \subseteq \N(i,3U_n)$, and $\Lambda_i\backslash C_{c(i)} \subseteq \N(i,3U_n)\backslash \N(i,L_n)$. By \autoref{aclus}(b), there exists a non-negative sequence $\alpha_n \precsim 1$ such that $3U_n - L_n = \alpha_n (n/(k_n\xi_n))^{1/d}$. Then by \autoref{aspace}(b), following \eqref{precboundary},
\begin{equation*}
  \max_i \abs{\N(i,3U_n)\backslash \N(i,L_n)} \precsim \xi_n (n/(k_n\xi_n))^{(d-1)/d}.
\end{equation*}

By \autoref{aboundY} and \autoref{lintersect}, $\max_i \abs{\E[Z_i]} \precsim 1$, so 
\begin{equation*}
  \frac{k_n}{n^2} \sum_{i\in\N_n} \sum_{j\in\Lambda_i\backslash C_{c(i)}} \E[Z_i] \E[Z_j] \precsim \frac{k_n}{n} \max_i \abs{\N(i,3U_n)\backslash \N(i,L_n)} \precsim \frac{k_n\xi_n}{n} \left( \frac{n}{k_n\xi_n} \right)^{(d-1)/d},
\end{equation*}

\noindent which is $\prec 1$ since $k_n \prec n/\xi_n$ by assumption.

\bigskip
\noindent {\bf Step 4.} Abbreviate $\F_i \equiv \F_i(r_n)$. At the top of the proof, we noted that
\begin{equation*}
  \sigma_n^2 = \frac{k_n}{n^2} \sum_{i\in\N_n} \sum_{j\in\N_n} \cov(\E[Z_i \mid \F_i], \E[Z_j \mid \F_j]) A_{ij}(1) + o_p(1).
\end{equation*}

\noindent By steps 1--3, 
\begin{equation*}
  \hat\sigma^2(1) = \frac{k_n}{n^2} \sum_{i\in\N_n} \sum_{j\in\N_n} (Z_i-\E[Z_i]) (Z_j-\E[Z_j]) A_{ij}(1) + \mathcal{B}_n + o_p(1)
\end{equation*}

\noindent It therefore remains to show that the following is $\prec 1$:
\begin{align*}
  &\begin{aligned} \frac{k_n}{n^2} \sum_{i\in\N_n} \sum_{j\in\N_n} (Z_i-\E[Z_i]) &(Z_j-\E[Z_j]) A_{ij}(1) \\ 
  &- \frac{k_n}{n^2} \sum_{i\in\N_n} \sum_{j\in\N_n} \cov(\E[Z_i \mid \F_i], \E[Z_j \mid \F_j]) A_{ij}(1) \end{aligned} \\ 
  &\begin{aligned}=\frac{k_n}{n^2} \sum_{i\in\N_n} \sum_{j\in\N_n} (Z_iZ_j - \E[\E[Z_i \mid \F_i] &\E[Z_j \mid\, \F_j]]) A_{ij}(1) \\ 
  &- 2\frac{k_n}{n^2} \sum_{i\in\N_n} (Z_i-\E[Z_i]) \sum_{j\in\N_n} \E[Z_j] A_{ij}(1). \end{aligned} \\
  &\equiv [P1] + [P2].
\end{align*}

\noindent As previously argued, $\max_i \sum_{j\in\N_n} \E[Z_j] A_{ij}(1) \precsim n/k_n$, and $\abs{n^{-1} \sum_{i\in\N_n} (Z_i-\E[Z_i])} \prec 1$ by the proof of \autoref{tQ}, so $[P2] \prec 1$, while
\begin{multline*}
  [P1] = \frac{k_n}{n^2} \sum_{i\in\N_n} \sum_{j\in\N_n} (Z_i-\E[Z_i \mid \F_i]) (Z_j-\E[Z_j \mid \F_j]) A_{ij}(1) \\
  + 2\frac{k_n}{n^2} \sum_{i\in\N_n} \sum_{j\in\N_n} \E[Z_i \mid \F_i] (Z_j-\E[Z_j \mid \F_j]) A_{ij}(1) \\
  + \frac{k_n}{n^2} \sum_{i\in\N_n} \sum_{j\in\N_n}  (\E[Z_i \mid \F_i] \E[Z_j \mid \F_j] - \E[\E[Z_i \mid \F_i] \E[Z_j \mid \F_j]]) A_{ij}(1) \\
  \equiv [P1.1] + [P1.2] + [P1.3].
\end{multline*}

\noindent Using \eqref{r_nbound}, \eqref{bhn0ejr5}, \autoref{lnbhd}, and the assumption that $k_n \prec n/\xi_n$,
\begin{equation*}
  [P1.1] \precsim \frac{k_n}{n^2} \cdot n \cdot \frac{n}{k_n} \cdot r_n^{-\gamma} \prec 1 \quad\text{and}\quad [P1.2] \precsim \frac{k_n}{n^2} \cdot n \cdot \frac{n}{k_n} \cdot r_n^{-\gamma} \prec 1.
\end{equation*}

Observe that $[P1.3]$ has expectation zero and variance equal to
\begin{equation}
  \frac{k_n^2}{n^4} \sum_{i\in\N_n} \sum_{j\in\N_n} \sum_{k\in\N_n} \sum_{\ell\in\N_n} \cov(\E[Z_i \mid \F_i] \E[Z_j \mid \F_j], \E[Z_k \mid \F_k] \E[Z_\ell \mid \F_\ell]) A_{ij}(1) A_{k\ell}(1). \label{fjio32}
\end{equation}

\noindent The covariance term is zero if $k,\ell \not\in \Lambda_i \cup \Lambda_j$, so by \autoref{aboundY} and \autoref{lintersect}, 
\begin{equation}
  \eqref{fjio32} \precsim \frac{k_n^2}{n^4} \sum_{i\in\N_n} \sum_{j\in\Lambda_i} \abs{\Xi_{ij}} \label{bj3h34}
\end{equation}

\noindent for $\Xi_{ij} = \{(k,\ell)\colon \ell\in\Lambda_k \text{ and } \{k,\ell\} \cap (\Lambda_i \cup \Lambda_j) \neq \emptyset\}$. Since $\max_i \abs{\Lambda_i} \precsim n/k_n$ and $\ell\in\Lambda_k$ implies $k \in \Lambda_\ell$, we have $\max_{i,j\in\mathcal{N}_n} \abs{\Xi_{ij}} \precsim (n/k_n)^2$. Therefore,
\begin{equation*}
  \eqref{bj3h34} \precsim \frac{k_n^2}{n^4} \cdot n \cdot \frac{n}{k_n} \cdot \frac{n^2}{k_n^2} \precsim k_n^{-1} \prec 1.
\end{equation*}

\noindent {\bf Step 5.} Recall the definition of $\tilde\sigma_n^2$ from \eqref{tildesigma}. We next prove that
\begin{equation*}
  \tilde\sigma_n^2 = \frac{k_n}{n^2} \sum_{i\in\N_n} \sum_{j\in\N_n} \cov(\E[Z_i \mid \F_i], \E[Z_j \mid \F_j]) A_{ij}(2) + o_p(1).
\end{equation*}

\noindent Then the claimed result for $\hat\sigma^2(2)$ follows from steps 1, 2, and 4 by replacing, for all $i,j$, every occurrence of ``$A_{ij}(1)$'' and ``$\Lambda_i$'' with ``$A_{ij}(2)$'' and ``$C_{c(i)}$'', respectively. Write
\begin{multline*}
  \tilde\sigma_n^2 = \frac{k_n}{n^2} \sum_{i\in\N_n} \sum_{j\in C_{c(i)}} \cov(\E[Z_i \mid \F_i], \E[Z_j \mid \F_j]) \\ + \frac{k_n}{n^2} \sum_{i\in\N_n} \sum_{j\in \Lambda_i\backslash C_{c(i)}} \cov(\E[Z_i \mid \F_i], \E[Z_j \mid \F_j]).
\end{multline*}

\noindent By \autoref{aboundY} and \autoref{lintersect}, $\max_{i,j} \abs{\cov(\E[Z_i \mid \F_i], \E[Z_j \mid \F_j])} \precsim 1$, so using the argument in step 3,
\begin{multline*}
  \bigg| \frac{k_n}{n^2} \sum_{i\in\N_n} \sum_{j\in\Lambda_i\backslash C_{c(i)}} \cov(\E[Z_i \mid \F_i], \E[Z_j \mid \F_j]) \bigg| \\ \leq \frac{k_n}{n} \max_i \abs{\N(i,3U_n)\backslash\N(i,L_n)} \precsim \frac{k_n\xi_n}{n} \left( \frac{n}{k_n\xi_n} \right)^{(d-1)/d} \prec 1.
\end{multline*} \qed

\subsection{\autoref{lsep}}\label{slep}

We draw on the clever argument used in the proof of Theorem 3 of \cite{cao2024inference}. To obtain a contradiction, suppose there is a positive sequence $\ell_n \prec 1$ such that, for infinitely many $n$, there are two clusters $C_1$ and $C_2$ with medoids $i_1,i_2$ such that $\rho(i_1,i_2) < \ell_n (n/(k_n\xi_n))^{1/d}$, and no other pair of medoids is closer in distance. There are two cases to consider along the subsequence of such $n$'s, and all asymptotic statements that follow are with respect to this subsequence.

{\bf Case 1.} $\min\{\abs{C_1}, \abs{C_2}\} \precsim n/k_n$. Then the argument largely proceeds as in the proof of Theorem 3 of \cite{cao2024inference} but with additional adjustments to account for the divergence of $k_n,\xi_n$. Since clusters partition $\N_n$, for every $n$ there must exist some cluster $C_3$ of size at least $n/k_n$. Let $i_3$ denote its medoid and $R_3$ be the largest distance between $i_3$ and an element in $C_3$. By \autoref{aspace}(a), 
\begin{equation}
  \frac{n}{k_n} \leq \abs{C_3} \leq \abs{\N(i_3,R_3)} \leq C \xi_n R_3^d \quad\Longrightarrow\quad R_3 \succsim (n/(k_n\xi_n))^{1/d}. \label{bar_r3}
\end{equation}

Let $i_3' \in C_3$ be such that $\rho(i_3',j) \geq 0.5 R_3$ for any medoid $j$. For instance, the unit $i_3^* \in C_3$ for which $\rho(i_3^*,i_3)=R_3$ is one such candidate. For any other medoid $j\neq i_3$, $\rho(i_3',j) \geq 0.5R_3$ because otherwise step 1 of \autoref{amedoids} would have assigned $i_3'$ to a closer medoid.

Consider a hypothetical update in step 2 of \autoref{amedoids} that replaces $i_2$ with $i_3'$. Then all units in $\N(i_3',R_3/8)$ are optimally reassigned to the cluster with medoid $i_3'$ because they are all by construction at least distance $3/8\cdot R_3$ away from any other medoid. This reassignment reduces the total cost by at least $\abs{\N(i_3',R_3/8)} R_3/4 \succsim n/k_n \cdot (n/(k_n\xi_n))^{1/d}$ by \autoref{aspace}(a) and \eqref{bar_r3}. On the other hand, in the worst case, all other units in $C_2$ are reassigned to medoid $i_1$, which increases the total cost by at most $\abs{C_2} \ell_n (n/(k_n\xi_n))^{1/d} \prec n/k_n \cdot (n/(k_n\xi_n))^{1/d}$. Hence, the update is overall cost-reducing for $n$ sufficiently large, which contradicts for such $n$ the supposition that the clusters are the output of \autoref{amedoids}, in particular violating step 2.

{\bf Case 2.} $\min\{\abs{C_1}, \abs{C_2}\} \succ n/k_n$. Without loss of generality, suppose $\abs{C_2} \succ n/k_n$. Then the argument proceeds similarly to case 1 but using $C_2$ in place of $C_3$. In particular, let $R_2$ be the largest distance between $i_2$ and an element of $C_2$. By an argument similar to \eqref{bar_r3}, $R_2 \succ (n/(k_n\xi_n))^{1/d}$. 

Let $i_2' \in C_2$ be such that $\rho(i_2',j) \geq 0.5 R_2$ for any medoid $j$. Consider a hypothetical update in step 2 of \autoref{amedoids} that replaces $i_2$ with $i_2'$. Then all units in $\N(i_2',R_2/8)$ are optimally reassigned to the cluster with medoid $i_2'$. This reassignment reduces the total cost by at least $\abs{\N(i_2',R_2/8)} R_2/4 \succ \abs{\N(i_2',R_2/8)} (n/(k_n\xi_n))^{1/d}$. On the other hand, in the worst case, all other units in $C_2$ are reassigned to medoid $i_1$, which increases the total cost by at most $\abs{C_2} \ell_n (n/(k_n\xi_n))^{1/d}$. Since $\abs{C_2} \leq \abs{\N(i_2,R_2)} \sim \abs{\N(i_2',R_2/8)}$ by \autoref{aspace}(a), the update is overall cost-reducing for $n$ sufficiently large, which contradicts the supposition that the clusters are the output of \autoref{amedoids}.  \qed

\subsection{\autoref{lradius}}\label{slradius}

{\bf Step 1.} We first prove that, for any sequence $\{C_n\}_{n\in\mathbb{N}}$ with $C_n \in \C_n$ for each $n$, we have $\abs{C_n} \sim n/k_n$. Let $\ell_n$ be given as in \autoref{lsep} and $i_n$ the medoid of cluster $C_n$. By \autoref{aspace}(a), $\abs{\N(i_n,0.5 \ell_n (n/(k_n\xi_n))^{1/d})} \succsim n/k_n$. All units in this neighborhood must be elements of $C_n$ by \autoref{lsep} and step 1 of \autoref{amedoids}, so $\abs{C_n} \succsim n/k_n$. Since this must be true for all sequences of clusters, it cannot be the case that $\abs{C_n} \succ n/k_n$, given that the total number of units is $n$. Hence $\abs{C_n} \sim n/k_n$. 

\bigskip
\noindent {\bf Step 2.} We prove the direction $R_n^* \precsim (n/(k_n\xi_n))^{1/d}$. Suppose to the contrary that $R_n^* \succ (n/(k_n\xi_n))^{1/d}$. Let $i_n' \in C_n$ be such that $\rho(i_n',j_n) \geq 0.5 R_n^*$ for any medoid $j_n$. For instance, the unit $i_n^* \in C_n$ for which $\rho(i_n^*,i_n)=R_n^*$ is one such candidate. For any other medoid $j_n\neq i_n$, $\rho(i_n',j_n) \geq 0.5R_n^*$ because otherwise it would be cost-reducing for \autoref{amedoids} to assign $i_n'$ to a closer medoid.

Consider a hypothetical update in step 2 \autoref{amedoids} that replaces $i_n$ with $i_n'$. Then all units in $\N(i_n',R_n^*/8)$ are optimally reassigned to the cluster with medoid $i_n'$. This reassignment reduces the total cost by at least $\abs{\N(i_n',R_n^*/8)} R_n^*/4 \succ n/k_n (n/(k_n\xi_n))^{1/d}$ by \autoref{aspace}(a). On the other hand, in the worst case, all other units in $C_n$ are reassigned to the nearest existing medoid. 

Observe that the distance between $i_n$ and the nearest other medoid must be $\precsim (n/(k_n\xi_n))^{1/d}$. If instead that distance were $\delta_n \succ (n/(k_n\xi_n))^{1/d}$, then $C_n$ would contain $\N(i_n, 0.5\delta_n)$ by step 1 of \autoref{amedoids}, in which case it would have size $\succ n/k_n$ by \autoref{aspace}(a), which contradicts step 1.

Therefore, the worst-case increase in total cost from the medoid replacement is $\precsim \abs{C_n} (n/(k_n\xi_n))^{1/d} \precsim n/k_n \cdot (n/(k_n\xi_n))^{1/d}$ by step 1. The update is overall cost-reducing for $n$ sufficiently large, which contradicts the supposition that the clusters are the output of \autoref{amedoids}.

\bigskip
\noindent {\bf Step 3.} We prove the other direction. By \autoref{lsep} and step 1 of \autoref{amedoids}, $\N(i_n, \ell_n (n/(k_n\xi_n))^{1/d}/2) \subseteq C_n$. Hence, $R_n^* \geq \ell_n (n/(k_n\xi_n))^{1/d}/2$, so $R_n^* \succsim (n/(k_n\xi_n))^{1/d}$. \qed

\subsection{\autoref{lintersect}}\label{slintersect}

Recall from \autoref{aclus} the definition of cluster centroids, $U_n$, and $L_n$. For any $i\in\N_n$, all clusters intersecting $\N(i,r_n)$ must be subsets of $\N(i, r_n + 2U_n)$ by the assumption. Together with \eqref{kappa}, we have that there exists $\alpha_n \sim 1$ such that $\N(i, r_n + 2U_n) \subseteq \N(i,\alpha_nL_n)$ for any $n$ and $i\in\N_n$. Invoking \autoref{aclus} once more, it follows that $\phi_i$ is at most the number of $L_n$-balls that fill $\N(i, \alpha_nL_n)$ without intersecting. 

We seek to bound the $2L_n$-packing number \citep[][Definition 5.4]{wainwright2019high} of $\N(i,\alpha_nL_n)$. By Lemma 5.5 of \cite{wainwright2019high}, this is at most the $L_n$-covering number of the ball \citep[][Definition 5.1]{wainwright2019high}, and we denote this number by $N_n(i)$. We bound this following the proof of Lemma 5.7(b) in \cite{wainwright2019high}. 

Construct a maximal $L_n/2$-packing of $\N(i,\alpha_n L_n)$ with cardinality $M_i$ and centroids $\{\theta_m\}_{m=1}^{M_i}$. This is also an $L_n$-covering of $\N(i,\alpha_n L_n)$, so $N_n(i) \leq M_i$. The balls of the packing $\{\N(\theta_m, L_n/2)\}_{m=1}^{M_i}$ are disjoint and contained in $\N(i,\alpha_n L_n + L_n/2)$, so
\begin{equation*}
  \sum_{m=1}^{M_i} \abs{\N(\theta_m,L_n/2)} \leq \abs{\N(i,L_n(\alpha_n+0.5))}.
\end{equation*}

\noindent By \autoref{aspace}(a), there exists $C>0$ independent of $i$ and $n$ such that
\begin{align*}
  &\sum_{m=1}^{M_i} \abs{\N(\theta_m,L_n/2)} \geq \frac{M_i}{C} \xi_n \left( \frac{L_n}{2} \right)^d \quad\text{and} \\
  &\abs{\N(i,L_n(\alpha_n+0.5))} \leq C \xi_n \left( L_n(\alpha_n+0.5) \right)^d.
\end{align*}

\noindent Hence, $M_i \leq C^2 (1+2\alpha_n)^d \precsim 1$, so $\max_i N_n(i) \precsim 1$, which proves the first claim of the lemma.

The second claim follows from the expression in \autoref{roverlap}, the first claim, and \autoref{aoverlap}. \qed

\subsection{\autoref{lnbhd}}\label{slnbhd}

Let $\bar{R} = \max_j R_j$, the latter defined prior to \eqref{kappa}. Observe that $\Lambda_i \subseteq \N(i,r_n + 2\bar{R} + r_n)$. By \autoref{aclus} and \eqref{kappa}, $r_n \precsim \bar{R} \precsim (n/(k_n\xi_n))^{1/d}$, so by \autoref{aspace}(a), $\max_i \abs{\Lambda_i} \precsim n/k_n$. \qed

\subsection{\autoref{lkappa}}\label{slkappa}

Since $\hat\kappa_t$ has mean one, it remains to compute the variance. By \autoref{aCRT},
\begin{equation*}
  \var(\hat\kappa_t) = \frac{1}{n^2} \sum_{j=1}^{k_n} \var\left( \sum_{i\in C_j} \frac{T_{ti}}{p_{ti}} \right)
\end{equation*}

\noindent By \autoref{lintersect}, $\min_i p_{ti} \succsim 1$, and by Assumptions \ref{aspace}(a) and \ref{aclus}, $\max_j \abs{C_j} \precsim n/k_n$, so the right-hand side is at most of order
\begin{equation*}
  \frac{1}{n^2} k_n \max_j \abs{C_j}^2 \precsim \frac{1}{k_n}.
\end{equation*} \qed

\subsection{\autoref{lpt}}\label{slpt}

By \autoref{aoverlap}, $p_t^+ \in (0,1)$. Since
\begin{equation*}
  \frac{p_t^+}{\hat{p}_t^+} - 1 = \frac{p_t^+ - \hat{p}_t^+}{p_t^+ + \hat{p}_t^+ - p_t^+},
\end{equation*}

\noindent it is enough to show that $\hat{p}_t^+ - p_t^+ \precsim k_n^{-1/2}$. Clearly $\E[\hat{p}_t^+] = p_t^+$. By \autoref{aCRT}, 
\begin{equation*}
  \var\big( \hat{p}_t^+ \big) = \frac{1}{n^2} \sum_{j=1}^{k_n} \var\bigg( \sum_{i \in C_j} T_{ti}^+ \bigg) \leq \frac{1}{n^2} \sum_{j=1}^{k_n} 2\abs{C_j}^2.
\end{equation*}

\noindent By Assumptions \ref{aspace}(a) and \ref{aclus}, $\max_j \abs{C_j} \precsim n/k_n$, so the right-hand side of the above display is $\precsim n^{-2} k_n (n/k_n)^2 = k_n^{-1}$. \qed


\FloatBarrier
\phantomsection
\addcontentsline{toc}{section}{References}
\bibliography{CRT}{} 
\bibliographystyle{aer}


\end{document}